\numberwithin{equation}{section}	 
\newcommand{\Q}{ {\mathcal Q}  }
\newcommand{\calN}{\mathcal N}
\renewcommand{\d}{\mathrm{d}}							
\newcommand{\re}{\mathrm{Re}}
\newcommand{\T}{\mathbb T}
\newcommand{\To}{    {\mathbf T}  }
\newcommand{\N}{ \mathbf N} 		
\newcommand{\C}{\mathbb{C}}			 
\newcommand{\Z}{ \mathbf Z} 		
\newcommand{\R}{  {     \mathbf  R } } 			
\renewcommand{\S}{\mathcal S}
\newcommand{\V}{\mathbb V}
\newcommand{\F}{\mathscr{F}}
\renewcommand{\H}{\mathbb H }
\newcommand{\ho}{h_0}
\newcommand{\1}{\mathds{1}}
\newcommand{\<}{\left\langle}							
\renewcommand{\>}{\right\rangle}
\renewcommand{\le}{\leqslant}
\renewcommand{\leq}{\leqslant}
\renewcommand{\ge}{\geqslant}
\renewcommand{\geq}{\geqslant}     
\newcommand{\vp}{\varphi}
\newcommand{\ve}{\varepsilon}
\newcommand{\supp}{\textnormal{supp}}
\renewcommand{\t}[1]{\textnormal{#1}}
\renewcommand{\(}{\left(}
\renewcommand{\)}{\right)}
\renewcommand{\[}{   [  }
\newcommand{\W}{W}
\newcommand{\D }{\mathcal D}
\newcommand{\K}{\mathscr K}
\newcommand{\RR}{\mathbb R}
\newcommand{\eff}{\t{eff}}
\renewcommand\subsection{\@startsection{subsection}{2}%
	\z@{-0.5\linespacing\@plus-.5 \linespacing}{.3\linespacing}%
	{\bfseries \scshape }}
\renewcommand\subsubsection{\@startsection{subsubsection}{2}%
	\z@{-0.3\linespacing\@plus-.5 \linespacing}{.3\linespacing}%
	{\itshape   }}
\newtheorem{condition}{Condition}
\newtheorem{theorem}{Theorem} 
\newtheorem{proposition}{Proposition}[section]
\newtheorem{corollary}{Corollary}[section]
\newtheorem{lemma}{Lemma}[section]
\theoremstyle{remark}
\theoremstyle{definition}
\newtheorem{remark}{Remark}[section]
\newtheorem{remarks}{Remarks}[section]
\newcommand{\kf}{k_{\rm F}}
\begin{document}
	
	\title[Emergence of fermion-mediated interactions in Bose-Fermi mixtures]{Emergence of fermion-mediated interactions in Bose-Fermi mixtures}

	\author{Esteban C\'ardenas}
	\address[Esteban C\'ardenas]{Department of Mathematics,
		University of Texas at Austin,
		2515 Speedway,
		Austin TX, 78712, USA}
	\email{eacardenas@utexas.edu}
	
	\author{Joseph K. Miller}
	\address[Joseph K. Miller]{Department of Mathematics, Stanford University, 450 Jane Stanford Way, Stanford, 94305, USA}
	\email{: jkm314@stanford.edu}
	
	\author{David Mitrouskas}
	\address[David Mitrouskas]
	{Institute of Science and Technology Austria (ISTA), Am Campus 1, 3400 Klosterneuburg, Austria}
	\email{mitrouskas@ist.ac.at}
	
	\author{Nata\v sa Pavlovi\' c}
	\address[Nata\v sa Pavlovi\'c]{Department of Mathematics,
		University of Texas at Austin,
		2515 Speedway,
		Austin TX, 78712, USA}
	\email{natasa@math.utexas.edu} 
	
	\frenchspacing
	
	\begin{abstract}
	This work is inspired by recent experimental observations in ultracold atomic Bose-Fermi mixtures [DeSalvo et al., Nature 568 (2019)]. These experiments reveal the emergence of an attractive fermion-mediated interaction between bosons, as well as a stability-instability transition. We give the first mathematical demonstration of this transition by studying the low-energy spectrum of a many-body interspecies Hamiltonian. More precisely, we show the convergence of its eigenvalues towards those of an effective Bose Hamiltonian, which includes fermion-mediated effects. Applying this result to a model with short-range potentials, we derive a stability-instability transition in the bosonic subsystem, driven by the Bose–Fermi coupling strength $g$. For small $|g|$, the bosons form a stable Bose–Einstein condensate with the energy per particle uniformly bounded from below. For large $|g|$, the energy per particle is no longer uniformly bounded from below, signaling the collapse of the condensate.
	\end{abstract}
	
	\maketitle

	{\hypersetup{linkcolor=black}
		\tableofcontents}
	
	\section{Introduction}
	
	In both high-energy and condensed matter physics, particle exchange processes play a central role in understanding the emergence of effective forces. 
	Notable examples include the Yukawa potential  arising from the exchange of massive bosons, and phonon-mediated Cooper pairing in superconductors. Ultracold atomic systems, particularly Bose-Fermi mixtures, offer a versatile and controllable platform for investigating analogous phenomena in novel regimes. With recent advancements in atomic physics---such as sympathetic laser cooling and magneto-optical trapping---it has become possible to simultaneously cool down and trap bosonic and fermionic gases \cite{Schreck01a,Schreck01b}, paving the way for a better understanding of the interplay between mediated interactions and collective behaviour in interesting physical systems.
	
	The experimental realization of Bose-Fermi mixtures has revealed a remarkably rich phase diagram, involving polaron formation \cite{Massignan14,Scazza22} mediated interactions \cite{Kinnunen15,DeSalvo19,Huang20,
		Zheng21,Shen24}, novel pairing mechanisms \cite{Wu16,Kinnunen18} and states of dual superfluidity \cite{Yao16,Roy17}. Our interest in this topic was motivated by recent experiments by DeSalvo et al. \cite{DeSalvo19} on mixtures of light fermionic $\textnormal{Li}^6$ atoms and heavy bosonic $\textnormal{Cs}^{133}$ atoms. In this mixture, the Cs atoms form a Bose-Einstein condensate (BEC) which is completely immersed in a much larger degenerate Fermi gas of Li atoms. A key observation---anticipated also by theoretical predictions \cite{Tsurumi2000, Chui2004, Santamore08}---is the emergence of an attractive fermion-mediated interaction between the bosons. As confirmed by the experimental data in \cite{DeSalvo19}, this effective interaction creates a dichotomy in the behavior of the bosonic subsystem. For weak interspecies coupling, the bosons form a stable BEC with a shifted energy, coexisting with the degenerate Fermi gas. However, as the coupling strength increases, the mediated attraction surpasses the intrinsic boson-boson repulsion, eventually destabilizing the BEC and leading to a collapse of the condensate cloud.
	
	The aim of this paper is to give a rigorous derivation of  fermion-mediated interactions--starting from an interspecies many-body Hamiltonian--and to prove the associated stability-instability transition.

	\subsection{The model}
	We consider
	$N$ bosons and $M$ fermions
	moving in   the three-dimensional torus   
	$    \To ^3     =     (\R / 2\pi \Z )^3 $.
	The Hilbert space of the system is the tensor product 
	\begin{equation}
		\mathscr H = \mathscr H_{ \rm B } \otimes \mathscr H_{\rm F} 
		\qquad
		\t{where}
		\qquad 
		\mathscr H _{ \rm B}  =  \bigotimes_{{\rm sym}}^N  L^2 (     \To^3 , \d x  ) 
		\quad 
		\t{and}
		\quad
		\mathscr H _{\rm  F} = \bigwedge^M L^2 (     \To^3 , \d y   ) \ , 
	\end{equation}
	which account for bosonic and fermionic statistics, respectively. 
	For simplicity, we assume both species are spinless, and of equal mass. 
	The  microscopic 	  Hamiltonian 
	in appropriate  units then takes the form 
	\begin{align}
		\label{eq:H}
		H
		\,  = 	\, 
		h  \otimes \1  
		\,  		 + 	\, 
		\1 \otimes  		  \sum_{	1 \leq j \leq  M 	} 
		(-\Delta_{y_j})   
		\, + \, 
		\lambda \sum_{ 1 \leq i \leq N } \sum_{1 \leq j \leq M}
		V(x_i-y_j) 
	\end{align}
	with Bose Hamiltonian
	\begin{align} \label{eq:HB}
		h  \ =\ 	\sum_{ 1 \leq i \leq N }
		(-\Delta_{x_i}) \ +  \  \frac{1}{N} \sum_{1 \le i<j \le N}  
		\W 
		(x_i-x_j)   \ . 
	\end{align}
	Here, $\W$ and $V$ are sufficiently regular, even functions describing Bose-Bose and Bose-Fermi interactions, respectively. 
	To capture different physical situations---particularly short-range interactions in dilute atomic gases---we allow the potentials to depend on the number of bosons $N$.
	We also adopt the common and well-justified approximation of neglecting short-range interactions between fermions \cite{GriesemerH21}.
	The coupling $\lambda$ is specified in \eqref{scaling} below. 
	
	We are interested in describing the low-energy spectrum of the operator $H$ in a regime  where the fermionic energy is much larger than the energy of the bosons. 
	In order to make this limit precise,  
	we introduce  on the momentum lattice $ (     \To^3)^* = \Z^3$ the
	\textit{Fermi ball }
	\begin{equation}
		B_{\rm F } = \{ 	   k\in \Z^3 : | k | \leq k_{\rm F } 		\}
	\end{equation}
	where $k_{\rm F }  \geq 1  $ is the \textit{Fermi momentum}.
	We follow the convention that  $\kf  $ is  a given  number, and we 
	set the fermionic particle number as
	\begin{equation}
		M = |B_{\rm F }|  \ . 
	\end{equation}
	In particular, 
	thanks to simple geometric asymptotics, 
	we know that $M \sim \frac{4\pi}{3 }   \kf^3$
	for large $ \kf $. 
	We further introduce the Fermi energy
	\begin{equation}
		E_{\rm F}  =   \sum_{   k \in B_{\rm F }	} k^2    \ 
	\end{equation}
	that satisfies $E_{\rm F } \sim \frac{4\pi}{5 }  \kf^5$
	for large $ \kf . $
	The Fermi energy is the ground state energy of the 
	non-interacting fermionic subsystem, 
	and  the ground state
	is given by the Slater determinant 
	of plane waves 
	\begin{equation}
		\Omega_{\rm F }  = \textstyle  \bigwedge_{ k \in B_{\rm F }}
		e_k 
		\qquad 
		\t{where}
		\qquad e_k(x) = (2\pi)^{	- \frac{3}{2}}e^{ik\cdot x} \ , \qquad k \in \Z^3  \ .
		\label{plane-wave}
	\end{equation}
	We will be interested in the regime for which
	the fermionic energy $E_{\rm F }$ 
	is much larger than the bosonic energy.
	
	The coupling  $\lambda $  of the Bose-Fermi potential in \eqref{eq:H} is fixed throughout the article as
	\begin{align}\label{scaling}
		\lambda\ = \ \frac{1}{\sqrt{ 4\pi N \kf  } }\ .
	\end{align}
	This choice leads to an effective  Bose-Bose interaction       which competes   with  $W$. 
	Let us   explain the heuristics (as described, for instance, in \cite{DeSalvo19}).  
	For low-energy states,  we expect most  fermions to occupy   the Fermi ball $B_{\rm F }$. 
	Moreover, as $ \kf  \to \infty$, there is a separation of energy scales between fermions and bosons, which causes an effective (adiabatic) decoupling of the two subsystems. However, bosons can still interact with fermions through coherent three-particle processes: one boson excites a fermion with momentum change $k$, while another boson interacts with the same fermion with momentum change $-k$. These processes do not establish fermion-boson correlations (see e.g. Corollary \ref{coro} below), but they induce an effective interaction between boson pairs (see Figure \ref{fig1} for a visualization). The amplitude of this process scales with the number of fermions a boson can interact with. Due to the Pauli exclusion principle,  only fermions near the   surface  of the Fermi ball can participate.  Thus,  the number of contributing fermions scales like $  \kf^2$.  Energy conservation (which appears in the form of a free fermionic resolvent) reduces this number further to $\kf$. Finally, we note that as a second order process,  the amplitude is proportional to $\lambda^2$. Consequently, the strength of the effective potential between two bosons scales like $\lambda^2 \kf$. To ensure this interaction is comparable to $\frac{1}{N} W$ in \eqref{eq:HB}, we set $\lambda$ as in \eqref{scaling}. The factor $1/\sqrt{4\pi}$ is    introduced only  for   convenience.  
	
	Despite  significant interest in the physics community, the mathematical analysis of Bose-Fermi mixtures remains largely unexplored. While purely bosonic (e.g. \cite{Yin2010,Seiringer2011,NRS2016,LSY2000,
		Lieb:book,LNR14,Fournais2023,Fournais2020,
		Brooks,BLPR2023,BBCS2019,BBCS2020,
		BBCS2020b}) and purely fermionic (e.g. \cite{ HPR,Christiansen2024,Christiansen2023-1,
		Christiansen2023-2,Benedikter3,Benedikter2,
		Benedikter1}) many-body systems have been extensively studied, the literature on combined mixtures is scarce. Before presenting our results, we briefly mention two works on the dynamics of Bose-Fermi mixtures: In \cite{Cardenas2023}, 
	the first, second and last author of this paper 
	focus on a semi-classical regime where bosons and fermions remain effectively coupled through mean-field potentials. Starting from an interspecies many-body Hamiltonian with an appropriate mean-field scaling, 
	a system of coupled  Hartree-Vlasov type equations is derived, 
	describing the evolution of boson and fermion densities.
	In \cite{Mitrouskas2021}, 
	P. Pickl
	and the third author of this paper 
	consider 
	a scaling similar to \eqref{scaling}, 
	but the focus lies on the dynamics of a fixed number of bosons (called impurities) interacting with a dense Fermi gas. 
	While in \cite{Mitrouskas2021} the impurities experience an effective interaction akin to the one felt by the bosons in our model, the effective potential is not computed explicitly, nor is a connection established to large Bose-Fermi mixtures.
	Since the present work focuses on spectral properties rather than dynamics, both the problem and our approach differ substantially from those in \cite{Mitrouskas2021}.

	To our  best knowledge, this is the first work to investigate the low-energy spectrum of 
	large Bose-Fermi mixtures. 
	Our purpose is two-fold. 
	First, in Theorem \ref{thm1} 
	we rigorously justify the previously described picture by proving the emergence
	of an     \textit{effective} bosonic Hamiltonian  that incorporates
	fermion-mediated interactions. 
	Second, as an application, 
	in Theorem \ref{cor} we
	consider short-range interactions and 
	prove the existence of a stability-instability transition
	as detected experimentally in \cite{DeSalvo19}.\\[-10mm]

	\begin{center}
		\begin{figure}[t!]
			\includegraphics[scale=0.38]{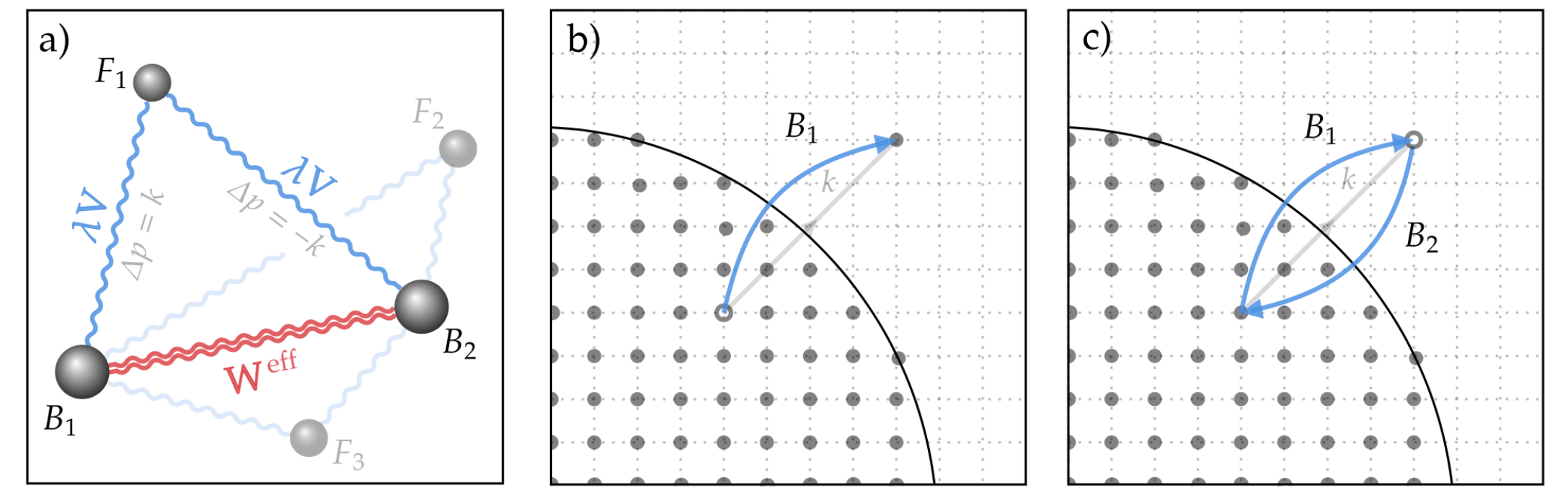}
			\caption{\hspace{-2.5mm}\label{fig1}\footnotesize{ Visualization of the effective interaction: (a) illustrates a coherent three-particle process (in blue) involving two bosons and one fermion, which generates the effective interaction (in red). (b) and (c) show the same process in momentum space, where the first boson excites a fermion from the Fermi ball and the second boson subsequently annihilates this excitation.}}
		\end{figure}
	\end{center}

	\subsection{Main results}\label{sec:main:results}
	In our first result, 
	we describe  the asymptotics of the low-energy eigenvalues of
	$H$ as $\kf  \rightarrow \infty$.
	In this limit, we prove
	the emergence of an effective    Hamiltonian 
	acting on the bosonic subsystem
	$\mathscr H _{ \rm B  }$ given by
	\begin{equation}
		\label{h:eff}
		h^{\eff}  
		=  
		\sum_{1 \leq i \leq N } (-\Delta_{x_i}) +  \frac{1}{N}
		\sum_{ 1 \leq i < j \leq N } \W^{\eff}(x_i - x_j) 
	\end{equation}
	where the effective two-body potential takes the form
	\begin{equation}
		\W^{\eff} = \W  - \,  V \ast V  \ .
	\end{equation}
	Here, $ (  f\ast g ) (x) = \int_{\To^3} f(y) g(x-y) dy$ denotes the convolution on $\To^3$. 
	We let 	the Fourier series be
	$\widehat V(k) = 
	(2\pi)^{-3/2} 
	\textstyle \int_{   \To^3 } V(x) e^{- ik \cdot x  } d x 
	$
	for any $ k \in \Z^3$ 
	and for  convenience    we put $ \hat V_0 \equiv \widehat V  (0)  $.
	For  $s \in \R$ we denote by   $H^s(\To^3)$    
	the Sobolev space on the torus. 
	
	\smallskip 
	
	In order to state our results, 
	we introduce the  precise condition for the potentials. 
	Throughout this work, we shall always assume:
	\begin{condition} 
		\label{Ass1}
		$W  \in L^p (\To^3)$   for some $ p \in ( \tfrac32  , \infty]$
		and $V \in H^4 (\To^3)$.
	\end{condition}
	
	Under these conditions, the   Hamiltonian $H$ 
	can be defined 
	as a self-adjoint  operator, understood as a sum of quadratic forms with form domain 
	$  	H^1(\To^{3N} )  \otimes 	H^1(   \To^{3M}	 ) \cap \mathscr H $.
	While this   is well-known, we include   a proof  of   relative form  boundedness 
	of the potentials in  Lemma \ref{lemmaA1} for completeness.  Notably,  the  Coulomb potential $ \widehat W(k)  \sim  |k|^{-2 }$ is included. \smallskip
	
	Finally, for    a lower bounded,  self-adjoint operator  $A$ 
	we denote by $  (  \mu_n( A)  )_{ n \geq1 }$ its
	sequence of eigenvalues counting multiplicities, in non-decreasing order. 
	Our first main result  now reads as follows.

	\begin{theorem} 
		\label{thm1} 
		Let $H$ be the Bose-Fermi  Hamiltonian  on $\mathscr H_B \otimes \mathscr H_F$
		given by 	\eqref{eq:H}, with scaling \eqref{scaling}, and let $h_\eff$ be the effective bosonic Hamiltonian on $\mathscr H_B$
		defined in \eqref{h:eff}. Assume $(W,V)$ satisfy Condition \ref{Ass1} for some $p\in (\tfrac32 , \infty ]$.
		Then, 
		there is a constant $C>0$  (depending only on the value of $p$) such that for any $\kf \geq 1 $, $ N \geq 1 $ and 
		$ 1 \leq n \leq  N $
		\begin{align*}
			\mu_n(H)  
			=
			E_{\rm F }  + 
			\lambda NM \hat V_0   
			-    \tfrac12 \, 
			\textstyle  \int_{\To^3 } |V|^2        dx 
			- 
			\tfrac12(N-2)  \, | \widehat V_0|^2
			+ 
			\mu_n (h_{}^{\eff })
			+ 
			\ve_{ N, \kf}
		\end{align*}
		where the error satisfies
		\begin{equation}
			| 	 	 \ve_{ N, \kf}	 | 
			\, \leq  \, 
			C  \, 	  \big(  
			1+ \| \W  \|_{L^p}^{  2p/ (2p-3) }     + 
			\|   V \|_{H^4}^2 
			\big)^2  
			N^2    ( \ln \kf)^{5/3}   \kf^{ -1/3 }
		\end{equation}
		provided  $  \kf  (\ln  \kf )^{-5 } \ge C 
		(n   N 	\| V\|_{H^4}^2 	)^3$.
	\end{theorem}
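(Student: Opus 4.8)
The plan is to construct a unitary dressing transformation that almost decouples the fast fermionic modes from the slow bosonic ones, and then to transfer the resulting operator estimates to eigenvalues via the min--max principle. In second quantization over $\mathscr H_{\rm F}$ the Bose--Fermi coupling reads $\lambda\sum_{i,j}V(x_i-y_j)=\lambda(2\pi)^{-3/2}\sum_k\widehat V(k)\,\rho_k\otimes\sum_p a_{p-k}^*a_p$, where $\rho_k=\sum_{i=1}^N e^{ik\cdot x_i}$ and $a_p,a_p^*$ are the fermionic operators in the plane-wave basis; the $k=0$ term is the constant $\lambda NM\widehat V_0$, while for $k\ne 0$ the operator $\sum_p a_{p-k}^*a_p$ creates and annihilates particle--hole pairs across the Fermi surface. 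I would introduce the anti-self-adjoint generator
\begin{equation*}
A \;=\; -\,\lambda(2\pi)^{-3/2}\sum_{0<|k|\le L}\widehat V(k)\,\rho_k\otimes\sum_{\substack{p\in B_{\rm F}\\ p-k\notin B_{\rm F}}}\frac{a_{p-k}^*a_p-a_p^*a_{p-k}}{|p-k|^2-|p|^2}\,,
\end{equation*}
with a momentum cutoff $L=L(\kf)$ growing like a power of $\ln\kf$, tuned so that $[A,\sum_j(-\Delta_{y_j})]$ cancels the regularized off-diagonal part of the coupling. A lower-order correction to $A$, which also puts the bosonic kinetic energy into the energy denominators, is needed so that the conjugation does not spuriously renormalize $\sum_i(-\Delta_{x_i})$.

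Conjugating, $e^{-A}He^{A}=H+[H,A]+\tfrac12[[H,A],A]+\cdots$, a finite Lie expansion whose remainder is controlled because $A$ is bounded after the cutoff. The first-order term removes the off-diagonal coupling and leaves $\lambda NM\widehat V_0$; the leading surviving contribution, $\tfrac12[(\text{off-diagonal coupling}),A]$, after normal-ordering its fermionic operators relative to $\Omega_{\rm F}$, produces the mediated two-body potential $-\tfrac1N\sum_{i<j}(V\ast V)(x_i-x_j)$ --- via the asymptotics of the static Lindhard-type lattice sum $\lambda^2\sum_{p\in B_{\rm F},\,p-k\notin B_{\rm F}}(|p-k|^2-|p|^2)^{-1}=c+O(\kf^{-1/3}\ln\kf)$ with $c$ independent of $k$ (here $\kf\to\infty$ at fixed $k$, i.e.\ vanishing relative momentum, is used, and the normalization $1/\sqrt{4\pi}$ in $\lambda$ is precisely what makes $\lambda^2$ times this sum converge to the value producing exactly $-V\ast V$) --- together with the $c$-number shifts $-\tfrac12\int_{\To^3}|V|^2\,dx$ (the fermionic self-energy from a full contraction) and $-\tfrac12(N-2)|\widehat V_0|^2$ (interplay of the forward-scattering mode with the dressing and the diagonal rearrangement). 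Collecting everything,
\begin{equation*}
e^{-A}He^{A} \;=\; E_{\rm F}+\lambda NM\widehat V_0-\tfrac12\int_{\To^3}|V|^2\,dx-\tfrac12(N-2)|\widehat V_0|^2 + h^{\eff}\otimes\1 + \Big(\textstyle\sum_j(-\Delta_{y_j})-E_{\rm F}\Big)+\mathcal E ,
\end{equation*}
where $\sum_j(-\Delta_{y_j})-E_{\rm F}\ge 0$ and the error operator $\mathcal E$ gathers the Lindhard remainder, the high-momentum tail $\sum_{|k|>L}$, and the commutators $[h,A]$, $[[\text{Int},A],A]$ and higher. Estimating $\mathcal E$ uses $V\in H^4$ (moments of $V$ to perform the $k$- and $p$-sums) and $\W\in L^p$ (relative form bounds as in Lemma~\ref{lemmaA1}): on bosonic states of bounded energy it is $O(\ve_{N,\kf})$ with $\ve_{N,\kf}$ of the stated size, up to a part relatively form-bounded by $\sum_j(-\Delta_{y_j})-E_{\rm F}$. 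The hypothesis $\kf(\ln\kf)^{-5}\ge C(nN\|V\|_{H^4}^2)^3$ is exactly what permits $L$ to be taken large enough to suppress the tail while keeping the bosonic energy scales below the fermionic gap.

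For the upper bound, the min--max principle with the $n$-dimensional trial space $e^{A}\big(\mathrm{span}\{\psi_1,\dots,\psi_n\}\otimes\Omega_{\rm F}\big)$ --- where $\psi_1,\dots,\psi_n$ are the first $n$ eigenfunctions of $h^{\eff}$ --- gives the claimed inequality at once: the fermionic-gap term annihilates $\Omega_{\rm F}$, and by elliptic regularity the $\psi_j$ have higher Sobolev norms bounded by powers of $\mu_n(h^{\eff})$, hence (by the hypothesis) controllably, so $\langle\mathcal E\rangle=O(\ve_{N,\kf})$. For the lower bound it suffices to show $e^{-A}He^{A}$ is bounded below by $(\text{the constants}) + \mu_n(h^{\eff}) - O(\ve_{N,\kf})$ on the orthogonal complement of $\mathrm{span}\{\psi_1,\dots,\psi_{n-1}\}\otimes\Omega_{\rm F}$; decomposing a unit vector there as $\Phi_0\otimes\Omega_{\rm F}+\Phi_\perp$ with $\Phi_\perp\perp(\mathscr H_{\rm B}\otimes\Omega_{\rm F})$, the operators $h^{\eff}\otimes\1$ and $\sum_j(-\Delta_{y_j})-E_{\rm F}$ contribute at least $\mu_n(h^{\eff})\|\Phi_0\|^2+(\text{gap})\|\Phi_\perp\|^2$, and the estimate closes provided an a priori bound forces $\|\Phi_\perp\|$ to be small on the relevant low-energy subspace of $H$.

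That a priori bound is the main obstacle. Although $\text{Int}$ is not small in operator norm, the scaling $\lambda=(4\pi N\kf)^{-1/2}$ together with the Pauli exclusion principle should render it form-bounded relative to $(\sum_j(-\Delta_{y_j})-E_{\rm F})+\sum_i(-\Delta_{x_i})+1$ with a constant that is small once the fermionic energy window is accounted for; bootstrapping this on the spectral subspace of $H$ below $E_{\rm F}+\lambda NM\widehat V_0+\mu_n(h^{\eff})+1$ should yield $\langle\Psi,(\sum_j(-\Delta_{y_j})-E_{\rm F})\Psi\rangle\lesssim(\text{bosonic energy})+\ve_{N,\kf}$, which after conjugation is precisely the control on $\|\Phi_\perp\|$ required above. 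I expect this step --- together with the lattice-point count near the Fermi sphere underlying the Lindhard asymptotics at the explicit rate $O(\kf^{-1/3}\ln\kf)$, and the verification (via the kinetic-energy correction to $A$) that the bosonic kinetic term is not renormalized at leading order --- to be the technically heaviest parts of the proof; the remainder is systematic commutator estimates and Sobolev bookkeeping.
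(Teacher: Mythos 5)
Your proposal takes a genuinely different route from the paper: a unitary Schrieffer--Wolff conjugation $e^{-A}He^{A}$ with a BCH expansion, rather than the paper's combination of a \emph{non-unitary} first-order trial state for the upper bound and a completion of the square for the lower bound. For the upper bound the two routes are essentially equivalent (the paper tests $\H$ on $(1-\lambda \RR\V_+)(\Phi\otimes\Omega)$, i.e.\ the linearization of your $e^{A}$, and then must separately check that the trial space has dimension $n$ --- a step your unitary picture gets for free), and your identification of the Lindhard-sum asymptotics, the self-energy $-\tfrac12\int|V|^2$, and the zero-mode constant is consistent with the paper's Lemma \ref{prop:sums} and Eq.~\eqref{eq:w}.

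The genuine gap is in the lower bound, and it sits exactly where you flag ``the main obstacle'': your argument needs $\|\Phi_\perp\|$ to be small on low-energy states, but the only a priori information obtainable from the energy is of the form $\langle \mathcal N_+\rangle \le \langle \T\rangle \lesssim N$ (the paper's Lemmas \ref{lemma:kinetic:term} and \ref{lemma:kinetic1}); this gives $\|\Phi_\perp\|^2\lesssim N$, not $o(1)$, and the $O(1)$ spectral gap of the fermionic excitation energy cannot beat $\mu_n(h^\eff)\|\Phi_\perp\|^2$, which can be of size $N\|\Phi_\perp\|^2$. Proving that low-energy eigenstates actually have $\|\Phi_\perp\|=o(1)$ is Corollary \ref{coro}, which in the paper is a \emph{consequence} of the eigenvalue asymptotics, so bootstrapping it into the proof of the lower bound is circular as stated. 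The paper avoids this entirely with the operator identity
\begin{align*}
\T+\lambda\V_-+\lambda\V_+=\big(\T^{1/2}+\lambda\T^{-1/2}\V_+\big)^*\big(\T^{1/2}+\lambda\T^{-1/2}\V_+\big)-\lambda^2\,\V_-\T^{-1}\V_+\ \ge\ -\lambda^2\,\V_-\T^{-1}\V_+\,,
\end{align*}
valid on $\K$ with no smallness assumption on the excitation content, followed by normal ordering of $\V_-\T^{-1}\V_+$ into four terms: the fully contracted one gives $W_{\kf}$, two are manifestly \emph{non-positive} (so they can be dropped in the lower bound), and the fully normal-ordered one is controlled via the $\D_2$ sum estimates. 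Your conjugation produces the same quadratic-in-excitation remainders inside $\tfrac12[[H,A],A]$ and higher commutators, but you do not address how to bound them from below; without the sign observation for the analogues of $A_2,A_3$ and the $\D_2$-type estimate for the analogue of $A_4$, the error does not close at the stated rate $N^2(\ln\kf)^{5/3}\kf^{-1/3}$. Relatedly, asserting that the BCH remainder is controlled ``because $A$ is bounded after the cutoff'' is not enough: the commutators with the unbounded fermionic kinetic energy must be tracked, and quantitative smallness of the relevant nested commutators (not mere boundedness of $A$) is what is needed.
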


	The statement involves three parameters: $n$, $N$ and $k_F$. To approximate the eigenvalues   of $H$ using those of $h^{\rm eff}$, the error must be  smaller than $\mu_n(h^{\rm eff})$.
	In particular, note that for fixed $n$ and $N$, we have $|\varepsilon_{N, \kf}| \to 0$ as $\kf \to \infty$. For example, for the ground state energy ($ n=1$), we obtain:
	\begin{align}\label{eq:H:n:convergence}
		\lim_{\kf \to \infty} 
		\big(  \inf \sigma (H) -  E_{\rm F } - \lambda NM  \hat V_0  \big) = 
		\inf \sigma (h_{}^{\eff })
		-
		\tfrac12    \,
		{ \textstyle  \int_{\To^3 } |V|^2 dx  }  
		-
		\tfrac12(N-2)  \, | \widehat V_0|^2
	\end{align}
	for all $N\ge 1$.  
	
	\begin{remarks}\phantom{a}
		
		\begin{itemize}[leftmargin=*]
			\item  
			The term 
			$
			- \tfrac12    \,
			{ \textstyle  \int_{\To^3 } |V|^2 dx  } 
			$
			represents the self-energy of the bosons, arising from a coherent two-particle process where a single  boson excites and subsequently annihilates the same fermion from the Fermi ball. 
			On the other hand, the term $\tfrac12 (N-2) |\hat{V}_0|^2$ must be subtracted in order to extract the explicit $V \ast V$ potential in the effective Hamiltonian. Both of these contributions are physically irrelevant, as they correspond to constant energy shifts in the effective Hamiltonian.
			\vspace{1mm}
			
			\item  
			Thanks to the explicit error bound in Theorem \ref{thm1}, both  potentials $V$ and $W$  can depend on the number of bosons $N$. This becomes relevant in Theorem \ref{cor}, where we consider rescaled, $N$-dependent  interactions.
			While,  in principle,  they can also depend mildly  on $k_F$, we    exclude this possibility since it is not relevant for our analysis.
		\end{itemize}
	\end{remarks}
	Before discussing our main application, we state a corollary on the convergence of the ground state eigenfunction of $H$. For simplicity, we focus on the ground state, though the result extends to other non-degenerate eigenfunctions. The corollary shows that the ground state is approximately of the form $\phi^{\rm eff} \otimes \Omega_{\rm F}$, where $\phi^{\rm eff}$ is the ground state of the effective Bose Hamiltonian $h^{\rm eff}$. This supports the heuristics discussed above on the effective decoupling between the Fermi and Bose subsystems. The proof is postponed to Appendix \ref{appendix:projections}.

	\begin{corollary}\label{coro} 
		Let $\Phi \in \mathscr H$ and $\phi^{\rm eff}\in \mathscr H_{\rm B}$ be the normalized ground state eigenfunctions of $H$ and $h^{\rm eff}$, and fix their relative phase so that $\langle \Phi, \phi^\eff \otimes \Omega_{\rm F} \rangle \ge 0$. Then, for all $(W,V)$ satisfying Condition \ref{Ass1} and $N\ge 1$, we have
		\begin{align*}
			\lim_{k_{\rm F}\to \infty} \big\|\, \Phi \, - \,  \phi^{\rm eff} \otimes \Omega_{\rm F} \, \big\| = 0
		\end{align*}
	\end{corollary}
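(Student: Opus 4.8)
The plan is to upgrade the eigenvalue asymptotics of Theorem~\ref{thm1} to eigenfunction convergence by pairing it with a coercive lower bound on $H$ that is essentially already obtained in that proof. Set $\Psi_0 := \phi^{\eff}\otimes\Omega_{\rm F}$ and $Q_0^\perp := \1 - |\Psi_0\rangle\langle\Psi_0|$. Because $\Phi$ and $\Psi_0$ are unit vectors and $\langle\Phi,\Psi_0\rangle\geq0$ by the phase convention, $\|\Phi-\Psi_0\|^2 = 2 - 2\langle\Phi,\Psi_0\rangle$, so it suffices to show $\langle\Phi,\Psi_0\rangle\to1$ as $\kf\to\infty$ for each fixed $N$. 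The ingredient I would pull out of the proof of Theorem~\ref{thm1} is a unitary ``dressing'' $U=U_{N,\kf}$ on $\mathscr H$ and an operator--form lower bound of the shape
\[
\langle\Psi, H\Psi\rangle \ \geq\ c_{N,\kf} \, + \, \big\langle U^*\Psi,\ \big(h^{\eff}\otimes\1 \, + \, \kappa\,\mathcal N_{\rm F}\big)\,U^*\Psi\big\rangle \, - \, \delta_{N,\kf}\,\|\Psi\|^2 ,
\]
valid for every $\Psi$ in the form domain with $\langle\Psi,H\Psi\rangle$ below a fixed threshold, where $c_{N,\kf}=E_{\rm F}+\lambda NM\hat V_0-\tfrac12\int_{\To^3}|V|^2\,\d x-\tfrac12(N-2)|\hat V_0|^2$ is the constant in Theorem~\ref{thm1}, $\kappa>0$ is a fixed constant, $\delta_{N,\kf}\to0$ as $\kf\to\infty$, and $\mathcal N_{\rm F}=\1_{\rm B}\otimes N_{\rm F}$ counts fermionic particle--hole excitations above $\Omega_{\rm F}$, so that $\mathcal N_{\rm F}\geq0$, $\ker\mathcal N_{\rm F}=\mathscr H_{\rm B}\otimes\C\Omega_{\rm F}$, and $\mathcal N_{\rm F}\geq1$ on the orthogonal complement of this kernel. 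Morally this is just the lower half of the argument for Theorem~\ref{thm1}, written so as to keep the fermionic excitation number as a nonnegative remainder rather than discarding it.

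Next I would run a spectral--gap argument on the dressed vector. By a Perron--Frobenius argument $\phi^{\eff}$ is the unique ground state of $h^{\eff}$, so the gap $\gamma_N:=\mu_2(h^{\eff})-\mu_1(h^{\eff})$ is strictly positive; for $N=1$ one simply has $\gamma_1=1$. Both $h^{\eff}\otimes\1$ and $\mathcal N_{\rm F}$ commute with $P^{\eff}:=|\phi^{\eff}\rangle\langle\phi^{\eff}|\otimes\1_{\rm F}$, and $\Psi_0$ is a joint eigenvector, with $(h^{\eff}\otimes\1)\Psi_0=\mu_1(h^{\eff})\Psi_0$ and $\mathcal N_{\rm F}\Psi_0=0$. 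Writing a unit vector as $\Psi=a\Psi_0+\Psi_1$ with $\Psi_1\perp\Psi_0$, then decomposing $\Psi_1$ further by $P^{\eff}$ --- on the range of $\1-P^{\eff}$ one has $h^{\eff}\otimes\1\geq\mu_1(h^{\eff})+\gamma_N$, while on the part of the range of $P^{\eff}$ orthogonal to $\Psi_0$ one has $\mathcal N_{\rm F}\geq1$ --- and using that all cross terms vanish (the two operators commute with $P^{\eff}$ and fix/annihilate $\Psi_0$), one obtains
\[
\big\langle\Psi,\ \big(h^{\eff}\otimes\1 \, + \, \kappa\,\mathcal N_{\rm F}\big)\,\Psi\big\rangle \ \geq\ \mu_1(h^{\eff}) \, + \, \min(\gamma_N,\kappa)\,\langle\Psi,Q_0^\perp\Psi\rangle .
\]
Applying this with $\Psi=U^*\Phi$ inside the lower bound of the previous paragraph --- legitimate since $\langle\Phi,H\Phi\rangle=\mu_1(H)$ falls below any fixed threshold once $\kf$ is large, by Theorem~\ref{thm1} --- and then inserting $\mu_1(H)=c_{N,\kf}+\mu_1(h^{\eff})+\ve_{N,\kf}$ gives $\langle U^*\Phi,\,Q_0^\perp U^*\Phi\rangle\leq(|\ve_{N,\kf}|+\delta_{N,\kf})/\min(\gamma_N,\kappa)\to0$.

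Finally I would remove the dressing and fix the phase. The bound just obtained forces $\|U^*\Phi-\langle\Psi_0,U^*\Phi\rangle\Psi_0\|\to0$, hence $|\langle\Psi_0,U^*\Phi\rangle|\to1$ as $\|U^*\Phi\|=1$, i.e. $\|\Phi-e^{i\theta_{N,\kf}}U\Psi_0\|\to0$ for suitable phases $\theta_{N,\kf}$. It remains to show $\|(U-\1)\Psi_0\|\to0$. The dressing is generated by an anti-self-adjoint $A=A_{N,\kf}$ whose kernel is of order $\lambda\,\widehat V(k)$ over the free fermionic energy denominators; evaluating on $\Psi_0$ and using $V\in H^4(\To^3)$ together with the Pauli constraint on the particle--hole sum (which, as in the heuristics, limits the participating momenta to an $O(\kf)$--size set and keeps all denominators $\gtrsim1$, so that no small denominators arise), one gets $\|A\Psi_0\|^2\lesssim\lambda^2(\ln\kf)\|V\|_{H^4}^2\sim(\ln\kf)/(N\kf)$, and a Duhamel estimate --- using that $A$ commutes with $e^{tA}$ and the latter is unitary --- upgrades this to $\|(U-\1)\Psi_0\|\lesssim\|A\Psi_0\|\to0$. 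Combining, $\|\Phi-e^{i\theta_{N,\kf}}\Psi_0\|\to0$; pairing with $\Psi_0$ and invoking $\langle\Phi,\Psi_0\rangle\geq0$ forces $e^{i\theta_{N,\kf}}\to1$, so $\|\Phi-\Psi_0\|\to0$, as claimed.

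The substantive work is not in this corollary but in extracting the right packaging of the estimates behind Theorem~\ref{thm1}: one needs the coercive lower bound carrying the explicit positive $\mathcal N_{\rm F}$--gain --- essentially the heart of the lower bound, reorganized rather than thrown away --- and the quantitative control $\|(U-\1)\Psi_0\|\to0$, which is precisely the statement that the fermion--boson correlations encoded by $U$ disappear in the limit $\kf\to\infty$, in agreement with the assertion in the heuristics that such processes ``do not establish fermion--boson correlations''. Granted these, the orthogonal decomposition against the spectral gap of $h^{\eff}$ and the phase fixing are routine.
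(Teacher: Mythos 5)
There is a genuine gap. Your entire argument hinges on a ``coercive'' lower bound of the form $H \ge c_{N,\kf} + U\big(h^{\eff}\otimes\1 + \kappa\,\mathcal N_{\rm F}\big)U^* - \delta_{N,\kf}$ with a unitary dressing $U$ and a strictly positive constant $\kappa$ in front of the excitation-number operator, which you describe as ``essentially already obtained'' in the proof of Theorem \ref{thm1}. It is not. The paper's lower bound (Proposition \ref{prop1}) completes the square as $\mathbb T + \lambda\mathbb V^- + \lambda\mathbb V^+ = \big(\mathbb T^{1/2}+\lambda\mathbb T^{-1/2}\mathbb V^+\big)^*\big(\mathbb T^{1/2}+\lambda\mathbb T^{-1/2}\mathbb V^+\big) - \lambda^2\mathbb V^-\mathbb T^{-1}\mathbb V^+$ and then \emph{discards} the nonnegative square; no positive $\mathcal N_+$-remainder survives, and the square is not of the conjugated form $U(\kappa\mathcal N_+)U^*$. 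Moreover, the transformation actually used in the paper, $\1-\lambda\,\mathbb Q_\Omega\mathbb T^{-1}\mathbb Q_\Omega\,\mathbb V^+$, is not unitary, so the anti-self-adjoint generator $A$ and the Duhamel estimate for $\|(U-\1)\Psi_0\|$ that you invoke refer to an object the paper never constructs. Extracting your inequality would amount to proving a new, strictly stronger operator lower bound (and a naive interpolation such as $\mathbb T \ge \epsilon\mathcal N_+ + (1-\epsilon)\mathbb T$ damages the effective potential by a factor $(1-\epsilon)^{-1}$, i.e.\ by $O(\epsilon N)$ in the energy, which is not $o(1)$), so this is not a mere repackaging.

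The paper sidesteps this entirely by a different and cheaper mechanism: it takes the non-unitary trial state $\Psi=(\1-\lambda\mathbb R\mathbb V^+)(\phi^{\eff}\otimes\Omega)$ from the upper bound, decomposes it with respect to the spectral projection $P=|\mathscr R^*\Phi\rangle\langle\mathscr R^*\Phi|$ of the \emph{full} Hamiltonian $\mathbb H$ (not of $h^{\eff}$), and writes $\langle\Psi,(\mathbb H-\mu_1)\Psi\rangle \ge \|Q\Psi\|^2\big(\mu_2(\mathbb H)-\mu_1\big)$. The two-sided eigenvalue asymptotics of Propositions \ref{prop2} and \ref{prop1} applied for $n=1$ \emph{and} $n=2$, together with the Perron--Frobenius gap of $h^{\eff}$, then force $\|Q\Psi\|\to0$, and Lemma \ref{lemma:partial1} gives $\|\lambda\mathbb R\mathbb V^+(\phi^{\eff}\otimes\Omega)\|\to0$, which replaces your $\|(U-\1)\Psi_0\|\to0$ step. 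Your reduction to $\langle\Phi,\Psi_0\rangle\to1$, the use of the spectral gap, and the final phase-fixing are all fine; the missing idea is that the gap should be exploited through $\mu_2(\mathbb H)-\mu_1(\mathbb H)$ via the already-proven eigenvalue bounds, rather than through an unproven coercive operator inequality.
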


	
	Our main application of Theorem \ref{thm1} addresses the stability-instability transition in large Bose-Fermi mixtures, as mentioned in the introduction. The physical picture is that the bosons form a BEC immersed in a large fermionic environment \cite{DeSalvo19}. For sufficiently weak coupling, the condensate remains stable under the interaction with the Fermi gas, and the energy per particle of the bosonic subsystem stays finite in the limit $N \to \infty$. In contrast, when the Bose-Fermi coupling becomes strong enough, the effective boson-boson potential turns sufficiently attractive, driving the bosonic subsystem into an unstable phase. In this regime, the bosonic energy per particle is no longer uniformly bounded below, which indicates a collapse of the condensate.
	
	To state a precise version of this transition, we consider the Gross-Pitaevskii (GP) regime that 
	describes a dilute gas of ultracold atoms interacting via short-range potentials. Our precise assumptions are as follows.
	
	\begin{condition}
		\label{Ass2}
		There exist 	$	w \in C(\R^3)$ and  non-zero $ v \in H^4 (\R^3)$, both non-negative, radial and compactly supported, 
		such that 
		\begin{equation}
			W(x)= N^3 w (Nx )  \qquad \t{and} \qquad 
			V(x) = g \, N^3 v(Nx)  \ , \qquad N \geq 1
		\end{equation}
		with  coupling constant $g \in \R. $
	\end{condition}
	
	Under this condition, the Hamiltonian $H$ describes a Bose-Fermi mixture with repulsive Bose-Bose potential and repulsive 
	($ g\ge 0$) or attractive ($ g\le 0$) Bose-Fermi interaction, with both $W$ and $V$ approximating delta  distributions as $N\to \infty$. We note that $  (V\ast V) (x) = g^2 N^3  ( v\ast v )  (Nx)$ and thus the effective  re-scaled 
	two-body potential is 
	\begin{align}
		w_g  \,  \equiv  \, w   \, -  \,    g^2 \,  ( v*v ) .
	\end{align}
	The effective Hamiltonian \eqref{h:eff}  on $\mathscr H_B$
	becomes
	\begin{equation}
		\label{eq:heff}
		h^\eff = \sum_{1 \leq i \leq N }  ( - \Delta_{x_i })
		+ \frac{1}{N}
		\sum_{1 \leq i < j \leq N }
		N^3 w_g ( N  (x_i -x_j)) \ ,
	\end{equation} 
	which is commonly referred to as GP-scaling for $N$ bosons. In Theorem \ref{cor} below, we  prove  the
	existence of a  \textit{critical coupling} $g_c\ge 0$
	from which we can distinguish two regimes: First, the \textit{stable regime}, 
	given by  $   0\le  | g |  <  g_c$.
	Here,  the    ground state energy per particle 
	of the 
	effective   Hamiltonian $h^{\eff}$
	is uniformly bounded    in   $ N     $.
	Second,  the \textit{unstable regime}, given by $ | g | >g_c$, where the pair potential $w_{g}$ becomes sufficiently attractive. Here, the ground state energy  per particle diverges to $- \infty $ as $N\to \infty$.
	
	\smallskip
	
	Within the stable regime, it would be desirable to calculate
	the energy per particle.
	To this end, we identify 
	yet another parameter region. Namely, where   $ | g |  $ is small enough so that  $w_g \ge 0$. 
	In this case, it is well known
	that  bosons form a BEC with ground state energy
	\begin{align}\label{eq:stable:h:n:energies}
		\inf \sigma ( h^\eff) = 4 \pi \mathfrak a(g) N +  O(1) \quad \text{as} \quad N\to \infty \ ,
	\end{align}
	where  $\mathfrak a(g)$ denotes the scattering length associated with the   potential $w_{g}$.
	That is, 
	\begin{equation}
		\label{eq:a}
		\mathfrak a(g)   =  \frac{1}{8\pi} \int_{\R^3}	 w_{g} (x) f_g(x) d x  \qquad
		\t{where}
		\qquad
		\begin{cases} 
			& 	\hspace{1.2mm} 	( -\Delta  + \tfrac12  w_g )  f_g = 0		\\[1mm]
			&  	\lim_{|x | \rightarrow \infty } |f_g(x)|=1
		\end{cases} \ . 
	\end{equation}
	We refer to \cite{LSY2000} and \cite[Theorem 1.1]{BBCS2019} for precise formulations and proofs of \eqref{eq:stable:h:n:energies}. Additionally, let us note that the analysis of dilute Bose gases has been a very active area of research in mathematical physics---for instance \cite{Lieb:book,NRS2016,BBCS2019,
		BBCS2020,BBCS2020b,
		Fournais2020,Fournais2023,Brooks}---with many results extending beyond \eqref{eq:stable:h:n:energies}.  While the condition $w_g \geq 0$ is likely not necessary for Bose-Einstein condensation, the exact necessary and sufficient conditions are, to our knowledge, still unknown; see \cite{Baumgartner,Yin2010} for partial results. Lastly, let us mention that it is also an open question whether alternative stable phases exist besides BEC. These problems are of interest beyond the context of Bose-Fermi mixtures and outside the scope of this work.
	
	\smallskip
	
	The  stability-instability transition is summarized in the following theorem.  A heuristic illustration of the statement is shown in Figure \ref{fig2}.

	\begin{theorem} 
		\label{cor} Assume  that Condition \ref{Ass2} holds with coupling $g \in \R$, and
		define the shifted  total energy-per-boson 
		$  \mathcal E   : \R \to  \R \cup \{  -  \infty\} $  as
		\begin{align}
			\label{def:E}
			\mathcal E(g)    \equiv   	\liminf_{N\to \infty}    \,
			\lim_{k_F\to \infty}    \frac{1}{N} \, 
			\Big( 
			\inf \sigma \big( H )     - E_F - \lambda NM  \hat V_0    +  
			\tfrac12    \,
			{ \textstyle  \int_{\To^3 } |V(x)|^2 dx  }   
			+ 
			\tfrac12(N-2)  \, | \hat V_0|^2
			\Big)   
			\ .
		\end{align}
		Then, the following statements hold
		\begin{enumerate} 
			\item  
			$\mathcal E(g) = \mathcal E(-g)$ for all $g\in \R$ and $  g \in [0 , \infty) \mapsto \mathcal E(g) $ is a monotone decreasing map. 
			\item 
			There exists a critical coupling $ 0 \leq g_c < \infty$
			such that 
			\begin{equation}
				\mathcal E(g)  > - \infty   \qquad \forall g\in \{0\}\cup (0, g_c)
				\qquad
				\t{and}
				\qquad 
				\mathcal E(g) = - \infty \qquad \forall g > g_c \ . 
			\end{equation}
			\item 
			Let $g_0  \equiv  \sup  \{ g \geq 0  : w - g^2 (v*v) \geq 0 \}$ and $g_*  \equiv w(0) / \| v\|_{L^2}^2$.
			Then,  $g_0 \leq g_c \le g_*$.
			In addition,  
			for all   $ 0 \leq  g  \leq g_0 $ the limit $N\to \infty$  in 
			\eqref{def:E}
			exists, and 
			\begin{align}
				\label{Eg}
				\mathcal E  (    g ) = 4\pi \mathfrak a (g)   \  ,
			\end{align}
			where the right hand side is defined in terms
			of the scattering length \eqref{eq:a}. 
		\end{enumerate}
		
	\end{theorem}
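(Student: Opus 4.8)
The plan is to derive Theorem~\ref{cor} from the spectral asymptotics of Theorem~\ref{thm1} together with known facts about Bose gases in the GP regime. The first observation is that, by \eqref{eq:H:n:convergence} applied with $W(x)=N^3w(Nx)$ and $V(x)=gN^3v(Nx)$, the inner limit $k_F\to\infty$ in \eqref{def:E} exists for every fixed $N$ and equals $\tfrac1N\inf\sigma(h^\eff)$; here $h^\eff$ is the GP-scaled Hamiltonian \eqref{eq:heff} with pair potential $w_g=w-g^2(v\ast v)$. Thus $\mathcal E(g)=\liminf_{N\to\infty}\tfrac1N\inf\sigma(h^\eff_N)$, and all three assertions become statements purely about the effective Bose system. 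Part~(1) is then essentially immediate: $w_g=w_{-g}$ since only $g^2$ enters, giving $\mathcal E(g)=\mathcal E(-g)$; and for $0\le g_1\le g_2$ one has $w_{g_2}\le w_{g_1}$ pointwise, so $h^\eff_{N}(g_2)\le h^\eff_N(g_1)$ as quadratic forms, hence $\inf\sigma(h^\eff_N(g_2))\le\inf\sigma(h^\eff_N(g_1))$ for every $N$, and the $\liminf$ inherits monotonicity.

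For part~(2), I would first show $\mathcal E(g)>-\infty$ for small $g$ and $\mathcal E(g)=-\infty$ for large $g$, then invoke monotonicity to produce the threshold $g_c$. The lower bound for small $g$ follows from the case $w_g\ge0$: when $w_g\ge0$ the GP energy satisfies $\inf\sigma(h^\eff_N)=4\pi\mathfrak a(g)N+O(1)$ by \cite{LSY2000,BBCS2019}, so $\mathcal E(g)=4\pi\mathfrak a(g)\in\RR$. For the divergence at large $g$ I would use a trial state: take all $N$ bosons in a single normalized mode $\varphi_L(x)=L^{3/2}\varphi(Lx)$ concentrated on scale $1/L$ with $1\ll L\ll N$; the kinetic energy contributes $N L^2\|\nabla\varphi\|^2$, while the interaction contributes $\tfrac1N\binom N2$ times an average of $N^3w_g(N\,\cdot\,)$ against $|\varphi_L|^2\otimes|\varphi_L|^2$, which for $L\ll N$ is approximately $\tfrac N2\big(\widehat{w}(0)-g^2\widehat{v\ast v}(0)\big)\|\varphi_L\|_4^4\sim \tfrac N2 L^3(w\text{-integral}-g^2\|v\|_{L^2}^4/\dots)$—more carefully, $\int w_g=\int w-g^2(\int v)^2$ can be made negative for $g$ large, and then choosing $L\to\infty$ appropriately (e.g. $L=N^\alpha$ with small $\alpha>0$) makes the energy per particle $\to-\infty$. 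A cleaner route for the divergence is to use the pointwise bound: since $v(0)>0$ and $v$ is continuous, one can instead exploit that for $g>g_*=w(0)/\|v\|_{L^2}^2$ the potential $w_g$ has a sufficiently deep negative part to bind, via the standard collapse argument for attractive GP scaling; this simultaneously yields the upper bound $g_c\le g_*$ in part~(3).

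For part~(3), the lower bound $g_0\le g_c$ is exactly the $w_g\ge0$ regime already handled: for $0\le g\le g_0$ we have $w_g\ge0$, hence \eqref{eq:stable:h:n:energies} gives $\mathcal E(g)=4\pi\mathfrak a(g)>-\infty$ with the limit (not just $\liminf$) existing, which is precisely \eqref{Eg}; so $(0,g_0)$ lies in the stable set and $g_c\ge g_0$. The upper bound $g_c\le g_*$ comes from the trial-function collapse computation sketched above, showing $\mathcal E(g)=-\infty$ once $g>g_*$. I expect the main obstacle to be the divergence/collapse estimate establishing $g_c\le g_*$ with the sharp constant $g_*=w(0)/\|v\|_{L^2}^2$: one must choose the concentration scale and the trial state carefully so that the contact value of the convolution $(v\ast v)$ at the origin—which equals $\|v\|_{L^2}^2$—is captured exactly, while controlling the error between the averaged GP interaction $\tfrac1N\sum N^3 w_g(N(x_i-x_j))$ and its delta-limit value $\tfrac{N}{2}w_g(0)$-type expression; getting the $w(0)$ versus $\int w$ distinction right (i.e. whether collapse is governed by the contact value or the integral) is the delicate point, and likely forces the use of a trial state concentrated on an intermediate scale together with the explicit form of the Dirac-sequence limit. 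Everything else reduces to quoting Theorem~\ref{thm1}, the GP energy asymptotics, and elementary monotonicity.
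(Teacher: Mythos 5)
Your overall route coincides with the paper's: reduce everything via Theorem \ref{thm1} and \eqref{eq:H:n:convergence} to $\mathcal E(g)=\liminf_{N\to\infty}N^{-1}\inf\sigma(h^\eff|_g)$, obtain part (1) from the fact that only $g^2$ enters and from $v\ast v\ge 0$, define $g_c$ as the monotone threshold of the collapse set, use \eqref{eq:stable:h:n:energies} to get $\mathcal E(g)=4\pi\mathfrak a(g)$ for $0\le g\le g_0$ (hence $g_0\le g_c$), and prove instability by a product trial state. All of that is correct and matches the paper.

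The genuine gap is the collapse estimate yielding $g_c\le g_*$ with $g_*=w(0)/\|v\|_{L^2}^2$. You correctly flag the ``$w(0)$ versus $\int w$'' issue as the delicate point, but neither of your proposed mechanisms closes it. A trial state concentrated at an intermediate scale $1\ll L\ll N$ turns the GP interaction into a Dirac sequence tested against $|\varphi_L|^2\otimes|\varphi_L|^2$, so it detects $\int_{\R^3} w_g=\int w-g^2(\int v)^2$ and produces the integral-based threshold, which in general is neither equal to nor comparable with $g_*$; and your guess that capturing the contact value $(v\ast v)(0)=\|v\|_{L^2}^2$ ``likely forces an intermediate scale together with the Dirac-sequence limit'' points in exactly the wrong direction. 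The paper's resolution is to concentrate the trial state at the \emph{same} scale as the potential, so that no Dirac-sequence limit is taken at all: for $g>g_*$ one has $w_g(0)<0$, hence by continuity $w_g\le -c_0<0$ on a ball $B(0,r_0)$; choose radial $\psi$ with $\supp\psi\subset B(0,r_0/2)$, $\|\psi\|_{L^2}=1$, and set $\Psi_N=\psi_N^{\otimes N}$ with $\psi_N=N^{3/2}\psi(N\cdot)$. An exact change of variables gives the interaction energy $\tfrac{N-1}{2}\,N^3\int(|\psi|^2\ast|\psi|^2)\,w_g\,dx\le -c\,c_0\,N^4$, since $\supp(|\psi|^2\ast|\psi|^2)\subset B(0,r_0)$, while the kinetic energy is only $N^3\|\nabla\psi\|_{L^2}^2$; dividing by $N$ sends the energy per particle to $-\infty$. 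Without this scale-matching choice the bound $g_c\le g_*$, and with it the precise placement of the unstable regime in part (3), is not established.
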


	\begin{center}
		\begin{figure}[t!]
			\vspace{4mm}
			{\hspace{-5mm} 
				\includegraphics[scale=0.475]{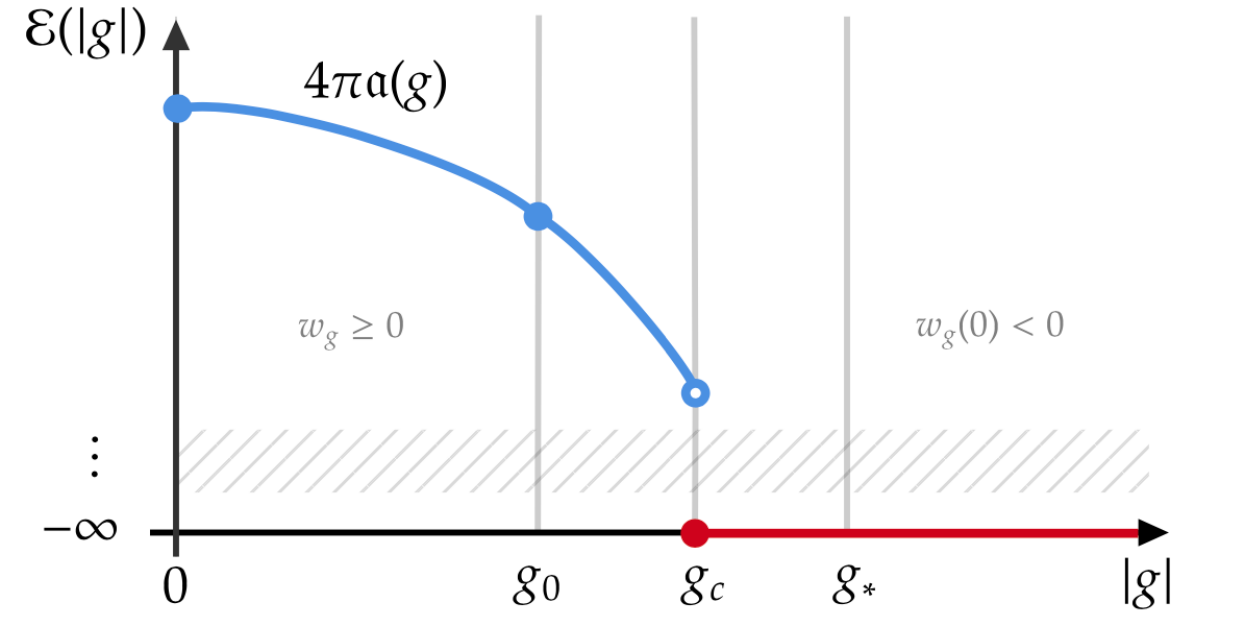}
			}
			\caption{\hspace{-2.5mm}\label{fig2} Sketch of the phase diagram: For $w_g\ge 0$, the energy satisfies $\mathcal E(g) = 4\pi \mathfrak a(g)$. In the intermediate stable regime $g_0\le g \le g_c $, only monotonicity of $\mathcal E(g)$ is known. In the unstable regime, $\mathcal E(g)=-\infty$.}
		\end{figure}
	\end{center}  
	
	\begin{remarks} 
		The following points are worth noting.
		\begin{itemize}[leftmargin=*]
			
			\item 
			The contribution due   to  self-energy   scales like $\| V \|_{L^2}^2 = g^2 N^3 \| v \|_{L^2}^2$, and thus it must be subtracted from $\inf \sigma(H)$  in order
			to detect  the stability-instability transition.
			
			\vspace{1mm}
			
			\item    In general, we cannot rule out the case $g_c = 0$, meaning the transition might occur as soon as $|g| > 0$. However, if  $w_g \ge 0$ for small $g$---for example, by requiring $\supp(v * v) \subset \supp(w)$---then $g_c \ge g_0 > 0$. This ensures  a non-trivial stable regime.
			
			\vspace{1mm}
			
			
			
			\item  	  
			It is possible that $g_c = g_0 = g_*$,  i.e.,
			the intermediate phase is empty.
			From our proof, this follows for instance if 
			$w_{g_0}(0)=0$.   \textit{Example:} Choose  $w = \alpha  (  v*v) $ for some $\alpha>0$, 
			so that $g_c = g_0 = g_* = \alpha$. 
			Additionally,  in this case $  \mathcal E (g)$ is 
			left-continuous at $g_c$, and $ \mathcal E (g_c)=0$. 
			
			\vspace{1mm}

			\item  
			In addition to the GP-scaling, one may also
			consider less singular re-scaled potentials
			\begin{align} 
				\W (x) = N^{3 \beta }  w (N^\beta x)\quad \text{and} \quad 
				V(x) = g N^{3 \beta }  v(N^\beta x ) \ 
			\end{align}
			for  $0  \leq  \beta < 1 $. The Bose gas energy for such models has been studied in \cite{Seiringer2011,LNR14,BBCS2020b,
				BLPR2023}. With these scalings, 
			the analogous formula to \eqref{Eg} for $g \in \{ 0\} \cup [ 0,g_0 ]$ is
			\begin{equation}
				\label{Eg2}
				\mathcal E (g) = 4\pi 	\(\textstyle 
				\int_{\R^3} w(x) d x 
				-      g^2  
				\Big(   \int_{\R^3} v(x) d x  
				\Big)^2 \)    , 
			\end{equation}
			which comes from replacing the scattering length $\mathfrak{a }(g)$
			with $\int_{\R^3} w_g (x) d x $. Let us emphasize that similar formulas to \eqref{Eg2} have been derived in the physics literature; see   \cite{Tsurumi2000,Chui2004,DeSalvo19}.

	\end{itemize}

\end{remarks}

\subsection{Emergence of the effective theory}\label{subsec:eff}

In this section, we sketch the proof of Theorem \ref{thm1} and place the  main  ideas in  a broader context. The key observation is that the bosons form a \textit{slow} subsystem, while the fermions act as the \textit{fast} subsystem. 
Here, the typical kinetic energy of a fermion is much larger than that of a boson, and
one speaks of a separation of energy scales. 
From this perspective, our model belongs to a larger class of physical systems exhibiting an adiabatic decoupling between two coupled subsystems. The analysis of such systems has a long-standing history, dating back to the work of Born and Oppenheimer \cite{BornOppenheimer}. A fundamental consequence of the separation of energy scales is that the two subsystems effectively decouple, with the fast subsystem inducing an effective interaction in the slow subsystem. This phenomenon has been rigorously established in various contexts, including Born--Oppenheimer theory for heavy nuclei \cite{PST,TeufelSpo,Stefan}, the strongly coupled Fröhlich polaron \cite{Mit,Leopold,BrooksMit,BrooksSei}, and different Nelson-type quantum field theory models \cite{Davies79,Hiroshima99,TeufelNelson,Tenuta,CarMit24b}.

In Theorem \ref{thm1}, we verify this effect for large Bose--Fermi mixtures, where the ratio of energy scales is quantified by the Fermi momentum $ \kf  \gg 1$. 
While in our Hamiltonian the dependence of various terms with respect to $\kf$ is not explicit, the underlying scaling can be   understood  in terms of a simplified \textit{abstract} model, presented below. 
Here,  we sketch the proof for this abstract model, as we believe that our ideas 
apply not only  to   Bose--Fermi mixtures,  but to a wide range of physical systems. 
As we explain below, 
one of the key estimates in our proof  is an operator inequality  
based on   the  completion of the square in the operator sense. 
It allows us to obtain the lower bound for the ground state energy, as well as the higher order eigenvalues. 
The abstract model that we present includes an explicit scaling parameter $\mu \gg 1$ to quantify the separation of energy scales; this choice of scaling is reminiscent of the one in the models studied in \cite{BrooksMit,Davies79,Hiroshima99,CarMit24b}.  We then  discuss the modifications required to apply the argument to the Bose--Fermi model.

%

\subsubsection{
	The abstract model}
Consider a Hilbert space $ \mathscr{H}_A \otimes \mathscr{H}_B $, where $ \mathscr{H}_A $ and $ \mathscr{H}_B $ represent a fast and a slow subsystem, respectively. We assume a Hamiltonian of the form 
\begin{align}
	\mathbb{H} =  \mu T_A  \otimes \1  +  \1 \otimes  T_B   + \mu^{1/2} F, 
\end{align}
where $ T_A \geq 0 $ and $ T_B \geq 0 $ are the Hamiltonians of the fast and slow subsystems, respectively, and $ F $ models their interaction. The parameter $ \mu \gg 1 $ controls the ratio of energy scales.  We   assume that $ T_A $ has a unique ground state $ \Phi_A $ with eigenvalue $ \inf \sigma(T_A) = 0 $, a spectral gap of order one, and that $ P F P = 0 $ with spectral projection $P = | \Phi_A \rangle \langle \Phi_A | \otimes \1 $.

We divide the analysis into upper and lower bounds, and sketch it only for the ground state energy. For the upper bound, we use a trial state of the form
\begin{align} 
	\Psi = ( \1 - \mu^{-1/2} T_A^{-1} F) \Phi_A \otimes \Phi_B^{\rm eff}, \quad  \Phi_B^{\rm eff} \in \mathscr{H}_B. 
\end{align}
Under appropriate assumptions on the interaction (e.g, if $F$ is bounded), 
one proves that 
\begin{align}
	\langle \Psi, \mathbb{H} \Psi \rangle = \langle \Phi_B^{\rm eff}, h_B^{\rm eff} \Phi_B^{\rm eff} \rangle + O(\mu^{-1/2}) \quad \text{with} \quad  h_B^{\rm eff} = T_B - \langle \Phi_A, F T_A^{-1} F \Phi_A \rangle_{\mathscr{H}_A}.
\end{align}
The effective Hamiltonian $h_B^\eff$
contains a non-positive
operator on $\mathscr H_B$, 
representing 
a partial expectation value with respect to $ \Phi_A \in \mathscr{H}_A $.
More precisely, for an operator $ S $ on $ \mathscr{H}_A \otimes \mathscr{H}_B $ and fixed $ \Phi_A \in \mathscr{H}_A $, the sesquilinear form
\begin{align}
	\< \Phi_B ', \< \Phi_A , S \Phi_A \>_{\mathscr{H}_A} \Phi_B \>_{\mathscr{H}_B} \equiv \< \Phi_A \otimes \Phi_B', S \Phi_A \otimes \Phi_B \>_{\mathscr{H}_A \otimes \mathscr{H}_B}
\end{align}
defines an operator $ \Phi_B \mapsto \< \Phi_A, S \Phi_A \>_{\mathscr{H}_A} \Phi_B $, which we call the partial expectation value of $ S $ with respect to the state $ \Phi_A \in \mathscr{H}_A $.  
In our analysis, we take $S = F T_{A}^{-1 }F $. 
We then choose $ \Phi_B^{\rm eff} $ as the ground state of $ h^{\rm eff}_B $ and apply the variational principle  to    obtain   $ \inf \sigma(\mathbb{H}) \le \inf \sigma(h^{\rm eff}_B) + O(\mu^{-1/2}) $. For our specific model, this analysis is carried out in Section \ref{sec:upper:bound}.

To see how the effective Hamiltonian emerges as a lower bound, we  \textit{complete the square} in the following form
(we omit the tensor product with the identity)
\begin{align} \label{completing:square:abstract}
	\mathbb H  = \mu \big( T_A^{1/2} + \mu^{-1} T_A^{-1/2} F \big)^* \big( T_A^{1/2} + \mu^{-1} T_A^{-1/2} F \big) + T_B - F T_A^{-1} F \,  . 
\end{align}
This leads to the operator inequality $ \mathbb H  \ge T_B - F T_A^{-1} F$. Assuming we can show for suitable low-energy states $\Psi  \in \mathscr H$ (e.g. the ground state of $\mathbb H$) that
\begin{align} \label{FTF:bound}
	\langle \Psi, F T_A^{-1} F \Psi\rangle \le \langle \Psi, \1   \otimes \langle \Phi_A, F T_A^{-1} F \Phi_A \rangle_{\mathscr H _A} \Psi \rangle + o(1) \quad \text{as} \quad \mu \to \infty,
\end{align}
we obtain the matching lower bound $ \inf \sigma(\mathbb H) \ge \inf \sigma(h^{\rm eff}_B) + o(1)$. This bound is non-trivial and depends on specific model properties---in our setting, $\Phi_A$ corresponds to the Fermi sea and \eqref{FTF:bound} emerges from normal ordering of creation and annihilation operators in combination with suitable operator inequalities. We refer to Section \ref{sec:lower:bound} for the details.

\subsubsection{Application to Bose-Fermi mixtures}

To apply the outlined arguments to the Bose-Fermi model, we first note that bosons mostly interact with fermions by creating or annihilating particle-hole pairs in the Fermi ball. Therefore, we introduce the so-called particle-hole transformation, which implements this directly at the level of the Hamiltonian. Up to unitary equivalence, we then study a new Hamiltonian $\mathbb H$---defined in Eq. \eqref{def:hamiltonian:exc}---that describes the interaction between bosons and \textit{excitations} of the Fermi ball. Since most particle-hole excitations have large momentum $|p| \sim k_F$, we identify them as the fast subsystem relative to the slow bosons with energies of order one. While the excitation Hamiltonian $\mathbb H$ retains the key assumptions of the abstract  model discussed above, it introduces   additional difficulties:
\begin{itemize}[leftmargin=7mm]
	\item[\bf{1.}] The Fermi momentum $k_F$ plays the role of $\mu$, but it does not appear as simple prefactors in the Hamiltonian. Thus, the scaling of the different terms must be carefully extracted.\\[-3.5mm]
	\item[\bf{2.}] All operators are unbounded and require careful estimates to control error terms. \\[-3.5mm]
	\item[\bf{3.}] The explicit form of the effective Hamiltonian  requires 
	analyzing the partial expectation value
	$\langle \Phi_A, F T_A^{-1} F \Phi_A \rangle$. 
	In the Bose-Fermi model this operator still depends on  $k_F$, and to obtain its explicit form---given by $\frac1N \sum_{1\le i < j \le N } (V\ast V)(x_i-x_j) $---requires the analysis of its large $k_F$ limit.
\end{itemize}

	\subsection{The small mass limit}
	
	As explained in the introduction, the present work is largely motivated by the experimental findings in~\cite{DeSalvo19} on Bose–Fermi mixtures consisting of light fermionic $^6\mathrm{Li}$ atoms and heavier bosonic $^{133}\mathrm{Cs}$ atoms. We denote the respective masses by $m_{\mathrm{F}}$ and $m_{\mathrm{B}}$. In these experiments, the separation of energy scales arises due to the small mass ratio $m \equiv \frac{m_{\mathrm{F}}}{m_{\mathrm{B}}} \ll 1$.
	While in our analysis we set $m=1$ and take the limit $k_F \rightarrow \infty$, it is also possible to consider a vanishing mas ratio $m \rightarrow 0$ at \textit{fixed} $k_F$. The analysis of this  limit is well within the scope of our methods---let us briefly sketch the necessary adjustments. 
	
	First, let $H$ describe such a mixture in units where $m_B =1$ and  $ m = m_{\rm F } \ll1 $.
	The Fermi energy is now given by $E_{\rm F } = m^{-1 } \sum_{ k \in B_{\rm F}} k^2 = O( m^{-1 } )$ and we modify the coupling   to $\lambda = 1 / \sqrt{4 \pi N m k_F}$.
	Following the    proof of Theorem \ref{thm1},
	we  would then obtain the  following  result, analogous to \eqref{eq:H:n:convergence}:
	\begin{equation} 
		\lim_{m \rightarrow 0 } \big( \inf \sigma (H) - E_{\rm F } - \lambda NM \hat V_0 \big) = \inf \sigma (h_{ \kf }^{\eff }) - \tfrac12 \, W_{k_F}(0) \, , 
	\end{equation} 
	where the effective Hamiltonian $h_{\kf}^\eff$ and    potential $W_{\kf}$ are defined as  in \eqref{heff} and \eqref{eq:def:eff:kF:potential}, respectively. Notably, the effective potential now depends on $k_F$, and it agrees with $V * V$ only in the limit $k_F \to \infty$.

	\smallskip 
	
	\noindent  \textbf{{Organization of the proof}}.
	In Section \ref{sec:sec:quant}
	we introduce the second quantization formalism
	and reduce the analysis to the excitation Hamiltonian $\mathbb H$ on Fock space.
	In Section \ref{sec:proofs}
	we prove Theorem \ref{thm1} and \ref{cor}.
	The proof is based on 
	corresponding upper and lower bounds for the eigenvalues of $\mathbb H$.
	In Section \ref{sec:prelims} we collect helpful estimates for its analysis; 
	these involve sum over relevant geometric sets, and various operator inequalities. 
	In Section \ref{sec:effective}
	we prove these bounds
	and show the emergence of the effective Hamiltonian. 
	Finally 
	in Section \ref{section:sums} 
	we prove an inequality first stated in Section \ref{sec:prelims}. 
	
	\medskip

	\noindent  \textbf{Convention}. We say that $C>0$  is a \textit{constant} 
	if it is a positive number, independent of all the physical parameters $N, M, \lambda$
	and also of the potentials $V$ and $W$. 
	Its value may change from line to line.

	\section{Second quantization}
	\label{sec:sec:quant}
	As we have mentioned earlier, the effective Bose-Bose interaction that is induced by the Fermi subsystem 
	can be understood as exchanging (virtual) excitations of the Fermi ball.                                       
	In order to make this idea precise, it is convenient
	to work in the formalism of second quantization, 
	where we allow the number of fermions to fluctuate.
	
	To this end, we  embed the fermionic Hilbert space 
	$\mathscr H _{F}$  into the larger  Fock space 
	\begin{equation}
		\mathscr F  =  \textstyle \C \oplus \bigoplus_{ n=1}^\infty \mathscr F^{(n)} 
		\qquad 
		\t{with}
		\qquad 
		\mathscr F^{(n)} = \bigwedge_{j=1}^n L^2 ( \To^3 	)  \ ,   \quad   n \geq 1    
	\end{equation}
	so that  $\mathscr H _{F}  =  \F ^{ (M)}	$. 
	As usual, $\F$
	is naturally endowed with an inner product
	\begin{align}
		\<  \Psi, \Phi \>_\F = \sum_{n = 0}^\infty \<  \Psi^{(n)} , \Phi^{(n)} \>_{\F^{ (n  )  }  } \ , \qquad \Psi, \Phi \in \mathscr F 
	\end{align}
	and we denote the Fock vaccum by 
	$\Omega   = ( 1 ,  \textbf{0})$. 
	Additionally,  we equip $\F$  with 
	creation and annihilation operators, 
	defined for $f \in  L^2 ( \To^3)$  and $\Psi \in \F$
	\begin{align}
		[  	  a(f) \Psi  ]^{(n)}  (x_1 , \cdots, x_n ) 
		& \ =  \ 
		\sqrt { n  +1 } \int_{\To^3} \overline{f(x)} \Psi^{ (n + 1 )}(x_1 , \cdots, x_n) d x  \\
		[  	  a^*(f) \Psi  ]^{(n)}  (x_1 , \cdots, x_n )
		& \  =  \ 
		\frac{1}{\sqrt  n}
		\sum_{ i = 1}^n (- 1 )^i  f(x_i ) \Psi^{ ( n -1 )} ( x_1, \cdots, \hat x_{i} , \cdots, x_n )  \ . 
	\end{align}
	where $\hat x_i$ is ommitted, and $n \geq 0$. 
	In particular, denoting by  $\{ A,B \}= AB +BA$  the anticommutator,
	they satisfy the Canonical Anticommutation Relations (CAR)
	\begin{align}
		\{   a(f), a^*(g) 			\}  = \< f,g \>_{L^2(\To^3)}  \ , 
		\qquad 
		\{  a^\flat (f), a^\flat (g)   \} =  0 \  ,
	\end{align}
	where $f,g\in L^2(\To^3)$, and $a^\flat(f)$ stands for either $a(f)$ or $a^*(f)$.  It is well-known that  from these relations   both operators
	become bounded and satisfy 
	\begin{equation}
		\| a (f) \| 	=		 \| a^* (f) \|  =  \| f \|_{L^2 }  \  .
	\end{equation}
	We now pass from the position to the momentum representation. 
	To this end, we recall $e_k(x)$ corresponds to the plane-wave basis (see \eqref{plane-wave}), and we define  for any $ k \in \Z^3$
	\begin{equation}
		a_k = a (	 e_k 	)  \  ,  \qquad a_k^*    = a^* (e_k) \ . 
	\end{equation}
	The 
	CAR takes  now reads$\{  a_k , a_\ell^*  \} = \delta_{k , \ell}$
	and $\{  a_k^\flat, a_\ell^\flat\}=0$.
	Note    also   $\| a_k \| = \| a_k^* \|  =1 \ . $
	We   let  $\mathcal N = \sum_{ k \in \Z^3} a_k^* a_k$
	denote  the standard number operator.

	The original Hilbert space $\mathscr H = \mathscr H_{B} \otimes \mathscr H _{F}$
	is then  naturally embedded into 
	$\mathscr H _{B} \otimes \mathscr F $ via the inclusion map. 
	The many-body 
	operator   \eqref{eq:H}  can now be expressed
	as the restriction  $ H = \mathcal H  \upharpoonright \mathscr H  $
	of  the Hamiltonian  
	\begin{equation}
		\label{H}
		\mathcal H = 	 
		h \otimes \1 
		+ \1 \otimes 
		\sum_{ k \in \Z^3}   k^2   \, a_k^* a_k 
		+ 
		\lambda 
		\sum_{1 \leq i \leq N }
		\sum_{ k ,\ell \in \Z^3 } 
		\widehat V(	 \ell - k 		) e^{  - i (\ell - k  )  x_i}  \otimes a_\ell^*  a_{k  }  
	\end{equation}
	defined   on $\mathscr H_B \otimes \F  $.

	\subsection{The particle-hole transformation}
	\label{section:particle-hole}
	In order to extract the relevant terms  for the eigenvalues $\mu_n(H)$
	we introduce 
	a change of variables
	which is referred to 
	as the 
	particle-hole transformation.
	This correspond to the unitary map 
	\begin{align}
		\mathscr R : \mathscr F \rightarrow \mathscr F
	\end{align}
	defined 
	via  its action on creation and annihilation operators
	as 
	\begin{align}
		\mathscr R^* a_k \mathscr R = \chi^\perp(k) a_k + \chi(k) a_k^*
		\qquad 
		\t{and}
		\qquad 
		\mathscr R^* a_k^*  \mathscr R = \chi^\perp(k) a^*_k + \chi(k) a_k 
	\end{align}
	and set the image of the Fermi ball
	as the new vacuum $\mathscr R \Omega_F    \equiv  
	\Omega $. 
	Here and in the sequel we use the notation for the 
	characteristic functions for $ k \in \Z^3 $
	\begin{equation}
		\chi(k)  \equiv  \1 ( | k | \leq k_F )
		\qquad
		\t{and}
		\qquad
		\chi^\perp(k)    \equiv   \1 ( | k | > k_F ) \ . 
	\end{equation}
	The map $\mathscr R $ is trivially extended to 
	$ \mathscr  H_{B } \otimes \F$ by tensoring with the identity. 
	Finally, we introduce the following 
	new set of creation and annihilation operators
	for \textit{particles} and \textit{holes}, respectively 
	\begin{align}\label{def:bk:ck}
		b_k \equiv  \chi^\perp(k) a_k			
		\qquad
		\t{and}
		\qquad 
		c_k  \equiv \chi(k) a_k. 
	\end{align}
	They satisfy the following version of the CAR
	\begin{align}
		\{ b_k ,b_\ell^* \} = \,  \chi^\perp (k) \,  \delta_{ k ,\ell}  \,      \ , 
		\qquad 
		\{	 c_k, c_\ell^*		\} =   \, \chi (k)  \,  \delta_{ k ,\ell}  
		\quad 
		\t{and}
		\quad 
		\{ 	 b_k^\flat, c_\ell^\flat\} =  0  
	\end{align}
	for all $ k , \ell \in \Z^3$. 
	The main advantage of introducing the
	new  set  of operators 
	$\{ b_k \}$ and $\{ c_\ell \}$ 
	is that they carry all   the information 
	regarding momentum restrictions.

	Let us   now apply the particle-hole transformation
	to   the Hamiltonian $ \mathcal H $. 
	A straightforward calculation yields
	\begin{align}
		\label{conjugate}
		\mathscr R^* \mathcal H  \mathscr R 
		= 
		E_F + \lambda N M  \hat V_0  +\mathbb H  \ .
	\end{align}
	Here the first term corresponds to the kinetic energy of the Fermi ball, 
	whereas the second term corresponds to an additional energy shift 
	due to its interaction with the bosons (a constant term due to translation invariance).
	On the other hand, the operator $\mathbb H$
	corresponds to the Hamiltonian
	describing the Bose gas and its interactions with the \textit{excitations} of the Fermi ball. 
	It takes the form
	\begin{equation}
		\label{def:hamiltonian:exc}
		\H  = h \otimes \1 + \1 \otimes \T + \lambda \V  \  . 
	\end{equation}
	The first term
	is      the Bose Hamiltonian \eqref{eq:HB}. The second term describes the kinetic energy of the excitations, and is given by 
	\begin{equation}
		\label{eq:T}
		\mathbb T =    
		\sum_{ k\in \Z^3 } 
		\Big(	  k^2 b_k^*b_k - k^2 c_k^* c_k			\Big)	 \  .
	\end{equation}
	The third term describes    interspecies  interactions
	and  admits  the   decomposition 
	\begin{align}
		\V  =  \V_+ + \V_-  +  \V^{\t{diag}}   \ , 
	\end{align}
	where each contribution is given by 
	\begin{align}
		\label{V+}
		\V_+  
		&    \, =  \, 
		\sum_{ i = 1 }^N 
		\sum_{p , k \in \Z^3 }
		\widehat V (k   )
		e^{ - i  k  \cdot x_i}
		\otimes  
		b_{p}^* c_{p + k  }^*  \\
		\label{V-}
		\V_ -  
		&  \, =  \, 
		\sum_{ i = 1 }^N 
		\sum_{p , k \in \Z^3  }
		\widehat V ( k   )
		e^{ +  i  k   \cdot x_i}
		\otimes    c_{p+k }   b_{p } \\
		\V^{\rm diag}  
		&  \, =   \, 
		\sum_{ i = 1 }^N 
		\sum_{\ell  , k \in \Z^3 }
		\widehat V ( \ell - k )
		e^{ - i ( \ell - k  ) \cdot x_i}
		\otimes 
		\Big(	  \,  b_\ell^* b_k
		- 
		\, c_k^*  c_\ell  \Big) \  . 
	\end{align}
	Note $\V_- = ( \V_+)^*$. 
	Here, the notation is suggestive: Namely,
	$\V_+$ creates a collective excitation 
	of particle-hole pairs due to interactions with the Bose system.
	On the other hand $\V_-$ destroys such excitations. 
	The diagonal term $\V^{\rm diag}$
	does not create new particle-hole pairs, 
	and   in our analysis, it turns out as a lower order term. 
	We note the operators $  b_p, c_\ell $
	carry all of the notation regarding summation regions, see \eqref{def:bk:ck}.
	
	After application of  
	the particle-hole transformation, we can    relate  the original  eigenvalues
	$ \mu_n ( H )  
	$ with those of the excitation Hamiltonian $\mathbb H$. 
	To this end, 
	we first introduce the number operators 
	on $\mathscr H _{B } \otimes \F$
	\begin{equation}
		\mathcal  N_{+ }
		\equiv 
		\1 \otimes 
		\frac{1}{2}
		\sum_{ k \in \Z^3 }
		\big(		 b_k^*b_k  +  c_k^*c_k 			\big)
		\qquad 
		\t{and}
		\qquad 
		\mathcal  N_{ -  }
		\equiv 
		\1 \otimes 
		\frac{1}{2}
		\sum_{ k \in \Z^3 }
		\big(		 b_k^*b_k  -  c_k^*c_k 			\big) \ . 
	\end{equation}
	We refer to $\mathcal N _-$ as the \textit{charge operator}
	and its kernel 
	plays a major role in our analysis. It represents the subspace of states
	with equal number of particles and holes. 
	Throughout this article, we will write 
	\begin{equation}
		\K \equiv \ker (\calN_-) \subset \mathscr H_B \otimes \F \ . 
	\end{equation}
	We then observe that $   \K = \mathscr R^* \mathscr H$, that is, it is the image of the original Hilbert space under the particle-hole transformation. 
	This follows from 
	$
	\mathscr R^* \mathcal N    \mathscr R  
	= 
	M +  2 \mathcal N_- 
	$
	and the fact that $\mathcal N \upharpoonright \mathscr H   = M$. 
	Thus, we now have that 
	\begin{equation}
		\label{eigenvalues}
		\mu_n (H) 
		= 
		E_F + \lambda N M  \hat V_0  +
		\mu_n (  \mathbb H    \upharpoonright\mathscr K)  \  , 
	\end{equation}
	for all $ n \geq 1 $. 
	We observe that the restriction to $  \K $
	has the effect of making the  energy $\mathbb T$ non-negative.
	Namely, 
	on $    \K $
	we can add a term proportional to $\mathcal N_-$
	and write 
	\begin{align}\label{eq:rewriting:T}
		\mathbb T & =
		\sum_{ k \in \Z^3} 
		\big| k^2-	k_F^2 	\big|   \, b_k^* b_k + 
		\sum_{k \in \Z^3}
		\big|	k^2 - k_F^2\big|  \, c_k^* c_k  \  .
	\end{align}
	Unless confusion arises, 
	we ease the notation 
	and   write  
	$\mathbb H =  \mathbb H    \upharpoonright  \K $.

	As a concluding remark, 
	let us recall 
	that an advantage of 
	working with creation and annihilation operators
	is the use of \textit{pull-through formulas}.
	In this article, we will mostly use the following version, 
	in which we use the $b$ and $c$ operators, 
	and the   kinetic energy $\mathbb T$
	\begin{align}
		& 	f (	 \mathbb T	)  b_k^* = b_k^* f (\mathbb T  + k^2)	 \ , 
		\qquad 
		& &  b_k 	f (	 \mathbb T	)    =   f (\mathbb T  +  k^2)b_k  	  \ , \\
		& 	f (	 \mathbb T	)  c_k^* = c_k^* f (\mathbb T  - k^2)	  \  ,  
		& & 
		c_k   	f (	 \mathbb T	)   =   f (\mathbb T   - k^2)c_k   	  \  , 
	\end{align}
	for any $ k \in \Z^3 $, 
	and any measurable function $f : \R \rightarrow \R . $
	They follow directly from the CAR and the functional calculus for $\T$.

	\section{Proof of Theorem \ref{thm1} and \ref{cor}}
	\label{sec:proofs}
	
	\subsection{Proof of Theorem \ref{thm1} }
	First, let us  turn to the proof of   Theorem \ref{thm1}.
	As discussed in Section \ref{section:particle-hole}, 
	up to subtraction of constant terms, 
	the eigenvalues of $H$
	are determined 
	by  those of  a  Hamiltonian on Fock space, denoted  by $\mathbb H $.
	The operator $\mathbb H$
	describes  the interactions between the (slow) bosonic subsystem and the (fast) excitations of the Fermi ball.
	In particular, the asymptotics of its spectrum 
	can be analyzed in the framework of the theory described in Subsection \ref{subsec:eff}.
	This   argument is made rigorous in Section \ref{sec:effective}. 
	It turns out that the resulting effective theory
	is described by a
	$k_F$-dependent     Hamiltonian 
	$h_{\kf}^\eff$ on the bosonic space $\mathscr H_{\rm B}$.

	In order to state its definition, we must introduce the following 
	\textit{lune sets}
	\begin{equation}
		\label{lune}
		L(k) \equiv \{ p \in \Z^3 : |p| > k_F \,  \t{ and }  \, |p - k| \leq k_F \}  \ , \qquad k \in \Z^3 \  . 
	\end{equation}
	In particular, throughout this article we will consider various   sums
	suported on these sets. 
	These sums give rise to important coefficients, which enter the effective Hamiltonian.
	Here, the relevant one is given by 
	\begin{align}
		\label{eq:D}
		D( k , \kf)
		\equiv 
		\sum_{ p \in L(k)} 
		\frac{1}{p^2 - (p - k)^2}   \ , \qquad  k \in \Z^3 \ . 
	\end{align} 
	Let us comment that the same resolvent sums 
	appear in the analysis of the ground state energy of interacting Fermi gases  on the torus $\To^3$ in the mean-field limit; see \cite{HPR,Christiansen2023,
		Christiansen2023-1,Christiansen2023-2,
		Christiansen2024,Benedikter1,
		Benedikter2, Benedikter3}.
	In particular, we shall borrow certain estimates (and, prove new ones) which we state in Section \ref{sec:prelims}. 
	Now,  the effective Hamiltonian   takes the form 
	\begin{align}
		\label{heff}
		h_{k_F}^{ \rm eff }
		\equiv  \sum_{ i= 1 }^N( -\Delta_i) 
		+ 
		\frac{1}{N}
		\sum_{ 1 \leq  i < j \leq N }
		( W - W_{k_F})(x_i - x_j )
	\end{align}
	which is defined in terms of the effective potential defined on $\To^3$
	via the inverse Fourier formula
	\begin{equation}\label{eq:def:eff:kF:potential}
		W_{k_F}(  x) 
		\equiv   \frac{1}{2 \pi k_F}
		\sum_{k\in \Z^3} 
		e^{ i k  x }
		\, 	| \widehat V (k)|^2  	 \, D (k,k_F)   \ . 
	\end{equation} 
	
	In a second step, we compare the 
	eigenvalues of  $h_{\kf}^\eff$ with those of   $h^\eff$ introduced in \eqref{h:eff}.
	For this we use the   asymptotics for the coefficients  as $\kf \rightarrow \infty$
	\begin{equation}
		D( k ,\kf) = 2 \pi \kf + o (\kf) \ , \qquad \forall k\in \Z^3 /\{	0	\}  \ . 
	\end{equation}
	In fact, we use a more precise version stated in Section \ref{sec:prelims}. 
	Note that, combined with \eqref{eq:def:eff:kF:potential}, 
	this will give rise to the convolution structure $V*V$ of the effective potential.
	It is, however, necessary to isolate the zero-Fourier modes
	which give rise to additional constant terms in the energy expansion for $\mu_n(H)$.

	%
	%
	%
	\medskip 
	Before we turn to the proof
	let us introduce the following 
	quantity 
	\begin{equation}
		\label{eq:q}
		\Q_p  
		\equiv  
		1+ 
		\|  W\|_{L^p }^{2 p/ (2p-3) 	}
		+  
		\| 	 V 	 \|_{H^4 }^2    \ . 
	\end{equation}
	We also recall  $\K = \ker (\calN_-)$. 
	
	\begin{proof}[Proof of Theorem \ref{thm1}]
		Throughout this proof we fix $k_F \geq 1 $ and $N \geq 1 $.
		\medskip 
		
		\noindent  {\underline{First step}}. Recall that the effect of the  conjugation with the particle-hole transformation
		\eqref{eigenvalues}  is that we have 
		\begin{equation}
			\label{eq:eigen1}
			\mu_n( H )
			= 	E_F  +\lambda NM 
			\hat V_0 
			+ 
			\mu_n ( \H ) \  , \qquad n \geq 1  \ ,
		\end{equation}
		where $\mathbb H  = \mathbb H \upharpoonright\mathscr K $ denotes the operator \eqref{def:hamiltonian:exc}
		restricted to the subspace  $\K  $. 
		
		In Proposition~\ref{prop2}, we provide an upper bound for the eigenvalues of $\mathbb {H}$, while a corresponding lower bound is established in Proposition~\ref{prop1}.
		
		These estimates  imply that we can find   a constant $C>0$  
		such  that 
		for all $1 \leq n \leq  N$ 
		\begin{equation}
			\label{eq4.2}
			\Big| 	 \mu_n ( \H ) - \mu_n (	  h^{\rm eff}_{ k_F } ) 
			+  \tfrac12    W_{\kf}( 0)
			\Big| 
			\leq     C \Q_p^2  N^2   \,  (\ln \kf)^{5/3} \kf^{-1/3 } 
		\end{equation}
		provided
		$k_F  (\ln k_F)^{-5}  \geq C (n  N  \| V\|_{H^4}^2  )^3$. 
		Here $h_{\kf}^\eff$ is the effective Hamiltonian introduced in  \eqref{heff}.
		Thus, thanks to \eqref{eq4.2} it is sufficient  to analyze the difference between 
		$h_{\kf}^\eff$ and $h^\eff$, as well as the leading order term of $\tfrac12 W_{\kf}(0).$

		\medskip 
		\noindent 	\underline{Second step}.
		Let us now compare the eigenvalues of  $h_{\kf}^\eff$ and $h^\eff $.
		To this end, 
		we will consider the Fourier representation for $x \in \To^3$
		\begin{equation}
			(  V *V)  (x)
			= 
			\sum_{ k \in \Z^3 } 
			e^{ik   x } \, 
			| \widehat V(k)|^2     \ . 
		\end{equation}
		Observe that $L(0)=\varnothing$ and thus $D( 0 , \kf)=0$. 
		Hence, we can write  for $x \in \To^3$
		\begin{align}
			\notag
			W_{k_F} (x) 
			&    = 
			\sum_{  k \neq 0 }
			e^{i k x}
			|\widehat V(k)|^2 
			\frac{	    D  (k , k_F) }{2\pi k_F} \\
			\notag
			& 	   = 
			\sum_{  k \neq 0 }
			e^{i k x}
			|\widehat V(k)|^2 
			+
			\sum_{  k \neq 0 }
			e^{i k x}
			|\widehat V(k)|^2 
			\Big(
			\frac{	   D  (k , k_F) }{2\pi k_F} 
			-1 
			\Big) \\
			& = 
			(   V*V)  (x)
			- | \widehat V (0)|^2 
			+ 
			\sum_{  k \neq 0 }
			e^{i k x}
			|\widehat V(k)|^2 
			\Big(
			\frac{	    D (k , k_F) }{2\pi k_F} 
			-1 
			\Big)
			\label{eq:w}
		\end{align}
		The very last sum  sum over $ k \in \Z^3 / \{ 0 \}$
		constitutes the main error term.
		The summation may be split into two regions 
		$0 < | k | < 2 k_F$  
		and
		$ | k | \geq 2  k_F$.
		For the first region, we use \eqref{eq:sum} from Lemma \ref{prop:sums}, whereas
		for the second region we can use 
		\eqref{eq:S1:k:large} from Lemma \ref{prop:sums}.
		We obtain the inequalities 
		\begin{align}
			\bigg|
			\sum_{  0 < | k |< 2 \kf  }
			e^{i k x}
			|\widehat V(k)|^2 
			\Big(
			\frac{	    D (k , k_F) }{2\pi k_F} 
			-1 
			\Big)
			\bigg|
			&  \ 	\leq  \ 
			C 	
			\sum_{  0 < | k |< 2 \kf  } 
			|\widehat V(k)|^2 
			|k|^{ 4  } (  \ln  k_F)^{ 5/3 } k_F^{	 -1/3 }  \\
			\bigg|
			\sum_{   | k | \geq  2 \kf  }
			e^{i k x}
			|\widehat V(k)|^2 
			\Big(
			\frac{	    D (k , k_F) }{2\pi k_F} 
			-1 
			\Big)
			\bigg|
			&  \    \leq  \ 
			C 	
			\sum_{   | k | \geq  2 \kf  }
			|\widehat V(k)|^2  
			\     \leq  \ 
			\frac{C}{\kf^4 }
			\sum_{   k\in \Z^3  }
			|\widehat V(k)|^2   |k|^4  \ . 
		\end{align}
		These  two estimates combined with \eqref{eq:w}
		imply the  $L^\infty$ estimate
		\begin{align}\label{supremum:difference}
			\|	 W_{k_F} - (V*V) + | \widehat V_0	|^2	\|_{L^\infty (\To^d )} 
			\leq C (\ln \kf)^{5/3} \kf^{-1/3 } \|  V  \|_{H^2}^2 \ . 
		\end{align}
		As a   consequence,
		summing over all particle possibilities $x_i  -x_j$, 
		we have in the sense of quadratic forms 
		on $\mathscr H _{B}$ that
		\begin{align}
			\label{eq1}
			\pm \big(
			h^{ \eff}_{ \kf} 	-	  		h^{\rm eff}    
			+   \tfrac12 ( N -1 ) 	 | \widehat V_0|^2
			\big) 
			\leq C N  (\ln \kf)^{5/3} \kf^{-1/3 } \|  V  \|_{H^2}^2  
		\end{align}
		and thus, via  the min-max principle \cite[Theorem 4.14]{Teschl2014}, we obtain 
		\begin{equation}
			\label{eq4.1}
			\Big|	
			\mu_n (	  h^{\rm eff}_{ k_F} 	)		 - 
			\mu_n (  h^{\rm eff} )	   
			+   \tfrac12 ( N -1 )  | \widehat V_0	|^2	\Big|   
			\leq  C N  (\ln \kf)^{5/3} \kf^{-1/3 } \|  V  \|_{H^2}^2  
			\qquad \forall n \geq 1 \ . 
		\end{equation} 
		This controls the difference between the eigenvalues of $h^\eff$ and $h_{\kf}^\eff$. 
		
		Finally	the bound  \eqref{supremum:difference}
		at $x = 0$ also implies 
		\begin{equation}
			W_{\kf}(0) = 
			\int_{\To^3} |V(x)|^2 \d x   - | \widehat V_0|^2 + O (	  (\ln \kf)^{5/3} \kf^{-1/3} \|  V  \|_{H^2}^2    ) \ . 
		\end{equation}
		Thus, we combine  \eqref{eq:eigen1},    \eqref{eq4.2}  and \eqref{eq4.1}  to finish the proof. 
	\end{proof}
	
	The remainder of the article is essentially devoted to establishing the upper and lower bounds for $\mu_n(\mathbb{H})$ given in Propositions~\ref{prop2} and~\ref{prop1}, respectively. The proofs of these results rely on a number of technical lemmas, including suitable sum estimates and operator bounds. These auxiliary results are stated in Section~\ref{sec:prelims}. Before proceeding with this program, however, we provide the proof of Theorem~\ref{cor}.
	
	\subsection{Proof of Theorem \ref{cor}} 
	
	We note that by Theorem \ref{thm1} and \eqref{eq:H:n:convergence}, the energy-per-boson is given by 
	\begin{align}\label{eq:proof:formula:E}
		\mathcal E(g) = 	\liminf_{N\to \infty}    \frac1N \inf \sigma(h^{\rm eff}|_g),
	\end{align}
	and thus the analysis can be reduced
	to studying the spectral properties 
	of the bosonic Hamiltonian $h^\eff$ and its dependence on $g \in \R$; recall its definition in \eqref{eq:heff}. 
	Let us immediately note that $\mathcal E(g) \leq \mathcal E(0) < \infty$ thanks to \eqref{eq:stable:h:n:energies}.

	For the proof, we show that whenever $w_g = w - g^2 (   v*v) $
	is negative at $x=0$, the energy-per-particle collapses to $ - \infty$ as $N\to \infty$. 
	Using a monotonicity argument, we prove the existence of the \textit{smallest} $g_c \ge 0$ for which this collapse occurs.

	\begin{proof}[Proof of Theorem \ref{cor}] 
		Thanks to \eqref{eq:proof:formula:E}, it
		suffices to look at $\inf \sigma (h^\eff|_g)$
		and its dependence on $g$. 
		
		(1) Since $h^{\eff}|_g$ depends only on $g^2$, we immediately obtain $\mathcal E(g) = \mathcal E(-g)$.
		Next, observe that the map 
		$ [ 0 , \infty )  \ni g \mapsto  \mathcal E (g)  = \liminf \tfrac1N \inf \sigma ( h^\eff |_g )$
		is monotone decreasing.
		To see this,  let $ g_2 > g_1>0$
		and use
		$v*v \geq 0 $:
		\begin{align}
			(g_2^2-g_1^2) \langle \Psi , \textstyle \sum_{1\le i < j \le N} v\ast v(N(x_i-x_j)) \Psi \rangle \ge 0
		\end{align}			
		for all
		$  \Psi \in \mathscr H_B  $. 
		Thus,   $ h^{\eff} |_{ g = g_1} \geq  h^{\eff} |_{ g = g_2}$
		and we conclude that 
		$ \mathcal E (g_1) \ge  \mathcal E (g_2)$ for $ g_2 \ge g_1 \ge 0 $.
		
		(2) Now, for the proof of the main statement within Theorem \ref{cor},  	we set
		\begin{equation}
			g_c \equiv \inf \big\{ g \geq 0 :  
			\liminf_{N \rightarrow \infty} N^{-1 } \inf \sigma (h^\eff  |_{g  }) = - \infty		\big\}  \in [0 , \infty] \  . 
		\end{equation} 
		Let us first show that     $g_c < \infty$.
		Indeed, since $ (v * v ) (0) = \|v \|_{L^2}^2>0$
		we have $w_g(0)< 0 $ for all $g >  g_* \equiv  \frac{ w(0)}{  (v*v)(0) } $. We now claim that
		\begin{equation}
			\label{w*}
			\lim_{N\to \infty}   N^{-1} \inf \sigma(h^\eff|_g ) = - \infty  
			\qquad \t{ for }
			g > g_* \ .
		\end{equation}
		Fix $g > g_*$, and note that by continuity of $w_{g}$,
		there exist $c_0,r_0>0$ so that 
		$
		w_{g}(x) \leq - c_0 < 0   
		$
		on the ball $ B_0 \equiv B(0, r_0)$. 
		We now construct an appropriate trial state. 
		Take a radial symmetric function $ \psi  \in C_0^2 (\R^3)$    so  that 
		$ \supp (\psi) 
		\subset B  ( 0, \frac{r_0}{2} )
		$ and $\|  \psi\|_{L^2}=1 $. 
		In particular, $
		\supp (
		|\psi|^2 *  |\psi|^2 
		) \, \subset  \,
		B_0 $.
		We define 
		$
		\psi_N(x)   = N^{3/2} \psi ( N x )    
		$ 
		and set  $\Psi_N = \psi_N^{\otimes  N }$ so that
		\begin{equation}
			\< \Psi_N   ,  \textstyle   \sum_{ 1 \leq i \leq N } (- \Delta_i) \Psi_N \> 
			\ 	  =   \ 
			N^3 \|  \nabla \psi  \|_{L^2}^2	\, .
		\end{equation}
		On the other hand, using radial symmetry, we find for the potential energy 
		\begin{align} 
			\< \Psi_N  , \textstyle \frac{1}{N} \sum_{1 \leq i < j \leq N } N^3 
			w_g (N (x_i - x_j))					\Psi_N \> 
			=   N^4 \textstyle
			\int_{B_0 } ( |\psi  |^2 \ast |\psi 	|^2)  w_g \, dx  
			\leq 
			-  c_0  \, N^4 \ . 
		\end{align}
		Thus
		$
		\<\Psi_N , h^{\rm eff}  \Psi_N\>
		\leq  - N^4  (	  c_0  -  N^{-1} \|  \nabla \psi  \|^2       ) 	  $ 
		from which we conclude  \eqref{w*} for $g>g_*$.
		Therefore, $g_c  \leq  g_* < \infty$.
		
		\medskip 
		
		(3) From the previous argument we already  conclude 
		$g_* \geq g_c$. 
		On the other hand, for  $  g   \in \{ 0 \} \cup ( 0 ,g_0) $ with $g_0 = \inf \{ g \ge 0 : w_g \ge 0 \} \ge 0$, the statement $\mathcal E(g) = 4 \pi \mathfrak a(g)$ is a direct consequence of \eqref{eq:proof:formula:E} and \eqref{eq:stable:h:n:energies}. 	
	\end{proof}

	\section{Auxiliary estimates}
	\label{sec:prelims}
	In this section we prove 	 some necessary estimates
	for the proof of the upper and lower bounds. 
	First, we state estimates for sums involving the free resolvent for particle-hole excitations.
	Second, we prove some relevant operator inequalities.

	\subsection{Sum estimates}\label{sec:sum}
	Here and in the sequel, 
	we will consider various asymptotics of sums which are supported on the lune sets $L(k)$,  
	defined in \eqref{lune}. 
	The coefficients $D(k, \kf)$ introduced in  \eqref{eq:D}
	are the main example.
	However,  for our analysis it will be necessary  to   generalize  these sums. 
	Namely, 
	for any $\alpha \geq 0 $ we let 
	\begin{align}\label{eq:def:S:alpha:k}
		\mathcal  D_\alpha(k,k_F) 
		\equiv
		\sum_{  p \in L(k)   } \frac{1}{( p^2 - ( p - k)^2  )^{\alpha}}   \ , \qquad k \in \Z^3 \ .
	\end{align}
	Let us note that, by definition, $L(0) $ is the empty set
	and thus   $\mathcal D_\alpha ( 0 , k_F)=0$.
	
	\medskip
	
	Before we turn to these estimates, we
	observe the following useful fact: 
	for all  non-zero $ k \in \Z^3$, we have
	$p^2  - (p - k)^2 \geq 1 $ for all $ p \in L(k)$. 
	In particular, thanks to the crude geometric estimate $|L(k)| \leq C \kf^{3 } $
	one has the bound $\D_\alpha (k ,\kf) \leq C \kf^3$. 
	
	\medskip
	
	The following lemma contains 
	the main estimates that we use to analyze the coefficients $\D_\alpha$. 
	The inequalities for $\D_1 =D $ will be borrowed from \cite{Christiansen2023}.
	The ones for $\D_2$ are---to our knowledge---new, 
	but can be obtained from the methods   developed in  \cite{Christiansen2023}.

	\begin{lemma}
		\label{prop:sums}
		There exists a constant $C>0$ such that the following holds
		for all $\kf \geq 1 $. 
		
		\begin{enumerate}
			\item  For all  $ 0< | k | < 2  k_F $ 
			\begin{equation}
				\label{eq:sum}
				\Big|     \D_1 (k, \kf)
				- 2\pi  \kf 
				\Big|   
				\leq C  |k|^{ 4  } (  \ln  \kf )^{ 5/3 } \kf^{	 2/3 } 
			\end{equation}

			\item 
			For all  
			$ 0 < |k | < 2  k_F$ 
			\begin{align}
				\label{eq:bound:S1}
				\D_1(k,k_F)
				&   \ 	\leq  \ 
				C  \, k_F  \, \big(1 +	 (  \ln  k_F)^\frac{5}{3}  k_F^{-	\frac{1}{3}}		\,  |k|^{ 4 }  	 \big)   \\
				\D_2(k,k_F)
				& 	 \ \leq  \ 
				C  \, k_F^\frac{2}{3} \,    (  \ln  k_F)^\frac{2}{3}  \, |k|^{ 4 }   	\label{eq:bound:S2}  \ . 
			\end{align}
			
			\item 
			For all  
			$| k | \geq 2 k_F $ 
			\begin{equation}\label{eq:S1:k:large}
				\D _1(k,k_F) \ \leq \  C\,  k_F^3  |k|^{-2 } \ . 
			\end{equation}
			
		\end{enumerate}	
	\end{lemma}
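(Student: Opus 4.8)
The plan is to treat the sum $\D_\alpha(k,\kf) = \sum_{p \in L(k)} (p^2-(p-k)^2)^{-\alpha}$ by converting the lattice sum into an integral over the continuum lune $\mathcal L(k) = \{ p \in \RR^3 : |p| > \kf, \ |p-k| \le \kf\}$ and carefully tracking the discretization error. Writing $p^2-(p-k)^2 = 2p\cdot k - k^2$, the denominator is a linear function of $p$; on the lune $\mathcal L(k)$ it ranges between $1$ (near the inner sphere) and $O(\kf |k|)$ (near the outer edge). For the main asymptotics \eqref{eq:sum} with $\alpha=1$, I would first establish the continuum estimate $\int_{\mathcal L(k)} \frac{dp}{2p\cdot k - k^2} = 2\pi\kf + O(|k|^2 \kf^{2/3}(\ln\kf)^{5/3})$ or similar — this is where the identity $D(k,\kf) = 2\pi\kf + o(\kf)$ comes from — by using a change of variables aligning one axis with $k$, parametrizing the lune by the ``height'' $t = \frac{2p\cdot k - k^2}{2|k|}$ (so $t$ ranges over an interval of length $\sim |k|$), and integrating out the transverse cross-sections, which are (differences of) disks. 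This produces the leading term $\sim \kf \cdot \frac{1}{|k|}\int_0^{|k|}(\text{cross-section area}) \frac{dt}{t}$, and the geometry of the thin lune gives the $2\pi\kf$. The subleading corrections and the lattice-point-counting error (Gauss-circle-type estimates on spherical shells) are what generate the $(\ln\kf)^{5/3}\kf^{2/3}$ loss; since these are exactly the estimates in \cite{Christiansen2023}, I would invoke that reference for $\D_1$ and only need to redo the argument for $\D_2$.

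For the bounds in part (2), inequality \eqref{eq:bound:S1} follows immediately from \eqref{eq:sum} by the triangle inequality (adding the trivial $2\pi\kf$ to the error and pulling out a factor $\kf$). For \eqref{eq:bound:S2}, the new estimate, I would split $L(k)$ according to the size of the denominator: on the ``bulk'' region where $p^2-(p-k)^2 \ge \kf^{2/3}$ (say), one has $\D_2 \lesssim \kf^{-4/3}\D_1 \cdot (\text{something})$, more precisely the bulk contributes $\lesssim \kf^{-2/3}\cdot(\text{count})$, which is controlled using the $\D_1$ bound; on the ``thin shell'' region where $1 \le p^2-(p-k)^2 < \kf^{2/3}$, I would count lattice points layer by layer — the $j$-th layer $\{j \le 2p\cdot k - k^2 < j+1\}$ is a slab of thickness $O(1/|k|)$ intersected with the Fermi sphere shell, containing $O(\kf^2/|k| + \kf)$ points by a surface-area estimate — and sum $\sum_{j=1}^{\kf^{2/3}} \frac{1}{j^2}\cdot(\kf^2/|k|)$, which converges and gives the stated $\kf^{2/3}(\ln\kf)^{2/3}|k|^4$ after also accounting for the region near the edge of the lune where the slab is only partially filled (this edge effect is the source of the extra $|k|^4$ and the logarithm). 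The details here mirror the layer-cake decomposition used for $\D_1$ in \cite{Christiansen2023}, so I would present it as an adaptation.

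For part (3), the regime $|k| \ge 2\kf$ is much easier: when $|p| > \kf$ and $|p-k| \le \kf$ one forces $|p| \ge |k| - \kf \ge |k|/2$, so $p^2 - (p-k)^2 = 2p\cdot k - k^2 \ge \dots$ — more carefully, since $p \in L(k)$ implies $p$ lies in a ball of radius $\kf$ around $k$ and outside the ball of radius $\kf$ around the origin, and $|k| \ge 2\kf$, the quantity $2p\cdot k - k^2$ is comparable to $|k|^2$ (one checks $p\cdot k \ge \tfrac12|k|^2 - O(\kf|k|) \ge \tfrac14 |k|^2$ using $|p-k|\le \kf \le |k|/2$), so $\D_1(k,\kf) \le |L(k)| \cdot |k|^{-2} \cdot 4 \le C\kf^3|k|^{-2}$ using the crude volume bound $|L(k)| \le C\kf^3$ noted just before the lemma. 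This gives \eqref{eq:S1:k:large} directly.

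The main obstacle is the estimate \eqref{eq:bound:S2} for $\D_2$: unlike $\D_1$, there is no clean ``$2\pi\kf$'' main term to anchor it, and the near-inner-sphere region where the denominator is $O(1)$ contributes non-negligibly when squared, so one genuinely needs a sharp layer-by-layer lattice count together with careful handling of the lune's edge geometry to avoid losing powers of $\kf$ or $|k|$. Getting the exponents $\kf^{2/3}(\ln\kf)^{2/3}$ rather than something worse requires the same refined Gauss-circle-problem input (with its $(\ln)$-losses) that \cite{Christiansen2023} uses, applied now to the squared resolvent; this is the step I expect to require the most care.
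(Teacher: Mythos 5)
Your handling of part (1), of \eqref{eq:bound:S1}, and of the overall strategy (slicing the lune into lattice planes perpendicular to $k$ and using Gauss--circle-type counts) matches the paper, which likewise cites \cite{Christiansen2023} for \eqref{eq:sum} and \eqref{eq:S1:k:large} and deduces \eqref{eq:bound:S1} from \eqref{eq:sum}. The gap is in the one estimate the paper actually proves, namely \eqref{eq:bound:S2}, and it lies in your per-layer count. Writing $p = t\hat k + q$ with $q\perp k$, the cross-section of $L(k)$ in the plane where $j \equiv p^2-(p-k)^2$ is fixed is the annulus $\kf^2 - t^2 < |q|^2 \le \kf^2 - (t-|k|)^2$, whose area is $\pi\big(t^2-(t-|k|)^2\big) = \pi(2t|k|-k^2) = \pi j$ (and at most $\pi j$ when the inner constraint is inactive). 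So the correct count for the $j$-th layer is $O(j\,\ell + E)$, where $\ell = |k|^{-1}\gcd(k_1,k_2,k_3)$ is the density of the two-dimensional sublattice in that plane and $E = O(|k|^{11/3}\kf^{2/3}\ln\kf)$ is the circle-problem error for the two boundary circles of radius $\lesssim \kf$. It is \emph{not} $O(\kf^2/|k|+\kf)$: that is the count for a full spherical-shell slab and overestimates the small-$j$ layers (tiny annuli of area $\pi j$ near the circle where the two spheres intersect) by a factor of order $\kf^2$. With the correct count one gets $\sum_j j^{-2}(j\ell+E) = O(\ell\sum_j j^{-1}) + O(E) = O(\ln\kf/|k|) + O(|k|^{11/3}\kf^{2/3}\ln\kf)$, which is \eqref{eq:bound:S2}; with your count the thin-shell sum is $\sum_{j\ge 1} j^{-2}\,\kf^2/|k| = O(\kf^2/|k|)$, which for $|k|=O(1)$ is far larger than $\kf^{2/3}(\ln\kf)^{2/3}|k|^4$, so your computation does not yield the stated bound. (Your bulk region, where the denominator exceeds $\kf^{2/3}$, is fine via $\D_2 \le \kf^{-2/3}\D_1$.) The paper sidesteps the counting entirely by quoting the summation formula of \cite[Proposition B.20]{Christiansen2023}, in which the per-layer weight $2\pi\lambda\ell$ and the $|k|^{11/3}\kf^{2/3}\ln\kf$ error are already built in, and then evaluates the three resulting one-dimensional sums for $f(\lambda)=\lambda^{-2}$ by elementary integral comparison.

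A smaller but still real gap: your self-contained argument for \eqref{eq:S1:k:large} claims $2p\cdot k - k^2 \gtrsim |k|^2$ on $L(k)$ for $|k|\ge 2\kf$. Your own estimate gives only $p\cdot k \ge k^2 - \kf|k| \ge k^2/2$, hence $2p\cdot k - k^2 \ge 0$, and for $|k|$ slightly above $2\kf$ the lune contains lattice points near the tangency region of the two balls where the denominator is $O(1)$. The argument works for $|k|\ge 3\kf$, say, but in the range $2\kf \le |k| \lesssim 3\kf$, where the target is $\approx C\kf$, one again needs a layer decomposition; the paper simply cites \cite[Proposition B.9]{Christiansen2023} here.
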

	
	\begin{proof} 
		The   inequality in (1) is taken from  \cite[Proposition B.21]{Christiansen2023}.  
		The first inequality in (2) follows directly from the previous one.
		The proof of the  second inequality in (2) is more involved and postponed to Section \ref{section:sums}. 
		The   inequality in (3)  is taken from \cite[Proposition B.9]{Christiansen2023}. 
	\end{proof} 
	
	
	%

	Our next goal is to combine the previous sum estimates
	with decay properties of the Fourier coefficients $\widehat  V (k)$.
	To this end, let us  introduce the following notation 
	\begin{align}
		\label{def:S}
		\mathcal S_{\alpha,\beta}(k_F) &  \equiv  
		\textstyle 
		\sum_{ k\in \Z^3 } |\widehat V(k)|^2
		\<  k  \>^{2 \beta }
		\D_\alpha(k, k_F) 
	\end{align}
	for $ k_F \geq 1 $ and $\alpha ,\beta \geq 0$.
	Here,   $\<k\>   \equiv    (1 + |k|^2)^{1/2}$. 
	
	\begin{lemma}\label{lemmaS1} 
		There is   $C>0$  such that for all $k_F \geq  1 $ and $\beta \ge 0$
		\begin{align}
			\label{S1}
			\mathcal S_{1,\beta}(k_F) &
			\, \leq  \,  C	  \,	 \|	 V  \|_{H^{2+\beta }}^2  \, k_F   \\
			\label{S2}
			\mathcal S_{2,\beta}(k_F) & 
			\, \leq   \, 
			C    \, 
			\|	 V  	\|_{H^{2+\beta}}^2
			\, 	k_F^{2/3}			(	 \ln k_F)^{5/3} \  .
		\end{align}
	\end{lemma}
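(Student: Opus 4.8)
The plan is to split the momentum sum defining $\mathcal S_{\alpha,\beta}(k_F)$ into the two regimes $0 < |k| < 2k_F$ and $|k| \ge 2k_F$ (the $k=0$ term vanishes since $\mathcal D_\alpha(0,k_F)=0$), and to apply the respective estimates from Lemma~\ref{prop:sums} in each regime. In the regime $|k| \ge 2k_F$ we only need a crude bound: for $\alpha = 1$ we have $\mathcal D_1(k,k_F) \le C k_F^3 |k|^{-2}$ by \eqref{eq:S1:k:large}, and since $k_F^3 |k|^{-2} \le C k_F$ whenever $|k| \ge k_F$, the tail contributes at most $C k_F \sum_{|k|\ge 2k_F} |\widehat V(k)|^2 \<k\>^{2\beta} \le C k_F \|V\|_{H^{2+\beta}}^2$ (in fact with room to spare). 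For $\alpha = 2$, one notes $\mathcal D_2 \le \mathcal D_1$ termwise (each summand is $\le 1$ since $p^2-(p-k)^2 \ge 1$), or more simply uses the universal bound $\mathcal D_\alpha(k,k_F) \le C k_F^3$, and again the extra decay of $\widehat V$ absorbs the polynomial factor; since $k_F^3 \le k_F^{2/3}\cdot k_F^{7/3}$ this is controlled provided we extract enough powers of $|k|$ from the $H^{2+\beta}$ norm, which is harmless.

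The main regime is $0 < |k| < 2k_F$. For $\mathcal S_{1,\beta}$, I would insert the bound \eqref{eq:bound:S1}, namely $\mathcal D_1(k,k_F) \le C k_F(1 + (\ln k_F)^{5/3} k_F^{-1/3}|k|^4)$. This yields
\begin{align*}
\sum_{0<|k|<2k_F} |\widehat V(k)|^2 \<k\>^{2\beta} \mathcal D_1(k,k_F)
\le C k_F \sum_{k} |\widehat V(k)|^2 \<k\>^{2\beta}
+ C k_F^{2/3}(\ln k_F)^{5/3} \sum_k |\widehat V(k)|^2 \<k\>^{2\beta+4}.
\end{align*}
The first sum is $\le \|V\|_{H^\beta}^2 \le \|V\|_{H^{2+\beta}}^2$, and the second is $\le \|V\|_{H^{2+\beta}}^2$ since $\<k\>^{2\beta+4} = \<k\>^{2(2+\beta)}$. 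Since $k_F^{2/3}(\ln k_F)^{5/3} \le k_F$ for $k_F \ge 1$, both terms are bounded by $C\|V\|_{H^{2+\beta}}^2 k_F$, giving \eqref{S1}. For $\mathcal S_{2,\beta}$, I would use \eqref{eq:bound:S2}, i.e. $\mathcal D_2(k,k_F) \le C k_F^{2/3}(\ln k_F)^{2/3}|k|^4$, so that the main-regime contribution is bounded by
\begin{align*}
C k_F^{2/3}(\ln k_F)^{2/3} \sum_k |\widehat V(k)|^2 \<k\>^{2\beta+4} \le C k_F^{2/3}(\ln k_F)^{2/3}\|V\|_{H^{2+\beta}}^2,
\end{align*}
which is even better than the claimed $k_F^{2/3}(\ln k_F)^{5/3}$; combining with the (negligible) tail gives \eqref{S2}.

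The only genuine subtlety is bookkeeping: one must check that the Sobolev exponent $2+\beta$ is exactly what appears after pulling out the factor $\<k\>^4 = \<k\>^{2\cdot 2}$ from the $|k|^4$ growth of $\mathcal D_\alpha$, and that the $|k| < 2k_F$ versus $|k| \ge 2k_F$ split is consistent with the hypotheses of Lemma~\ref{prop:sums} (which is the case, since \eqref{eq:bound:S1}--\eqref{eq:bound:S2} are stated precisely for $0<|k|<2k_F$ and \eqref{eq:S1:k:large} for $|k|\ge 2k_F$). I do not expect any real obstacle here: this lemma is a routine consequence of Lemma~\ref{prop:sums} together with the elementary inequality $k_F^{2/3}(\ln k_F)^{5/3} \le k_F$ and the fact that $H^{2+\beta}$ control of $V$ gives summable $|\widehat V(k)|^2\<k\>^{2\beta}|k|^4$. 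The proof is a few lines of the splitting-and-summing type just described.
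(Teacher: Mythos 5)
Your proposal is correct and follows essentially the same route as the paper: split at $|k|=2k_F$, apply \eqref{eq:bound:S1}--\eqref{eq:bound:S2} in the inner region and absorb the $|k|^4$ growth into the $H^{2+\beta}$ norm, and control the tail via \eqref{eq:S1:k:large} (resp.\ a crude bound on $\mathcal D_2$). The only cosmetic difference is in the $\mathcal S_2$ tail, where you use $\mathcal D_2 \le \mathcal D_1$ termwise while the paper uses $\mathcal D_2 \le Ck_F^3 = Ck_F^{2/3}k_F^{7/3}$ and extracts $|k|^{7/3}$; both work for the same reason.
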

	
	\begin{proof} 
		Fix $\beta \geq 0$ and $\kf\geq 1 $, and   write  for short $\mathcal S_{\alpha,\beta}   = \mathcal S_{\alpha, \beta} (\kf )$, 
		$\D_\alpha (k)  =  \D_\alpha (k,\kf)$. 
		Denote $f (k)  = | \widehat V(k)|^2 \< k\>^{2\beta}	$. 
		We  split the region of summation as 
		\begin{equation}
			\notag
			\mathcal S _{1} 
			\ = \ 
			\sum_{ | k |  <  2 k_F	}	
			f(k)   \D _1(k ) 
			+ 	  
			\sum_{ | k |  \geq  2 k_F	}
			f(k) 
			\D_1(k )  
		\end{equation}
		For the first region, we find with \eqref{eq:bound:S1}
		\begin{align*}
			&   	 \sum_{ | k |  <  2 k_F	}    
			f(k) 
			\D _1(k )  \\
			&  \ \leq  \  	  	 		    	   	  	 		      
			C \kf    \sum_{ | k |  <  2 k_F	}f (k) 
			\Big (1 +
			\frac{	  ( \ln \kf)^{5/3}  }{ \kf^{1/3 }}	|k|^4 		\Big)	 
			\ 	 \leq	\	  C \kf 
			\sum_{ k \in \Z^3} f(k) (1 + |k|^4)
			\ \leq  \  C \kf \| V\|_{H^{\beta + 2}}
			\ .
		\end{align*}
		For the second region, we find thanks to  \eqref{eq:S1:k:large}
		\begin{align*}  
			\sum_{ | k |  \geq   2 k_F	}
			f(k)  \D _1(k) 
			\ \leq \ 
			C 	\sum_{ | k |  \geq   2 k_F	}	
			f(k)    \kf^3 |k|^{-2 }
			\		  \leq	\	  C  \kf 
			\sum_{ k \in \Z^3} f(k)   
			\    = \  C \kf \| V \|_{H^{2 \beta }}\ .
		\end{align*}
		This finishes the proof for $\mathcal S_1$. 
		\smallskip

		For the estimates regarding  $\mathcal S_2  $ 
		we split the sum into the same two regions. 
		For the small region $ | k |< 2 \kf$	 we use  the bound  \eqref{eq:bound:S2} from Lemma \ref{prop:sums}.
		For the large region $|k| \geq 2 k_F$. 
		we use the crude bound  $\mathcal \D_2(k) \leq  C k_F^{3} $ instead. 
		Thus, we find that for all  $\gamma>0$
		there is $C>0$
		such that 
		\begin{align*}
			\mathcal S_2
			& 	    \ = \ 
			\sum_{ | k |  <  2 k_F	}	
			f(k) 
			\D _2(k ) 
			\  + \ 
			\sum_{ | k |  \geq  2 k_F	}	
			f(k) 
			\D _2(k )   \notag   \\ 
			&  \ \leq \ 
			C k_F^{2/3} ( \ln  k_F)^{5/3}  
			\sum_{ | k |  <  2 k_F	}	
			f(k)    |k|^4 
			\ + \ 
			C k_F^{3-\gamma }	 
			\sum_{ | k |  \geq  2 k_F	}	
			f(k) 
			|k|^\gamma \notag \\
			&      \ \leq \ 
			C k_F^{2/3} ( \ln  k_F)^{5/3} \| V \|_{H^{\beta +2}} 
			\ + \ 
			C k_F^{3-\gamma}	  
			\| V \|_{H^{\beta +\gamma/2}} \ .
		\end{align*}
		Setting $\gamma = 7/3$ completes the proof of the second bound. 
	\end{proof}
	
	\begin{remark}
		As a useful corollary of Lemma \ref{prop:sums} with $(\alpha,\beta) = (1, 0)$
		we 
		obtain the following $L^\infty$ estimate
		for the effective potential $W_{\kf}$ given by \eqref{eq:def:eff:kF:potential}
		\begin{equation}
			\label{eq:W:bound}
			\|  W_{\kf} \|_{L^\infty (\To^3)} 
			\leq   C \| V\|_{H^2}^2   \ . 
		\end{equation}
	\end{remark}

	\subsection{Operator estimates}
	
	First, we prove that the number of excitations above the Fermi sea can be controlled by their kinetic energy. 
	Recall $\K = \ker (\calN_-) \subset \mathscr H_B \otimes \F $ . 
	
	\begin{lemma}
		\label{lemma:kinetic:term}
		For all  $\Psi \in 
		\K$, the following inequality holds
		\begin{equation}
			\< \Psi, \mathcal N_+ \Psi  \> \  \le \  	
			\<  \Psi,  \mathbb T  \Psi\> \ .
		\end{equation} 	 
	\end{lemma}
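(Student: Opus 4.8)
The plan is to compare the two operators mode by mode on the kernel of the charge operator. Recall that for $\Psi \in \K = \ker(\calN_-)$ we have the representation \eqref{eq:rewriting:T}, namely
\begin{equation*}
	\mathbb T = \sum_{k \in \Z^3} |k^2 - k_F^2|\, b_k^* b_k + \sum_{k \in \Z^3} |k^2 - k_F^2|\, c_k^* c_k,
\end{equation*}
while by definition $\calN_+ = \tfrac12 \sum_{k} (b_k^* b_k + c_k^* c_k)$. Thus it suffices to show that $\tfrac12 \le |k^2 - k_F^2|$ for every $k$ that actually contributes --- that is, for every $k$ in the support of the $b$'s (where $|k| > k_F$) and in the support of the $c$'s (where $|k| \le k_F$). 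This is a purely arithmetic statement on the integer lattice.

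First I would dispose of the $b$-modes: if $|k| > k_F$ with $k \in \Z^3$, then $|k|^2$ and $k_F^2$ are related by $|k|^2 \ge k_F^2$, but more is true because $|k|^2 \in \Z$ and (recalling $k_F \ge 1$) we want a uniform gap. The cleanest route is to observe that the operators $b_k = \chi^\perp(k) a_k$ and $c_k = \chi(k) a_k$ already carry the momentum restrictions, so $b_k^* b_k = \chi^\perp(k)\, a_k^* a_k$ and $c_k^* c_k = \chi(k)\, a_k^* a_k$; hence in both sums only lattice points with $|k|^2$ bounded away from $k_F^2$ appear. Concretely, for $|k| > k_F$ one has $|k^2 - k_F^2| = |k|^2 - k_F^2 \ge $ (distance from $k_F^2$ to the next larger value of $|k|^2$), and for $|k| \le k_F$ one has $|k^2 - k_F^2| = k_F^2 - |k|^2 \ge$ (distance down to the next smaller value of $|k|^2$). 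In either case this gap is at least $\tfrac12$: indeed if $|k|^2 \ne k_F^2$ then since both are of the form (integer) or (half-integer is impossible here, they are integers), the gap is at least $1 \ge \tfrac12$; and if $|k|^2 = k_F^2$ then $|k| = k_F$, so $k$ lies on the Fermi sphere, which belongs to $B_F$, hence $\chi^\perp(k) = 0$ and such a $k$ does not appear in the $b$-sum, while in the $c$-sum the coefficient $|k^2 - k_F^2|$ vanishes and contributes nothing anyway.

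Putting this together, for every $k$ with a nonzero contribution we have the scalar inequality $\tfrac12 \le |k^2 - k_F^2|$, so term by term $\tfrac12 b_k^* b_k \le |k^2 - k_F^2|\, b_k^* b_k$ and likewise for the $c$'s. Summing over $k \in \Z^3$ and using that all the summands are nonnegative operators, I get $\calN_+ \le \mathbb T$ as operators on $\K$, whence $\<\Psi, \calN_+ \Psi\> \le \<\Psi, \mathbb T \Psi\>$ for all $\Psi \in \K$. The only mild subtlety --- and the one point worth stating carefully --- is that the identity \eqref{eq:rewriting:T} for $\mathbb T$ is valid only after restricting to $\K$ (it is obtained from \eqref{eq:T} by adding a multiple of $\calN_-$, which annihilates $\K$); so one should phrase the comparison as quadratic forms on $\K$ and invoke \eqref{eq:rewriting:T} from the outset rather than working with the unrestricted $\mathbb T$ of \eqref{eq:T}, which is not even bounded below. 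There is no real analytic obstacle here; the ``hard part'' is merely bookkeeping the momentum restrictions encoded in $b_k, c_k$ so that the scalar gap $\tfrac12 \le |k^2 - k_F^2|$ is correctly seen to hold on exactly the set of contributing modes.
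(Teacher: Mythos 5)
Your reduction to the scalar inequality $\tfrac12 \le |k^2-k_F^2|$ on the contributing modes is where the argument breaks down: that inequality is false in general. The Fermi momentum $k_F\ge 1$ is an arbitrary real number, so $k_F^2$ need not be an integer. Take $k_F^2 = 5-\epsilon$ and $k=(2,1,0)$: then $|k|^2=5>k_F^2$, so $b_k$ is an active particle mode, yet $|k^2-k_F^2|=\epsilon$ is arbitrarily small. Symmetrically, for $k_F^2=5+\epsilon$ the hole mode $k=(2,1,0)$ has coefficient $\epsilon$. And in the one case you do address, $|k|=k_F$ (possible when $k_F^2$ is an integer expressible as a sum of three squares), the conclusion is backwards: such $k$ lies in $B_{\rm F}$, so $c_k=a_k\neq 0$ and a state in $\K$ can perfectly well carry a hole there; then $\calN_+$ picks up $\tfrac12\|c_k\Psi\|^2>0$ while the corresponding term of $\T$ is exactly $0$. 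That the term ``contributes nothing'' to $\T$ is precisely the failure of your term-by-term comparison, not a reason to discard the mode.

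The paper avoids all of this by distributing the lower bound asymmetrically. On $\K$ one has $\langle\Psi,\calN_+\Psi\rangle=\langle\Psi,\sum_{p} b_p^*b_p\,\Psi\rangle$ (since $\sum_k b_k^*b_k=\calN_++\calN_-$ and $\calN_-\Psi=0$), so it suffices to give each \emph{particle} a full unit of kinetic energy and each hole nothing at all. Shifting $\T$ by the multiple $2n\calN_-$ with the \emph{integer} $n=\lfloor k_F^2\rfloor$ --- rather than by $2k_F^2\calN_-$ as in \eqref{eq:rewriting:T} --- gives, on $\K$, particle coefficients $p^2-n$ and hole coefficients $n-h^2$; integrality of $p^2$ together with $p^2>k_F^2\ge n$ forces $p^2-n\ge 1$, while $h^2\le k_F^2$ forces $n-h^2\ge 0$. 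Hence $\T\ge\sum_p b_p^*b_p=\calN_+$ on $\K$, with no lower bound needed on the hole coefficients. Your argument can be repaired along these lines, but not with the comparison point $k_F^2$ and not with the symmetric $\tfrac12+\tfrac12$ split between particles and holes.
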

	
	\begin{proof} 
		Let $ n \in \Z $ be the greatest integer 
		such that $ n \leq k_F^2$.
		Then, for all $ p \in B^c$ and $h \in B$	  it holds that: 
		$ 	p^2 \geq n + 1 $   and $  h^2 \leq n $. 
		Thus, from \eqref{eq:T} we have the lower bound
		\begin{align}
			\notag
			\<  \Psi, \mathbb T \Psi \>
			& 	 
			\ =  \ 
			\sum_{ p\in B^c}
			p^2 \|	 b_p \Psi		\|^2 
			- 
			\sum_{  h \in B }
			h^2 \|	 c_h \Psi		\|^2   \\
			\notag
			& 	\    \geq  \ 
			( n + 1 )	 \sum_{ p\in B^c}
			\|	 b_p \Psi		\|^2 
			-     n 
			\sum_{ h\in B}
			\|	 c_h \Psi		\|^2     \\
			& \ = \ 
			\sum_{ p\in B^c}
			\|	 b_p \Psi		\|^2  
			+ n \Big(
			\sum_{ p\in B^c}
			\|	 b_p \Psi		\|^2 
			-     
			\sum_{ h\in B}
			\|	 c_h \Psi		\|^2  
			\Big)  \ . 
		\end{align}
		Observe now that the term with $ n $ 
		is proportional to $\< \Psi, \calN_- \Psi \> = 0.$
		On the other hand 
		$   \langle \Psi ,  \sum_{p\in B^c} b_p^* b_p  \Psi \rangle = 
		\langle  	\Psi , ( \mathcal N_+ 	+ \mathcal N_-	) \Psi \rangle $. This finishes the proof. 
	\end{proof}

	Next, we will need an estimate for the diagonal operator $\V^{\t{diag}}$.
	In most cases, it can be safely regarded as a lower order term. 
	\begin{lemma}
		\label{lemma:diagonal:terms} 
		For all $\Psi \in \mathscr H _{B} \otimes \mathscr F     $
		\begin{equation}
			|  	\<  	\Psi, 	 \V^{\rm diag} 	 \Psi		\> |  \leq 2 N  \|  \widehat V\|_{\ell^1 }     \<	 \Psi, \mathcal N_+ \Psi	\> \ . 
		\end{equation}
	\end{lemma}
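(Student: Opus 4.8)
The plan is to diagonalise the momentum sum in $\V^{\rm diag}$ by passing to the transferred momentum $q=\ell-k$, which turns $\V^{\rm diag}$ into a sum over one-boson multiplication operators tensored with bounded fermionic operators. First I would rewrite, using the bijection $(\ell,k)\mapsto(q,k)=(\ell-k,k)$ on $\Z^3\times\Z^3$,
\begin{equation*}
	\V^{\rm diag}
	=
	\sum_{i=1}^N \sum_{q\in\Z^3} \widehat V(q)\, e^{-iq\cdot x_i}\otimes
	\Big( \textstyle\sum_{k\in\Z^3} b_{k+q}^* b_k \;-\; \sum_{k\in\Z^3} c_k^* c_{k+q} \Big).
\end{equation*}
It then suffices to bound, for each fixed $i$ and $q$, the expectation values of $e^{-iq\cdot x_i}\otimes\sum_k b_{k+q}^* b_k$ and of $e^{-iq\cdot x_i}\otimes\sum_k c_k^* c_{k+q}$ by $\langle\Psi,\mathcal N_+\Psi\rangle$-type quantities.

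For the particle term, since $e^{-iq\cdot x_i}$ acts only on the bosonic factor while the $b$'s act on $\mathscr F$, the two commute; writing the summand as $(e^{-iq\cdot x_i}\otimes b_{k+q}^*)(\1\otimes b_k)$ and taking the adjoint of the first factor, a Cauchy--Schwarz estimate in the $k$-sum gives
\begin{equation*}
	\Big| \langle \Psi,\, e^{-iq\cdot x_i}\otimes \textstyle\sum_k b_{k+q}^* b_k\,\Psi\rangle \Big|
	\le
	\Big( \textstyle\sum_k \| (\1\otimes b_{k+q})\Psi\|^2 \Big)^{1/2}
	\Big( \textstyle\sum_k \| (\1\otimes b_k)\Psi\|^2 \Big)^{1/2},
\end{equation*}
where the unitarity of $e^{iq\cdot x_i}\otimes\1$ was used to discard the multiplication operator from the first factor. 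After reindexing $k\mapsto k+q$, both factors equal $\langle\Psi,\1\otimes\sum_k b_k^* b_k\,\Psi\rangle^{1/2}$, so the left-hand side is at most $\langle\Psi,\1\otimes\sum_k b_k^* b_k\,\Psi\rangle$. The identical computation for the hole term produces the bound $\langle\Psi,\1\otimes\sum_k c_k^* c_k\,\Psi\rangle$.

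Summing over the $N$ bosons and over $q$ (which contributes $\sum_q|\widehat V(q)|=\|\widehat V\|_{\ell^1}$), and using that $\langle\Psi,\1\otimes(\sum_k b_k^* b_k+\sum_k c_k^* c_k)\Psi\rangle=2\langle\Psi,\mathcal N_+\Psi\rangle$ by the definition of $\mathcal N_+$, I obtain $|\langle\Psi,\V^{\rm diag}\Psi\rangle|\le 2N\|\widehat V\|_{\ell^1}\langle\Psi,\mathcal N_+\Psi\rangle$, which is the claim. There is no genuine obstacle: the only point requiring care is keeping the tensor-product bookkeeping straight so that the bosonic phase factors $e^{-iq\cdot x_i}$ are absorbed unitarily and do not interfere with the Cauchy--Schwarz step in the fermionic variables. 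In particular, neither a restriction to $\K$ nor any normal ordering is needed here, which is why the bound holds on all of $\mathscr H_B\otimes\mathscr F$.
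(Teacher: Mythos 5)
Your proof is correct and follows essentially the same route as the paper: a Cauchy--Schwarz estimate on the quadratic fermionic expressions combined with the translation invariance of the weight $\widehat V(\ell-k)$, which produces the factor $\|\widehat V\|_{\ell^1}$. The only difference is organizational — you reindex by the transferred momentum $q=\ell-k$ and apply Cauchy--Schwarz across the $k$-sum, whereas the paper bounds each pair via $\|b_\ell\Psi\|\,\|b_k\Psi\|\le\tfrac12(\|b_\ell\Psi\|^2+\|b_k\Psi\|^2)$ and then sums — and both yield the identical constant.
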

	
	\begin{proof}The statement follows from
		\begin{align}
			| 	\<	 \Psi,  \V^{\rm diag}   \Psi\> | 
			&  \, \leq  \,  \tfrac12 N
			\sum_{\ell,k\in \Z^3 } | \widehat V(\ell-k)| \Big( \| b_\ell \Psi \|^2 + \| b_k \Psi\|^2 + \| c_\ell \Psi \|^2 + \| c_k \Psi\|^2 \Big) \notag\\
			& \,  \le	\,	 2 N  \| \widehat V\|_{\ell^1 }   \|	 \mathcal N_+^{1/2} \Psi	\|^2 \ .
		\end{align}
		Here, 	  we used that $\sup_{k} \sum_{\ell} |\widehat V(\ell-k) | = \| \widehat V \|_{\ell^1} $ in the last line. 
	\end{proof}
	
	The next lemma is a small modification of \cite[Lemma 4.7]{HPR}. It will be useful to bound the off-diagonal terms $\V_-$.   
	
	\begin{lemma} \label{lemma:HPR} 
		There is a constant $C>0$ such  that for all  $\alpha  \geq 0 $, $\varepsilon >0$ and 
		$ 	\Psi\in   \K  $
		\begin{align*}
			\lambda  \sum_{k\in  \Z^3 } 
			| \widehat V(k)|
			\sum_{p\in L(k) }
			\frac{ \| c_{p - k} b_{p} \Psi \| }{   (  p^2  - (p - k)^2 )^{	\frac{\alpha-1}{2}		}}
			\, \leq \, 
			\ve   
			\<  \Psi,  \mathbb T  \Psi\>  
			\,  + \, 
			\frac{ C \,  \lambda^2 }{\ve  }
			\mathcal S_{\alpha,2} (k_F)
			.
		\end{align*}
	\end{lemma}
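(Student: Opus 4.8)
The plan is a double Cauchy--Schwarz estimate that decouples the potential weight $\widehat V(k)$ from the Fock-space norm $\|c_{p-k}b_p\Psi\|$, arranged so that the sum over $p\in L(k)$ produces the coefficient $\mathcal D_\alpha(k,k_F)$ of \eqref{eq:def:S:alpha:k} on the potential side and a weighted kinetic energy on the Fock side. The algebraic input is the elementary identity
\begin{equation*}
  \frac{1}{(p^2-(p-k)^2)^{(\alpha-1)/2}}
  \;=\; \frac{1}{(p^2-(p-k)^2)^{\alpha/2}}\,\big(p^2-(p-k)^2\big)^{1/2},
\end{equation*}
valid for every $\alpha\ge 0$ and every $p\in L(k)$ with $k\neq 0$ (recall $p^2-(p-k)^2\ge 1$ there); the term $k=0$ does not contribute since $L(0)=\varnothing$.

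First I would apply Cauchy--Schwarz in $p\in L(k)$, for each fixed $k$, to obtain
\begin{equation*}
  \sum_{p\in L(k)}\frac{\|c_{p-k}b_p\Psi\|}{(p^2-(p-k)^2)^{(\alpha-1)/2}}
  \;\le\; \mathcal D_\alpha(k,k_F)^{1/2}
  \Big(\sum_{p\in L(k)}\big(p^2-(p-k)^2\big)\,\|c_{p-k}b_p\Psi\|^2\Big)^{1/2}.
\end{equation*}
Multiplying by $\lambda|\widehat V(k)|$ and summing over $k$, a second Cauchy--Schwarz with the split $|\widehat V(k)|\,\mathcal D_\alpha^{1/2}=\big(|\widehat V(k)|\,\langle k\rangle^{2}\,\mathcal D_\alpha^{1/2}\big)\cdot\langle k\rangle^{-2}$ bounds the left-hand side of the lemma by $\lambda\,\mathcal S_{\alpha,2}(k_F)^{1/2}\,Y^{1/2}$, where
\begin{equation*}
  Y \;\equiv\; \sum_{k\in\Z^3}\langle k\rangle^{-4}\sum_{p\in L(k)}\big(p^2-(p-k)^2\big)\,\|c_{p-k}b_p\Psi\|^2 ,
\end{equation*}
and $\mathcal S_{\alpha,2}$ is as in \eqref{def:S}. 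By Young's inequality $\lambda\,\mathcal S_{\alpha,2}^{1/2}Y^{1/2}\le \tfrac12\big(\tau\lambda^2\mathcal S_{\alpha,2}+\tau^{-1}Y\big)$, so it remains only to prove $Y\le C\,\langle\Psi,\mathbb T\Psi\rangle$ and then to choose $\tau$ proportional to $\varepsilon^{-1}$.

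To bound $Y$ I would split $p^2-(p-k)^2=|p^2-k_F^2|+|k_F^2-(p-k)^2|$, which is legitimate on $L(k)$ because $|p-k|\le k_F<|p|$. In the first piece, bound $\|c_{p-k}b_p\Psi\|^2\le\|b_p\Psi\|^2$ using $c_{p-k}^*c_{p-k}\le\1$, interchange the order of summation, and use $\sum_{k\in\Z^3}\langle k\rangle^{-4}<\infty$ to get a contribution $\le C\sum_p|p^2-k_F^2|\,\|b_p\Psi\|^2$. In the second piece, use the CAR $\{b_p^\flat,c_{p-k}^\flat\}=0$ to write $\|c_{p-k}b_p\Psi\|=\|b_pc_{p-k}\Psi\|\le\|c_{p-k}\Psi\|$, substitute $h=p-k$, and again sum $\langle k\rangle^{-4}$ over $k$ to get a contribution $\le C\sum_{h}|k_F^2-h^2|\,\|c_h\Psi\|^2$. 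Adding the two and invoking the representation $\mathbb T=\sum_{k}|k^2-k_F^2|\,b_k^*b_k+\sum_{k}|k^2-k_F^2|\,c_k^*c_k$, which holds on $\K=\ker\calN_-$ by \eqref{eq:rewriting:T}, yields $Y\le C\langle\Psi,\mathbb T\Psi\rangle$, completing the proof. The one delicate point is exactly this last step: a careless bound would lose a factor of the excitation number, and it is precisely to avoid this that the summable weight $\langle k\rangle^{-4}$ is inserted in the Cauchy--Schwarz over $k$ and compensated by the factor $\langle k\rangle^{4}$ already present in $\mathcal S_{\alpha,2}$ (the index $\beta=2$); the hypothesis $\Psi\in\K$ enters only through the last displayed form of $\mathbb T$.
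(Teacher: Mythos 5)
Your proposal is correct and follows essentially the same route as the paper: the same Cauchy--Schwarz in $p\in L(k)$ producing $\mathcal D_\alpha(k,k_F)^{1/2}$ times the weighted quantity $F(k)=\sum_{p\in L(k)}(p^2-(p-k)^2)\|c_{p-k}b_p\Psi\|^2$, the same second Cauchy--Schwarz in $k$ with the weight $\langle k\rangle^{\pm 2}$ matching the index $\beta=2$ in $\mathcal S_{\alpha,2}$, and the same splitting $p^2-(p-k)^2=(p^2-k_F^2)+(k_F^2-(p-k)^2)$ together with $\|b\|=\|c\|=1$ and the representation \eqref{eq:rewriting:T} on $\K$ to bound the result by $\langle\Psi,\mathbb T\Psi\rangle$. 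The only cosmetic difference is that you interchange the $k$- and $p$-sums before invoking $\sum_k\langle k\rangle^{-4}<\infty$, whereas the paper bounds $F(k)\le\|\mathbb T^{1/2}\Psi\|^2$ uniformly in $k$ first; these are the same estimate.
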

	\begin{proof}
		Let 
		$  		 F(k)    \equiv 
		\sum_{p\in L(k) }
		(  p^2  - (p - k)^2 ) 
		\| c_{p  - k} b_{ p} \Psi \|^2$
		and recall
		$\D_\alpha(k) \equiv \D_{\alpha} ( k ,\kf)$ was defined in \eqref{eq:def:S:alpha:k}. 
		First, the Cauchy-Schwarz inequality over $ p \in L(k)$ shows that
		\begin{equation}
			\sum_{p\in L(k) }
			\frac{ \| c_{p - k} b_{p} \Psi \| }{   (  p^2  - (p - k)^2 )^{	\frac{\alpha-1}{2}		}}
			\leq 
			F(k)^{1/2} \D_{\alpha}( k )^{1/2} \ . 
		\end{equation}
		Multiplying with $\lambda |\widehat V(k)|$,  summing over $ k \in \Z^3$, 
		and applying  the Cauchy-Schwarz again, 
		we see    
		that for all $\ve ' >0 $
		\begin{align*}
			\lambda	\notag  
			\sum_{k\in  \Z^3 } 
			| \widehat V(k)|
			\sum_{p\in L(k) }
			\frac{ \| c_{p - k} b_{p} \Psi \| }{   (  p^2  - (p - k)^2 )^{	\frac{\alpha-1}{2}		}}
			& 	
			\  \leq     \ 
			\lambda	\sum_{k\in  \Z^3 } 
			| \widehat V(k)| 
			F(k)^\frac{1}{2}
			\D_{\alpha}( k)^{\frac{1}{2}} 
			\\ 
			&		\  \leq  \ 
			\lambda	\bigg( 
			\sum_{ k \in \Z^3 }  
			\<  k  \>^{  - 4  } 
			F(k)
			\bigg)^\frac{1}{2}
			\bigg( 
			\sum_{ k \in \Z^3 } 
			| \widehat V(k)|^2
			\<  k  \>^{ 4  }  
			\D_{\alpha}( k) 
			\bigg)^\frac{1}{2}
			\\
			&   \ \leq \ 
			\frac{\ve ' }{2  }
			\sum_{ k\in \Z^3 } 
			\<  k  \>^{  - 4  }  F(k)   
			\    	  +  \ 
			\frac{  \lambda^2 }{2 \ve  '  }
			\,   \mathcal S_{\alpha,2} (k_F) \ . 
		\end{align*}
		We now estimate $\sup_{ k \in \Z^3} F(k)$ in terms of the kinetic energy.
		Indeed, we obtain, thanks to the operator  norm bounds
		$\|  b \| = \| c \| =1 $
		\begin{align}
			\notag
			F(k)  
			&  \ =  \ 
			\sum_{p\in L(k) }\Big(  p^2 - k_F^2 \Big )
			\| c_{p - k} b_{p } \Psi \|^2 
			+ 
			\sum_{p\in L(k) } \Big(  k_F^2 - (p - k)^2  \Big) 
			\| c_{p - k} b_{p } \Psi \|^2  \\
			\notag
			&  \  \leq  \ 
			\sum_{p\in B^c  }\Big(  p^2 - k_F^2 \Big )
			\|  b_{p } \Psi \|^2 
			+ 
			\sum_{  h \in B  } \Big(  k_F^2 - (p - k)^2  \Big) 
			\| c_h  \Psi \|^2  \\
			&  \ \leq   \ 
			\| 	 \T^{1/2} \Psi		 \|^2 
		\end{align}
		thanks 
		to the representation 
		\eqref{eq:rewriting:T} in the zero charge subspace.
		This finishes the proof by setting 
		$\ve '  =  2 \ve /C $
		for $C = \sum_{ k\in \Z^3 } 
		\<  k  \>^{  - 4  }  $. 
	\end{proof}

	Finally, we establish the following bound on the kinetic energy of the particle-hole excitations. By Lemma \ref{lemma:kinetic:term}, this immediately implies a bound for $\langle \Psi,\mathcal N_+ \Psi \rangle$ as well.
	
	\begin{lemma}
		\label{lemma:kinetic1}
		There is a constant $C>0$
		such that for  all $  \delta >0 $
		and 
		all        
		$\Psi \in  \K     $
		\begin{align}
			\langle \Psi , \mathbb T \Psi  \rangle 
			\, \leq \, 
			(1 + \delta  ) \, 
			\langle \Psi , \mathbb H \Psi  \rangle  + C  N 
			\(      \delta^{-1 }  \|   V \|_{H^4 }^2 +   \| W \|_{L^p}^{\frac{2p}{2p-3 }}   \)  
			\| \Psi\|^2 \ . 
		\end{align}
		provided  $k_F \geq  C  \delta^{-2} N \| \widehat V\|_{\ell^1}^2   $.
	\end{lemma}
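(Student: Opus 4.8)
The plan is to isolate $\T$ from the decomposition $\H=h\otimes\1+\1\otimes\T+\lambda\V$ of \eqref{def:hamiltonian:exc}, so that for every $\Psi\in\K$ in the form domain,
\begin{equation}
	\langle\Psi,\T\Psi\rangle=\langle\Psi,\H\Psi\rangle-\langle\Psi,h\Psi\rangle-\lambda\langle\Psi,\V\Psi\rangle ,
\end{equation}
and it suffices to control the last two terms. Relative form boundedness of $W$ (Lemma \ref{lemmaA1}) gives $-\langle\Psi,h\Psi\rangle\le CN\|W\|_{L^p}^{2p/(2p-3)}\|\Psi\|^2$. It then remains to absorb $\lambda\V=\lambda(\V_++\V_-+\V^{\rm diag})$ into a small multiple of $\T$ plus an admissible error; I would treat the diagonal and off-diagonal pieces separately.

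For the diagonal piece, Lemma \ref{lemma:diagonal:terms} followed by Lemma \ref{lemma:kinetic:term} yields, for $\Psi\in\K$,
\begin{equation}
	\lambda\,|\langle\Psi,\V^{\rm diag}\Psi\rangle|\ \le\ 2\lambda N\|\widehat V\|_{\ell^1}\,\langle\Psi,\mathcal N_+\Psi\rangle\ \le\ 2\lambda N\|\widehat V\|_{\ell^1}\,\langle\Psi,\T\Psi\rangle .
\end{equation}
By the scaling \eqref{scaling}, $2\lambda N\|\widehat V\|_{\ell^1}=2\|\widehat V\|_{\ell^1}\sqrt{N/(4\pi\kf)}$, which is $\le\delta/3$ exactly when $\kf\ge C\delta^{-2}N\|\widehat V\|_{\ell^1}^2$; this is the only place the hypothesis on $\kf$ is used.

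For the off-diagonal piece, $\V_-=(\V_+)^*$ gives $\lambda\,|\langle\Psi,(\V_++\V_-)\Psi\rangle|\le 2\lambda\,|\langle\Psi,\V_-\Psi\rangle|$. Expanding $\V_-=\sum_{i=1}^N\sum_{p,k}\widehat V(k)\,e^{ik\cdot x_i}\otimes c_{p+k}b_p$, bounding each summand by Cauchy--Schwarz together with $\|e^{-ik\cdot x_i}\otimes\1\|=1$, and relabeling $k\mapsto-k$ (permissible as $\widehat V$ is even), one arrives at
\begin{equation}
	\lambda\,|\langle\Psi,(\V_++\V_-)\Psi\rangle|\ \le\ 2\lambda N\|\Psi\|\sum_{k\in\Z^3}|\widehat V(k)|\sum_{p\in L(k)}\|c_{p-k}b_p\Psi\| .
\end{equation}
Assuming $\|\Psi\|=1$ by homogeneity, apply Lemma \ref{lemma:HPR} with $\alpha=1$ and $\varepsilon=\delta/(6N)$, and bound $\mathcal S_{1,2}(\kf)\le C\|V\|_{H^4}^2\,\kf$ by \eqref{S1} with $\beta=2$. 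The scaling \eqref{scaling} is essential here: $N^2\lambda^2\,\mathcal S_{1,2}(\kf)\le N^2(4\pi N\kf)^{-1}C\|V\|_{H^4}^2\,\kf=C'N\|V\|_{H^4}^2$, so all $\kf$-dependence cancels, and we obtain $\lambda\,|\langle\Psi,(\V_++\V_-)\Psi\rangle|\le\tfrac{\delta}{3}\langle\Psi,\T\Psi\rangle+C\delta^{-1}N\|V\|_{H^4}^2\|\Psi\|^2$.

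Adding the three estimates gives $(1-\tfrac{2\delta}{3})\langle\Psi,\T\Psi\rangle\le\langle\Psi,\H\Psi\rangle+CN(\delta^{-1}\|V\|_{H^4}^2+\|W\|_{L^p}^{2p/(2p-3)})\|\Psi\|^2$. Since $\langle\Psi,\T\Psi\rangle\ge0$ on $\K$ by \eqref{eq:rewriting:T}, dividing by $1-\tfrac{2\delta}{3}$ and using $(1-\tfrac{2\delta}{3})^{-1}\le1+\delta$ for $0<\delta\le\tfrac12$ gives the claim when $\langle\Psi,\H\Psi\rangle\ge0$; if instead $\langle\Psi,\H\Psi\rangle<0$, the same rearranged inequality forces $-\langle\Psi,\H\Psi\rangle\le CN(\delta^{-1}\|V\|_{H^4}^2+\|W\|_{L^p}^{2p/(2p-3)})\|\Psi\|^2$, and the bound follows after enlarging $C$ (for $\delta>\tfrac12$ one simply uses the case $\delta=\tfrac12$). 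I expect the off-diagonal estimate to be the crux: one must use the precise scaling $\lambda^2\sim(N\kf)^{-1}$ so that the $\kf$-growth of $\mathcal S_{1,2}(\kf)$ is exactly compensated, and the weight $\langle k\rangle^4$ there is exactly what forces the $H^4$-regularity of $V$.
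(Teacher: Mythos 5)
Your proposal is correct and follows essentially the same route as the paper: the same decomposition of $\langle\Psi,\T\Psi\rangle$, the diagonal term absorbed via Lemmas \ref{lemma:diagonal:terms} and \ref{lemma:kinetic:term} (which is where the hypothesis on $\kf$ enters), the off-diagonal term via Lemma \ref{lemma:HPR} with $\alpha=1$ combined with the bound \eqref{S1} on $\mathcal S_{1,2}(\kf)$ so that the scaling $\lambda^2 N\kf\sim 1$ cancels the $\kf$-growth, and the bosonic part via Lemma \ref{lemmaA1}. Your careful treatment of the final division step (distinguishing the sign of $\langle\Psi,\H\Psi\rangle$) is in fact slightly more explicit than the paper's ``re-define $\delta\mapsto\delta/10$''.
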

	\begin{proof} 
		Let $\delta>0$.	Without loss of generality, we assume thrugh the proof that $\| \Psi\|=1$.
		From the definition of $\H$ we find 
		\begin{equation}
			\< \Psi, \T \Psi \>
			= 
			\< \Psi, \H \Psi \>
			- 
			\lambda 
			\< \Psi, \V^{\rm diag} \Psi \>
			-2 \lambda \re 	 	 \< \Psi, \V^- \Psi \>\,  
			- \< \Psi, h \otimes \1 \Psi \> \ . 
		\end{equation}
		We analyze each term separately. First, for the diagonal part, 
		we use   Lemmas \ref{lemma:diagonal:terms} and \ref{lemma:kinetic:term} to bound 
		\begin{align}\label{eq:diag:bound}
			\lambda \, | 		 	  
			\< \Psi, \V^{\rm diag} \Psi \>
			|  
			\ 	 \leq \ 
			2 N^{1/2}  \kf^{-1/2 }
			\|  \widehat V\|_{\ell^1 } 
			\< \Psi, \T  \Psi \> 
			\ 		\leq  \ 
			\delta 		\< \Psi, \T  \Psi \>  
		\end{align}
		where the last inequality holds for  $k_F \geq 4 N \| \widehat V\|_{\ell^1}^2 \delta^{-2} $. 
		Secondly, 	for the off-diagonal part, 
		starting from the definition of $\V^-$ in \eqref{V-}
		we use the triangle inequality,  
		Lemma \ref{lemma:HPR} for $\alpha =1$, $\beta =2$ and $\ve = \frac{\delta}{N} $, 
		and 
		then \eqref{S1} from Lemma \ref{lemmaS1} to find 
		\begin{align}
			\notag  
			\lambda \  | \! 	 \< \Psi, \V^- \Psi  \> \! 		|
			& 
			\leq 
			\lambda N
			\textstyle 
			\sum_{k \in \Z^3}
			|\widehat V(k)| 
			\sum_{\substack{ p\in L(k) }} \| c_{p- k} b_{p } \Psi \|  \\[1mm] 
			\notag
			& 
			\leq 
			N \ve    
			\< \Psi, \T  \Psi \> 
			\,   +  \,  
			C      \lambda^2 N  \ve^{-1 }
			\mathcal S_{1,2}(k_F)   \\[0.5mm]
			&  
			\leq 
			N \ve    
			\< \Psi, \T  \Psi \> 
			\,   +  \,  
			C 	 \lambda^2 N  \kf   \ve^{-1 }
			\|	V	\|_{H^4}  
			= 
			\delta 		 \< \Psi, \T  \Psi \> 
			+  C \delta^{-1 } N 		  \|	V	\|_{H^4}   \ .
		\end{align}
		Recall 	 $ \lambda^2 N  \kf  =  4 \pi   $ is our choice of scaling. 
		Finally, for the bosonic Hamiltonian we 
		can easily deduce  from Lemma \ref{lemmaA1} that
		\begin{equation}
			\< \Psi, h \otimes \1 \Psi \>  \geq -  C N  \| W \|_{L^p}^{\frac{2p}{2p-3 }} 
		\end{equation}
		thanks to the non-negativity of the free kinetic energy. 
		We combine the last four inequalities  and re-define $\delta \mapsto \delta/10$  if necessary to finish the proof. 
	\end{proof}

	\section{Upper and lower bounds}\label{sec:effective}
	In this section, we complete the proof of the expansion 
	for the eigenvalues of $\mathbb H$.
	Let us recall that $(W,V)$ satisfy Condition \ref{Ass1} with power $ p \in ( \tfrac32, \infty] $.
	Throughout this section we will refer to  the factor  $\Q = \Q_p $ given by
	\begin{equation}\label{eq:Q}
		\Q = 1  + \| W \|_{L^p}^q + \| V\|_{H^4}^2 \ , \qquad q \equiv  \frac{2p}{2p-3}   .
	\end{equation}
	
	\subsection{Estimates for the upper bound\label{sec:upper:bound}}
	In this section, we prove the following proposition
	containing the  upper bounds.

	\begin{proposition}
		\label{prop2}
		There exists a constant  $C>0$  
		such that
		for all $ k_F \geq 1 $, $ N \geq 1 $ and $1  \leq n \leq N $ 
		\begin{align*}
			\mu_n( \H )  
			& 
			\, 		\leq  \, 
			\mu_n (h_{k_F}^\eff)  	\,	-	\,	
			\tfrac12  \| V \|_{L^2}^2 	\,	+	\,		 
			C \Q^2  N^2   ( \ln k_F) ^{5/3}  \kf^{-1/3}
		\end{align*}
		provided    
		$k_F^{1/3} \ge C  n  N \| V \|_{H^2}^2 (\ln k_F)^{5/3}.$
	\end{proposition}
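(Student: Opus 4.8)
The plan is to bound $\mu_n(\H)$ from above by the min--max principle, exhibiting an $n$-dimensional subspace of $\K=\ker(\calN_-)$ inside the form domain, built by dressing the lowest eigenfunctions of $h^{\eff}_{\kf}$ with a one--particle--hole correction exactly as in the abstract scheme of Subsection~\ref{subsec:eff}. Concretely, let $\phi_1,\dots,\phi_n\in\mathscr H_B$ be orthonormal eigenfunctions of $h^{\eff}_{\kf}$ for $\mu_1(h^{\eff}_{\kf})\le\dots\le\mu_n(h^{\eff}_{\kf})$ and set
\[
  \Psi_j \,=\, \big(\1-\lambda\,\T^{-1}\V_+\big)(\phi_j\otimes\Omega),\qquad j=1,\dots,n .
\]
These vectors lie in $\K$ because $\Omega$ and the states $b_p^*c_{p+k}^*\Omega$ have vanishing charge, and $\T^{-1}$ is well defined on $\V_+(\phi_j\otimes\Omega)$ since $p^2-(p-k)^2\ge1$ on every nonempty lune $L(k)$ from \eqref{lune}. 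I would first record the two elementary facts $\V_-(\phi_j\otimes\Omega)=0$ and $\V^{\t{diag}}(\phi_j\otimes\Omega)=0$, so that on the reference layer only $\V_+$ acts.

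For the leading term I would expand $\langle\Psi_i,\H\Psi_j\rangle$ according to the particle--hole grading, using $\V=\V_++\V_-+\V^{\t{diag}}$ and $\V_-=\V_+^*$: the two cross terms contribute $-2\lambda^2\langle\phi_i\otimes\Omega,\V_-\T^{-1}\V_+(\phi_j\otimes\Omega)\rangle$, while $\langle\T^{-1}\V_+(\phi_i\otimes\Omega),(\1\otimes\T)\T^{-1}\V_+(\phi_j\otimes\Omega)\rangle$ returns $+\lambda^2$ times the same quantity --- this is the completion-of-the-square cancellation \eqref{completing:square:abstract} --- leaving the net coefficient $-\lambda^2$. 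To make the effective potential explicit I would evaluate the partial expectation $\lambda^2\langle\Omega,\V_-\T^{-1}\V_+\Omega\rangle_{\F}$ on $\mathscr H_B$ via the pull-through formulas for $\T$ and the CAR: the supports of $b_p^*c_{p+k}^*\Omega$ force $p\in L(\pm k)$, the resolvent inserts $1/(p^2-(p-k)^2)$, and the sum over $p$ reproduces $D(k,\kf)$ of \eqref{eq:D}. Splitting the two boson labels into equal and distinct and using the definition \eqref{eq:def:eff:kF:potential} of $W_{\kf}$ together with $\lambda^2 N\kf=(4\pi)^{-1}$ from \eqref{scaling} yields the operator identity on $\mathscr H_B$
\[
  \lambda^2\,\langle\Omega,\V_-\T^{-1}\V_+\Omega\rangle_{\F}\,=\,\frac1N\!\!\sum_{1\le i<j\le N}\!\!W_{\kf}(x_i-x_j)\;+\;\tfrac12\,W_{\kf}(0),
\]
so the leading operator is exactly $h^{\eff}_{\kf}-\tfrac12 W_{\kf}(0)$ and $\langle\phi_i,(h^{\eff}_{\kf}-\tfrac12 W_{\kf}(0))\phi_j\rangle=(\mu_i(h^{\eff}_{\kf})-\tfrac12 W_{\kf}(0))\,\delta_{ij}$; the $x=0$ case of \eqref{supremum:difference} then identifies $\tfrac12 W_{\kf}(0)$ with $\tfrac12\|V\|_{L^2}^2$ up to an error of the claimed order.

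It remains to estimate the remainder, which reduces to (i) $\lambda^2\langle\T^{-1}\V_+(\phi_i\otimes\Omega),(h\otimes\1)\T^{-1}\V_+(\phi_j\otimes\Omega)\rangle$ and (ii) $\lambda^3\langle\T^{-1}\V_+(\phi_i\otimes\Omega),\V^{\t{diag}}\T^{-1}\V_+(\phi_j\otimes\Omega)\rangle$. For (i) I would dominate $h\otimes\1$ by $C\big(\sum_i(-\Delta_i)+N\|W\|_{L^p}^{q}\big)$ using Lemma~\ref{lemmaA1}; differentiating the phases $e^{-ikx_i}$ in $\V_+$ brings down factors $\langle k\rangle^2$, so after Cauchy--Schwarz in the boson variables the term is controlled by $\lambda^2\,\mathcal S_{2,2}(\kf)$ times an a priori bound of the form $\langle\phi_j,\sum_i(-\Delta_i)\phi_j\rangle\lesssim\mu_n(h^{\eff}_{\kf})+\Q N$, the latter using $\|W_{\kf}\|_{L^\infty}\le C\|V\|_{H^2}^2$ from \eqref{eq:W:bound} and relative form boundedness of $W$. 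Inserting the new bound $\mathcal S_{2,2}(\kf)\le C\|V\|_{H^4}^2\,\kf^{2/3}(\ln\kf)^{5/3}$ of Lemma~\ref{lemmaS1} (which rests on \eqref{eq:bound:S2}, proved in Section~\ref{section:sums}) and $\lambda^2=(4\pi N\kf)^{-1}$ produces a contribution of order $\Q^2N^2(\ln\kf)^{5/3}\kf^{-1/3}$ --- this is precisely where the hypothesis $\kf^{1/3}\ge CnN\|V\|_{H^2}^2(\ln\kf)^{5/3}$ is used, to keep the energy $\mu_n(h^{\eff}_{\kf})$ and the kinetic energy of the $\phi_j$ comparable to the free part. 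Term (ii) is handled with Lemma~\ref{lemma:diagonal:terms} together with $\calN_+\,\T^{-1}\V_+(\phi_j\otimes\Omega)=\T^{-1}\V_+(\phi_j\otimes\Omega)$ (a one-pair state), and is of strictly smaller order in $\kf$.

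Finally, since $\langle\phi_i\otimes\Omega,\T^{-1}\V_+(\phi_j\otimes\Omega)\rangle=0$, the Gram matrix of $(\Psi_j)$ is $\1+O(\lambda^2N^2\mathcal S_{2,0}(\kf))$, hence invertible, so $\dim\operatorname{span}\{\Psi_j\}=n$ and every unit vector $\Psi=\sum_j c_j\Psi_j$ in that span satisfies $\sum_j|c_j|^2=1+o(1)$. Combining the leading term, the errors, and the bound $|\mu_n(h^{\eff}_{\kf})|\le C\Q N$ for $n\le N$, one obtains $\langle\Psi,\H\Psi\rangle\le\mu_n(h^{\eff}_{\kf})-\tfrac12\|V\|_{L^2}^2+C\Q^2N^2(\ln\kf)^{5/3}\kf^{-1/3}$ uniformly over this $n$-dimensional space, and the min--max principle gives the Proposition. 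I expect the main obstacle to be the error analysis of~(i): one must simultaneously control the bosonic kinetic and potential energy carried by the auxiliary state $\T^{-1}\V_+(\phi_j\otimes\Omega)$ --- which demands the sharp $\D_2$-sum bound \eqref{eq:bound:S2} --- and the kinetic energy of the low-lying eigenstates of $h^{\eff}_{\kf}$ --- which forces the smallness condition on $\kf$; it is the interplay of these two inputs that pins down the exact exponent $\kf^{-1/3}$ rather than a weaker power.
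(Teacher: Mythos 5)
Your proposal is correct and follows essentially the same route as the paper's proof: the same dressed trial states (your $\T^{-1}\V_+$ coincides with the paper's reduced resolvent $\RR\,\V_+$ on the one-pair states in question), the same completion-of-the-square cancellation leaving the net term $-\lambda^2\,\V_-\T^{-1}\V_+$, the same identification of $W_{\kf}$ and the constant $\tfrac12 W_{\kf}(0)$ via the partial expectation in $\Omega$, and the same two error terms controlled by the $\D_2$ sum bound, Lemma \ref{lemmaS1}, Lemma \ref{lemmaA1}, the commutator $[\nabla_1,\V_+]$, and the a priori bound $\mu_n(h^{\eff}_{\kf})\le C\Q N$, culminating in the same min--max step. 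The only cosmetic deviations are your Gram-matrix argument for $\dim\operatorname{span}\{\Psi_j\}=n$ (the paper argues by contradiction from linear dependence) and your use of $\mathcal S_{2,2}$ where $\mathcal S_{2,1}$ already suffices for the commutator term.
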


	We will be using the following notation for the 
	free  fermionic resolvent
	\begin{equation}
		\RR  \equiv  \mathbb Q_ \Omega \mathbb{T}^{-1} \mathbb Q_ \Omega 
	\end{equation}
	with orthogonal projections $\mathbb Q_ \Omega = \1 - \mathbb P_\Omega $ and $\mathbb P_\Omega = |\Omega \rangle \langle \Omega | $.
	Note that thanks to Lemma \ref{lemma:kinetic:term},  we have 
	$ \mathbb Q _\Omega  \T \mathbb Q_\Omega \geq \mathbb Q_\Omega$.
	Thus, $\RR$ is well-defined, and  
	$ 0 \leq \RR\leq \1 . $
	
	\medskip 
	
	The following lemma will be useful to compute 
	the norm of relevant trial states. 
	Essentially, we gain  a     factor $\kf^{-1/3 }$
	relative to the leading order terms $\kf$
	thanks to the \textit{double} resolvent in the inner product.
	The latter can then be expressed a sum  over lune sets of the form analyzed in
	Section \ref{section:sums} with $\alpha =2 $. 
	
	\begin{lemma}
		\label{lemma:partial1} There is $C>0$ such that for all  
		$k_F\ge 1$,  $N\ge 1$ 
		and 
		$\Phi \in\mathscr H_B$ 
		\begin{equation}
			\label{lemma1eq1}
			\|	 \RR  \V_+ \Phi  \otimes \Omega 
			\|^2 
			\leq 
			C N^2  \| 	 V 	 \|_{H^2}^2 \,   \|  \Phi \|^2\,  k_F^{2/3} ( \ln k_F)^{5/3} \, \ . 
		\end{equation}
	\end{lemma}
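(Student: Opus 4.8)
The plan is to compute the squared norm explicitly by expanding $\V_+\Phi\otimes\Omega$ in the particle-hole basis and using the pull-through formulas to evaluate the action of $\RR = \mathbb Q_\Omega \T^{-1}\mathbb Q_\Omega$. From the definition \eqref{V+} of $\V_+$, we have $\V_+ \Phi\otimes\Omega = \sum_{i=1}^N \sum_{p,k} \widehat V(k)\, (e^{-ik\cdot x_i}\Phi)\otimes b_p^* c_{p+k}^*\Omega$; since $b_p^* c_{p+k}^*\Omega$ vanishes unless $|p|>k_F$ and $|p+k|\le k_F$, i.e. unless $p\in L(-k)$ (equivalently $p+k\in B_F$, $p\in B_F^c$), the only surviving momenta live on the lune sets. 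Crucially $b_p^* c_{p+k}^*\Omega$ is an excitation with kinetic energy $\T = p^2 - (p+k)^2 > 0$ on $\K$, so $\RR$ acts on each such basis vector by the scalar $(p^2-(p+k)^2)^{-1}$. Hence
\begin{align*}
\|\RR\V_+\Phi\otimes\Omega\|^2 = \sum_{i,j=1}^N \sum_{k}\sum_{p\in L(-k)} \frac{\widehat V(k)\overline{\widehat V(k)}}{(p^2-(p+k)^2)^2}\,\langle e^{-ik\cdot x_i}\Phi, e^{-ik\cdot x_j}\Phi\rangle
\end{align*}
after using orthonormality of the particle-hole states (distinct $(p,p+k)$ pairs give orthogonal vectors, and there are no contractions between the $i$ and $j$ sums beyond the momentum label since the fermionic labels are fixed by $p$ and $k$). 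Actually one must be slightly careful: in the double sum over $i,j$ the fermionic parts are identical (same $p$, same $k$), so the cross terms do survive, and we bound $|\langle e^{-ik\cdot x_i}\Phi, e^{-ik\cdot x_j}\Phi\rangle|\le \|\Phi\|^2$. Summing the $N^2$ terms from $(i,j)$ then gives the prefactor $N^2\|\Phi\|^2$.

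The remaining scalar sum is $\sum_k |\widehat V(k)|^2 \sum_{p\in L(-k)}(p^2-(p+k)^2)^{-2}$. Substituting $k\mapsto -k$ and noting $L(k)$ is the set where $|p|>k_F$, $|p-k|\le k_F$ with $p^2-(p-k)^2>0$, this is exactly $\sum_k |\widehat V(k)|^2 \, \D_2(k,k_F) = \mathcal S_{2,0}(k_F)$ in the notation \eqref{def:S} (using that $|\widehat V(-k)| = |\widehat V(k)|$ since $V$ is even, and $L(-k)$ maps to $L(k)$ under $p\mapsto -p$ up to relabeling). We then invoke \eqref{S2} from Lemma \ref{lemmaS1} with $\beta = 0$, which gives $\mathcal S_{2,0}(k_F)\le C\|V\|_{H^2}^2\, k_F^{2/3}(\ln k_F)^{5/3}$. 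Combining with the $N^2\|\Phi\|^2$ factor yields exactly \eqref{lemma1eq1}.

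The main technical point to get right is the bookkeeping of the double sum over boson indices $i,j$ together with the verification that the fermionic Fock-space inner products reduce cleanly to a delta on the momentum pair $(p,p+k)$ — this is where the CAR and the structure of $b^*,c^*$ acting on $\Omega$ enter, and one must confirm there is no sign issue or extra combinatorial factor. Once this is settled, the estimate is a direct application of the sum bound \eqref{S2}; the gain of the crucial factor $k_F^{2/3}$ instead of $k_F$ (as would come from a single resolvent via \eqref{S1}) is precisely the content of the improved bound \eqref{eq:bound:S2} on $\D_2$, whose proof is deferred to Section \ref{section:sums}.
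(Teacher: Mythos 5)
Your proposal is correct and follows essentially the same route as the paper: both compute $\|\RR\V_+\Phi\otimes\Omega\|^2=\langle\Phi\otimes\Omega,\V_-\RR^2\V_+\Phi\otimes\Omega\rangle$ explicitly (the paper via the pull-through formula, you via orthonormality of the particle-hole states, which is the same calculation), bound the bosonic matrix elements $\langle e^{-ik\cdot x_i}\Phi,e^{-ik\cdot x_j}\Phi\rangle$ by $\|\Phi\|^2$ via Cauchy--Schwarz to extract the $N^2$ factor, and identify the remaining sum as $\mathcal S_{2,0}(k_F)$, estimated by \eqref{S2} of Lemma \ref{lemmaS1}. Your bookkeeping of the surviving $i\neq j$ cross terms and of the lune-set support condition $p\in L(-k)$ (reindexed to $L(k)$) matches the paper's computation exactly.
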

	
	\begin{proof}
		From the definition   \eqref{V+}  
		and the 
		pull-through formula one obtains 
		\begin{align}
			\nonumber 
			\|	 \RR  \V_+ \Phi  \otimes \Omega 	\|^2
			=  
			\< 
			\Phi \otimes \Omega, 
			\V_- \RR^2 \V_+ 
			\Phi \otimes \Omega 
			\>
			=
			\sum_{1 \leq i,j \leq N } 	
			\sum_{ k \in \Z^3 }  
			| \widehat V (k)|^2 
			\sum_{ p \in L(k)}
			\frac{			\<	 \Phi 	, 	
				e^{ ik \cdot (x_i - x_j)}  
				\Phi  \>_{L^2 } }{ ( p^2 - (p - k)^2)^2} \ . 
		\end{align}
		Therefore, the Cauchy-Schwarz inequality readily implies 
		\begin{equation}
			\|	 \RR  \V_+ \Phi  \otimes \Omega 	\|^2 
			\leq N^2  	 \|		 \Phi	\|^2 	\sum_{ k \in \Z^3}
			\frac{ 		 | \widehat V(k)	|^2 }{	 (p^2 - (p - k)^2  )^2 }
			\ . 
		\end{equation}
		The sum over $ p \in L(k)$
		coincides with $\S_{2 ,0} (\kf)$
		and 	can   be   be estimated 
		with  Lemma \ref{lemmaS1}.
		This estimate gives the desired inequality. 
	\end{proof} 
	
	In the upcoming proof, 
	we let  $ \ho \equiv  \sum_{ i =1 }^N ( - \Delta_i)$
	be the bosonic kinetic energy on $\mathscr  H _B$. 
	We will use the following elementary bound
	\begin{equation}
		\label{eq:kinetic:bound}
		\mu_n (	 \ho 		) \leq  n 
	\end{equation}
	for  $1 \leq n \leq N $, 
	which follows from testing the kinetic energy operator 
	on the $n$-dimensional subspace  $\text{Span}(\phi_1, \ldots, \phi_{n}) \subset \mathscr H_{B}$ with $\phi_{j} = S_N \big(  e_0^{\otimes N-j} \otimes e_k^{\otimes j} )$, where $ k = (1, 0 , 0)$ and $S_N$ denotes the orthogonal projection on the symmetric subspace in $L^2(\To^{3N})$.
	
	\begin{proof}[Proof of Proposition \ref{prop2}]
		For $ n \in \N$ 
		we denote  by  $\Phi_n \in \mathscr H _{ B }$
		the (normalized) eigenfunction associated
		to $\mu_n( h_{ k_F}^\eff)$. 
		The proof is based on a suitable estimate 
		of the quadratic form of $\H$ with the trial state 
		\begin{equation}
			\label{psi}
			\Psi  = ( 1 - \lambda \RR \V_+ )
			\Psi_0 \ , 
			\qquad 
			\Psi_0
			\equiv  \Phi \otimes \Omega  
		\end{equation}
		where $\Phi  \in \t{Span} ( \Phi_1 , \cdots, \Phi_n) \subset \mathscr H_{B}$, 
		and we note that $\Psi \in \mathscr K$. 
		Using
		\begin{align}\label{eq:psi:H:psi}
			\< 		 	 \Psi , \mathbb H 	 	 \Psi \> 
			=
			\< 	 	 	 \Psi ,  h 	 	 \Psi  \>
			+ 
			\< 	 	 	 \Psi  , \T  		 	 \Psi \>
			+ 
			\lambda 	  \< 	 	 	 \Psi ,  \V  		 	 \Psi \> \ ,
		\end{align}
		we will  
		expand each energy term with respect to $\lambda$  in $\eqref{psi}$. 
		We will then 
		identify  the effective potential $W_{k_F}( x -y )$
		as an $O(\lambda^2)$  contribution, 
		and identify the error terms---to be estimated in the next step.
		A straightforward calculation shows 
		\begin{align}
			\label{eq4.3}
			\< 	 	 	 \Psi,  h 	 	 \Psi \>
			& 
			\,  = \, 
			\< 
			\Psi_0  , 
			h
			\Psi_0   \>
			\, 	 	 +  \, 
			\lambda^2 
			\<  		 	  \Psi_0 , 	 \V_-   (	h  \otimes \RR^2) \V_+ 
			\Psi_0 \>  \\ 
			\label{eq4.4}
			\< 	 	 	 \Psi , \T  		 	 \Psi\>
			& 
			\,  = \, 
			\lambda^2 
			\< 	 	  \Psi_0 , 	 \V_-  \RR 	  \V_+  	 	  \Psi_0\> 
			\\
			\label{eq4.5}
			\lambda 	  \< 	 	 	 \Psi,  \V  		 	 \Psi\>  
			& 
			\,  = \,  
			\, 	 	  -   \, 
			2	
			\lambda^2 
			\<  	 	  \Psi_0 , 	
			\V_-  \RR 	  \V_+  	 	  \Psi_0  \>  
			+  \lambda^3 
			\<	 	  \Psi_0  ,			 \V_- \RR 
			\V^{\t{diag}	}
			\RR \V_+		 	  \Psi_0  \>   \ . 
		\end{align}
		The following observation is crucial. 
		Namely,    from the  partial trace over $\t{Span}(\Omega)  \subset \mathscr H_{F}$ (i.e. the partial expectation value w.r.t to the   $\Omega$)  we obtain   the effective potential  \eqref{eq:def:eff:kF:potential}
		\begin{align}
			\notag 
			\lambda^2 
			\<	 \Phi \otimes \Omega ,  		 \V_-  \RR 	  \V_+ 		 \Phi \otimes \Omega  	 \> 
			&   \, =  \, 
			\< 
			\Phi \otimes \Omega , 
			\frac{1}{4 \pi N \kf}
			\sum_{1 \leq i , j \leq N }
			\sum_{ k \in \Z^3}
			\sum_{ p \in L(k)}
			\frac{ | \widehat  V (k)|^2 }{   p^2 - (p -k)^2  }
			\Phi \otimes \Omega
			\> \\ 
			&   \, =  \, 
			\<  \Phi, 	 \frac{1}{2N}
			\sum_{1 \leq i , j \leq N } W_{k_F}(x_i - x_j)				\Phi \>    \notag \\ 
			&     \, =  \, 
			\<  \Phi, 	 \frac{1}{N}
			\sum_{1 \leq i < j \leq N } W_{k_F}(x_i - x_j)				\Phi \>   \  +  \frac{1}{2}
			W_{k_F}(0)   \| \Phi\|^2  \ . 
		\end{align}
		Combined with \eqref{eq4.3}--\eqref{eq4.5} 
		and the definition of 
		$h_{\kf}^{\rm eff}$
		in \eqref{heff}, we obtain 
		\begin{align}
			\notag 
			\<
			\Psi  , \H \Psi 
			\>  
			&  	\ 	 =  \ 
			\<	 \Phi , h_{k_F}^\eff \Phi  	 \>
			\  -   \   \tfrac12 W_{\kf} (0 )		\| \Phi \|^2  \\
			& 	\qquad 	+
			\lambda^2 
			\<  	 	  \Psi_0, 	 \V_-  ( h \otimes  \RR^2)  \V_+ 
			\Psi_0 \> 
			\  + \  
			\lambda^3 
			\< 	 	  \Psi_0  ,			 \V_- \RR 
			\V^{\t{diag}	}
			\RR \V_+		 	  \Psi_0  \> 
			\notag
			\\
			\label{E0}
			& \  \equiv  \ 
			\<	 \Phi , h_{k_F}^\eff \Phi  	 \>
			\  -   \   \tfrac12 W_{\kf} (0 )		\| \Phi \|^2  \ + \  E_1  \ + \  E_2 \ . 
		\end{align}
		We now turn to the estimates
		of the     two terms in \eqref{E0}, 
		to be denoted by $E_1$ and $E_2$. 
		
		\smallskip 
		
		\noindent  \underline{Estimates for $E_1$.}
		Thanks to  $W \in L^p(\To^3)$, 
		the operator $h$ can be bounded in terms of  the kinetic energy $\ho$. 
		That is, from Lemma \ref{lemmaA1} with $\ve = \frac{1}{2}$
		we obtain  for $q \equiv  \frac{2 p}{2p-3}$
		\begin{align}  
			\notag 
			E_1 
			& 	   \   =  \ 
			\lambda^2  
			\<  	 \RR \V_+    	  \Psi_0    , 	
			h    \ 
			\RR \V_+		 	  \Psi_0 \>  \\ 
			\notag 
			&  \ 	   \leq 	 \ 
			C  \lambda^2 
			\(   
			\<  	 \RR \V_+    	  \Psi_0    , 	
			\ho       \ 
			\RR \V_+		 	  \Psi_0 \> 
			+ 
			N   \| W \|_{L^p}^q  
			\|	 \RR \V_+ \Psi_0	\|^2 
			\)  
			\\
			&   \ =   \ 
			C 		\lambda^2 N 
			\(   \, 
			\|  	\nabla_1 	\RR \V_+ 	 	  \Psi_0     \|^2  
			+ 
			\| W \|_{L^p}^q  
			\|  \RR \V_+	 	  \Psi_0   \|^2 
			\,  \)  \, . 
			\label{eq4.6}
		\end{align}
		where we used permutational symmetry 
		$   \<  \Psi, \ho \Psi  \>  = N  \|  \nabla_1  \Psi\|^2  $ in the second line. 
		The second term of \eqref{eq4.6} can be  readily estimated
		with Lemma \ref{lemma:partial1} 
		\begin{align}
			\|  \RR \V_+	 	  \Psi_0  \|^2 
			\, \leq  \, 
			CN^2
			(\ln \kf)^{5/3} \kf^{2/3}
			\| 	 V 	 \|_{H^2 }^2  \|  \Phi \|^2  \, . 
		\end{align}
		On the other hand, the first  term of \eqref{eq4.6}  
		can be estimated using 
		one commutator, 
		and   the triangle inequality 
		\begin{align}
			\label{eq4.7}
			\|  	\nabla_1 	\RR \V_+ 	 	  \Psi_0     \|^2 
			\leq 
			2  
			\|   	\RR \V_+	\nabla_1  	 	  \Psi_0     \|^2 
			+ 
			2  
			\|   	\RR  [ 	\nabla_1   , \V_+ ] 	 	  \Psi_0   \|^2    \ . 
		\end{align}
		These two terms must be analyzed separately.
		Essentially, thanks to the pressence of the two resolvents, we will gain a factor $\sim \kf^{-1/3}$
		relative to the leading order terms.
		
		\begin{itemize}[leftmargin= 0.6cm ]
			\item   
			The first term in \eqref{eq4.7}
			can be estimated   with Lemma \ref{lemma:partial1} 
			and gives rise to  
			\begin{align}
				\notag 
				\|   	\RR \V_+ (	\nabla_1 
				\Phi \otimes \Omega  )   \|^2 
				& 	 \leq  C  
				N^2 
				(\ln \kf)^{5/3}
				\kf^{2/3}
				\| 	 V 	 \|_{H^2}^2  \| \nabla_1 
				\Phi   \|^2 \,  \ . 
			\end{align}
			We estimate the gradient 
			using one more time Lemma \ref{lemmaA1} with $\ve =1 $
			and $  W_{\kf } \in L^\infty $  
			\begin{align}
				\notag 
				\|  \nabla_1 \Phi \|^2 
				\ 		=    \
				\frac{1}{N}
				\< \Phi	 ,	 \ho  \Phi  \>
				& 	 \ 	\leq  \ 
				\frac{2 }{N}
				\< \Phi	 ,	  h  \Phi  \>
				\ 			+ \ 
				C \| W  \|_{L^p}^q  \|   \Phi\|^2 \\
				&  \ 	 \leq   \ 
				\frac{C }{N}
				\<  \Phi, h_{ k_F}^\eff \Phi \>
				\ 	 		+  \
				C 
				\Big( 
				\|     W_{k_F}	\|_{L^\infty }   
				\ +  \ 
				\| W  \|_{L^p}^q \Big)   \|   \Phi\|^2  \  . 
				\label{mu1} 
			\end{align}
			Next, we use the fact that
			$\Phi \in  \t{Span} (\Phi_1 , \cdots, \Phi_n)$ with $\Phi_i$ the eigenfunction associated with the eigenvalue $\mu_i( h_{ k_F}^{\rm eff})$ to find 
			\begin{equation}
				\label{mu2}
				\frac{ \<  \Phi, h_{ k_F}^\eff \Phi \> }{\< \Phi, \Phi \>} 
				\leq 
				\mu_n ( h_{ k_F}^\eff	 )
				\leq 
				C 
				\mu_n (	  \ho  	)
				+  C N
				\Big(    \| W  \|_{L^p}^q  
				+        \|    W_{k_F}	\|_{L^\infty }  \Big)  
			\end{equation}
			where we employed Lemma \ref{lemmaA1} and  the min-max principle in the second inequality. 
			Thanks to \eqref{eq:kinetic:bound}	
			we find for $1 \leq n \leq N $ that 
			$ \mu_n ( \ho  	) \leq  N $.
			In addition, we recall that $W_{\kf} \in L^\infty$ is controlled 
			in  \eqref{eq:W:bound} by $ V \in H^2$. Thus, we get
			\begin{align}
				\label{eq:bound:ev:eff:h}
				\mu_n ( h_{ k_F}^\eff	 )
				\leq 
				C   N \(  
				1 +    \| W  \|_{L^p}^q  
				+         \|  V\|^2_{H^2}	  \)   
			\end{align}
			which then controls the gradient term $		\|   \nabla_1 \Phi\|^2 $. 
			
			\medskip

			\item 
			The second term in \eqref{eq4.7}
			contains a  commutator  that can be computed explicitly 
			\begin{align}
				[\nabla_1, \V_+]
				=  \sum_{p , k \in \Z^d }
				\widehat V ( k   )(-ik)
				e^{ - i  k  x_1 }
				\otimes   b_{p}^* c_{p + k  }^*  \ . 
			\end{align} 
			Doing the same calculation as in
			Lemma \ref{lemma:partial1}, we find thanks to 
			Lemma \ref{lemmaS1}  for $\alpha= 2$ and $\beta =1 $ that 
			\begin{align}
				\notag 
				\|	 \RR  [ \nabla_1 , \V_+  ] 
				\Phi \otimes \Omega 	\|^2  
				&  =   
				\Big\langle
				\Phi  	, 	\sum_{ k\in \Z^3 } |\widehat V (k)|^2 k^2 
				\D_2(k,k_F) 
				\Phi      \Big\rangle   	\leq  
				C  ( \ln \kf)^{5/3} \kf^{2 /3 }
				\|	V	\|_{	 H^3  	}^2  
				\|  \Phi  \|^2  \ . 
			\end{align} 
		\end{itemize}

		\noindent 	We put everything back in \eqref{eq4.6}
		and use $\lambda^2 N \sim \kf^{-1}$
		to find that there is a constant  $C_1>0$
		such that for all $ k_F \geq 1 $
		and $   1 \leq n \leq N   $
		\begin{align}
			\label{E1}
			E_1
			\leq 
			C N^2 
			(\ln \kf)^{5/3}
			\kf^{-1/3 }
			\bigg(  
			\| 	 V 	 \|_{H^2}^2   
			\big( 
			1 + 	\|  W\|_{L^p }^q
			+  
			\| 	 V 	 \|_{H^2 }^2    
			\big) 
			+   N^{-2 }
			\|	V	\|_{	 H^3	}^2   
			\bigg)  \| \Phi \|^2 \, .
		\end{align} 
		We note that the the term in brackets $( \cdots)$ is majorized by $\Q^2. $
		
		\medskip 
		
		\noindent  \underline{Estimates for $E_2$.}
		First,   in view of 
		Lemma \ref{lemma:diagonal:terms} 
		we  control $\V^{\rm diag}$ 
		in terms of $\calN_+$.
		Additionally,    note that   $	 \RR \V_+  (\Phi \otimes \Omega)$ 
		is an eigenvector of $\mathcal N _+$
		with eigenvalue 1.
		Thus,   we find that 
		for a constant $C> 0 $
		\begin{align}
			\notag
			E_2 
			&   \ = \ 
			\lambda^3 
			\<	\Psi_0   ,			 \V_- \RR 
			\V^{\t{diag}	}
			\RR \V_+		 	  \Psi_0   \>  \\ 
			&   \ \leq      \ 
			C 	 \lambda^3  		 N 
			\|  \widehat V \|_{\ell^1}
			\|  	\RR \V_+			 	  \Psi_0   \|^2 
			\notag 
			\\ 
			& \   \leq  \ 
			C   N^{3/2 } 
			(\ln \kf)^{5/3} 
			\kf^{  - 5/6 }
			\|  \widehat V \|_{\ell^1} 
			\| 	 V 	 \|_{H^2   }^2     \| \Phi \|^2 \,  , 
			\label{E2}
		\end{align}
		where for  the
		second inequality 
		we used  Lemma \ref{lemma:partial1} 
		to evaluate
		$		\|  	\RR \V_+			 	  \Psi_0   \|^2 $, 
		as well as $\lambda^3 \sim N^{ - 3/2 } \kf^{-3/2}$. 
		Observe also that 
		$ \|   \widehat V \|_{\ell^1 } \leq C (1 + \|   V \|_{H^2  }^2		)$ and thus, 
		for $k_F \geq 1 $ and $ N \geq 1 $, 
		the right hand side of \eqref{E2}
		is bounded above  by the right hand side of  \eqref{E1}
		
		\medskip

		\noindent  \underline{Conclusion.}
		Going back to \eqref{E0},  we  obtain thanks to 
		\eqref{E1} and \eqref{E2} that
		there exists a constant $C>0$
		such that for all $ k_F  \geq 1 $
		and $ N \geq n  \geq1 $ 
		\begin{align}
			\label{eq4.8}
			\<
			\Psi , \H \Psi 
			\> 
			\leq  \big( \mu_n (		 h_{k_F}^\eff			)  - 
			\tfrac12 W_{\kf} (0 )	  \big) \| \Phi \|^2 	 
			+ 
			C  \Q^2  N^2 (\ln \kf)^{5/3} \kf^{ - 1/3 } 
			\| \Phi \|^2    \ , 
		\end{align}
		With this result in place, we   now employ the min-max principle   to find that 
		\begin{align}
			\notag
			\mu_n(\mathbb H) \le \sup_{ \Psi \in \mathcal V_n  } \frac{\langle \Psi , \mathbb H \Psi \rangle }{\langle  \Psi , \Psi \rangle } \, ,
		\end{align}
		where the supremum is taken over all states within the subspace
		\begin{align}
			\notag
			\mathcal V_n \equiv \t{Span}( \Psi_1,\ldots, \Psi_n ) , \quad 	 \Psi_j   =  (1 - \lambda \RR \V_+ ) (\Phi_j \otimes \Omega ) .
		\end{align}
		Below, we show that $\t{dim}(\mathcal V_n) = n $ for sufficiently large $k_F \geq 1 $. 
		Using \eqref{eq4.8} and the inequality $\|\Psi \|\ge \| \Phi \| $ for $\Psi =  (1 - \lambda \RR \V_+ ) (\Phi \otimes \Omega ) $ with $\Phi \in \t{Span}(\Phi_1,\ldots, \Phi_n)$, we derive the bound
		\begin{equation}
			\notag
			\mu_n(\mathbb H) \le  \mu_n (		 h_{k_F}^\eff			) - 	  \tfrac12 W_{\kf} (0 )	
			+
			C   \Q^2    N^2 (\ln \kf)^{5/3} \kf^{ - 1/3 }  \ . 
		\end{equation}
		
		\smallskip
		So far, we have not made any further restrictions on $ \kf \geq 1 $ and $ N \geq 1 $.
		For the next argument, consider  a large enough  constant $R>0$ and take $\kf \geq 1 $ large enough so that 
		\begin{equation}
			\label{eq:R}
			k_F^{1/3} \ge  R   n N \| V \|_{H^2}^2 (\ln k_F)^{5/3} \ .
		\end{equation}
		Let us now  establish that $\mathcal V_n$ has dimension $n$.
		Indeed,  observe that $\langle \Phi_i,\Phi_j \rangle = \delta_{i,j}$ by assumption. Moreover, using Lemma \ref{lemma:partial1}, we find for $i\neq j$:
		\begin{align}\label{eq:Psi:i:Psi:j:bound}
			| \langle \Psi_i, \Psi_j\rangle  | &  \ \le \   \lambda^2 \| \RR \V^+  \Phi_i \otimes \Omega \| \, \| \RR \V^+  \Phi_j \otimes \Omega \|  \  \le \  C_0 N \| 	 V 	 \|_{H^2}^2 
			(\ln \kf)^{5/3} \kf^{ - 1/3 } , 
		\end{align}
		for some constant $C_0$. 
		If $\t{dim}(\mathcal V_n) < n$, it would imply that the set $\{ \Psi_i\}_{i=1}^n$ is linearly dependent. Consequently there would exist an integer $m \in \{1,\ldots n\}$ such that
		\begin{align}
			\notag
			\Psi_m = 
			\sum_{ j=1: j\neq m }^n \alpha_j 
			\Psi_j \quad \text{with}
			\quad \alpha_j \in    {\bold C }  , \quad |\alpha_j|\le 1.
		\end{align}
		Taking   the inner product with $\Psi_m$    and applying \eqref{eq:Psi:i:Psi:j:bound}, we find
		\begin{align}
			\notag
			\| \Psi_m \|^2 \le C_0 (n-1) N \| 	 V 	 \|_{H^2}^2 
			(\ln \kf)^{5/3} \kf^{ - 1/3 }
			< C_0/ R < 1 
		\end{align}
		provided $R >C_0$. However, this contradicts $\| \Psi _m \| \ge  1$, and thus, we conclude that $\t{dim}(\mathcal V_n) = n$.  
		This completes the proof of Proposition \ref{prop2}. 
	\end{proof}

	\subsection{Estimates for  the lower bound\label{sec:lower:bound}}
	
	The purpose of this section is to prove 
	the following operator inequality, 
	which implies  
	the lower bound in Theorem \ref{thm1}. 
	Let us recall that $(W,V)$ satisfy Condition \ref{Ass1} with power $ p \in ( \tfrac32, \infty] .$
	For the next statement, we denote
	\begin{equation}
		\epsilon_0   
		=			N 			( \ln \kf)^{ 5/3} \kf^{ -1 /3 } \ .
	\end{equation}
	We also recall $\Q  = 1 + \| W \|_{L^p}^q + \| V\|_{H^4}^2$ with $q \equiv  \frac{2p}{2p-3}$. 
	
	\begin{proposition}\label{prop1}
		There is $C>0$  
		such that 
		for all $ k_F \geq 1 $ and $ N \geq 1 $
		the following holds
		\begin{enumerate}[leftmargin=*]
			\item  	In the sense of quadratic forms  on	  $\K = \ker \calN_-$
			\begin{align}
				\label{eq:op}
				\H
				\, 	\geq  \, 
				(1 -  C  \Q    \epsilon_0     )	
				\Big(     		h_{ k_F}^\eff   
				-   \tfrac12 W_{\kf}(0)
				\Big)   -  
				C    \Q 
				N^2 
				( \ln \kf)^{ 5/3} \kf^{ -1 /3 }
			\end{align}
			provided  $ k_F (\ln k_F)^{-5 } \ge C     		\|	V	\|_{H^4}^6     N^3$. \\[-3mm]
			
			\item  
			Let  $  1 \leq n \leq N $. 
			Then, it holds that
			\begin{align}
				\mu_n(\H) 
				\, 	\geq  \, 
				\mu_n(	 h_{k_F}^\eff	)  
				-   \tfrac12 W_{\kf}(0)
				-  	 C \Q^2   N^2 \,  
				( \ln \kf)^{ 5/3} \kf^{ -1 /3 } 
			\end{align}
			provided  $ k_F (\ln k_F)^{-5 } \ge C     		\|	V	\|_{H^4}^6     N^3$. 
		\end{enumerate}

	\end{proposition}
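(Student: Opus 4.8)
The plan is to prove the operator inequality (1) by completing the square in the operator sense — exactly as sketched for the abstract model in Subsection \ref{subsec:eff} — and then to deduce the eigenvalue bound (2) from (1) via the min-max principle. Throughout we work on $\mathscr K = \ker \calN_-$, where $\mathbb T \geq 0$ by \eqref{eq:rewriting:T}, and we use that $\mathbb V_+$ creates a particle--hole pair, so that $\mathbb P_\Omega \mathbb V_+ = 0$, hence $\mathbb V_+ = \mathbb Q_\Omega \mathbb V_+$ and $\mathbb V_- = \mathbb V_- \mathbb Q_\Omega$.

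\textbf{Step 1 (completion of the square).} For $\Psi \in \mathscr K$, expanding $0 \leq \big\| \big(\mathbb T^{1/2} + \lambda \RR^{1/2}\mathbb V_+\big)\Psi\big\|^2$ and using $\RR = \mathbb Q_\Omega \mathbb T^{-1}\mathbb Q_\Omega$ together with $\mathbb T^{1/2}\RR^{1/2}\mathbb V_+ = \mathbb V_+$ yields the quadratic-form inequality $\mathbb T + \lambda \mathbb V_+ + \lambda \mathbb V_- \geq - \lambda^2 \mathbb V_- \RR \mathbb V_+$ on $\mathscr K$. Since $\V = \V_+ + \V_- + \V^{\t{diag}}$, this gives
\begin{equation*}
\H \ \geq \ h \otimes \1 \ - \ \lambda^2 \V_- \RR \V_+ \ + \ \lambda \V^{\t{diag}} \qquad \text{on } \mathscr K ,
\end{equation*}
where the diagonal term is, by Lemmas \ref{lemma:diagonal:terms} and \ref{lemma:kinetic:term}, bounded in absolute value by $2N\lambda\|\widehat V\|_{\ell^1}\langle\Psi,\mathbb T\Psi\rangle$, a small multiple of $\langle\Psi,\mathbb T\Psi\rangle$ once $k_F \gtrsim N\|\widehat V\|_{\ell^1}^2$.

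\textbf{Step 2 (extracting the effective potential — the main step).} The crux is to show that on low-energy states $\lambda^2\V_-\RR\V_+$ is close to its partial expectation value with respect to the vacuum $\Omega \in \mathscr F$, which — by the very computation carried out in the proof of Proposition \ref{prop2} — equals $\1 \otimes \langle\Omega, \lambda^2\V_-\RR\V_+\,\Omega\rangle_{\mathscr F} = \tfrac1N\sum_{1\leq i<j\leq N} W_{k_F}(x_i-x_j) + \tfrac12 W_{k_F}(0)$, with $W_{k_F}$ as in \eqref{eq:def:eff:kF:potential}. To this end I would normal-order the $b,c$ operators appearing in $\V_-\RR\V_+$, moving $\RR$ across them with the pull-through formulas: the fully contracted term reproduces the partial expectation, while every remaining term carries at least one uncontracted excitation operator and can be estimated by Cauchy--Schwarz over the lune sets $L(k)$, by Lemma \ref{lemma:HPR} (applied with $\alpha = 1$ and $\alpha = 2$), and by the sum bounds of Lemma \ref{lemmaS1} — the $\alpha = 2$ contributions, originating from the double resolvent $\RR^2$, being precisely what produces the gain $k_F^{-1/3}$ over the leading order. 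The outcome should be a bound of the form
\begin{equation*}
\lambda^2 \langle\Psi, \V_-\RR\V_+\Psi\rangle \ \leq \ \big\langle \Psi, \ \1\otimes\langle\Omega,\lambda^2\V_-\RR\V_+\,\Omega\rangle_{\mathscr F}\,\Psi\big\rangle \ + \ C\Q\,\epsilon_0\,\langle\Psi, (\ho\otimes\1)\Psi\rangle \ + \ C\Q\,N\,\epsilon_0\,\|\Psi\|^2 \ + \ \delta\,\langle\Psi,\mathbb T\Psi\rangle ,
\end{equation*}
valid for $\Psi\in\mathscr K$, with $\delta>0$ at our disposal. This is the step I expect to be the principal obstacle: it requires organizing all cross terms of the normal ordering and controlling the unbounded operators uniformly, which is exactly what the auxiliary estimates of Section \ref{sec:prelims} are designed for.

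\textbf{Step 3 (operator inequality and eigenvalues).} Since $h\otimes\1 - \1\otimes\langle\Omega,\lambda^2\V_-\RR\V_+\,\Omega\rangle_{\mathscr F} = h_{k_F}^\eff - \tfrac12 W_{k_F}(0)$, combining Steps 1 and 2 gives, on $\mathscr K$, an estimate $\H \geq h_{k_F}^\eff - \tfrac12 W_{k_F}(0) - C\Q\,\epsilon_0\,(\ho\otimes\1) - C\Q\,N\,\epsilon_0 - \delta'\mathbb T$. The residual $\mathbb T$ is absorbed using Lemma \ref{lemma:kinetic1} (choosing $\delta'$ small and $k_F$ large), and the bosonic kinetic energy is controlled by $\ho \leq 2h + CN\|W\|_{L^p}^q \leq 2 h_{k_F}^\eff + CN\Q$ via Lemma \ref{lemmaA1} and $\|W_{k_F}\|_{L^\infty}\leq C\|V\|_{H^2}^2$ (estimate \eqref{eq:W:bound}). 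After rearranging, the $\ho$-error turns into the prefactor $(1-C\Q\epsilon_0)$ in front of $h_{k_F}^\eff-\tfrac12 W_{k_F}(0)$ and the residual errors collect, under the stated condition $k_F(\ln k_F)^{-5}\gtrsim\|V\|_{H^4}^6 N^3$, into the form in (1). Finally, (2) follows from (1) by the min-max principle, using $\mu_n(cA+d)=c\,\mu_n(A)+d$ for $c>0$: this gives $\mu_n(\H)\geq(1-C\Q\epsilon_0)\big(\mu_n(h_{k_F}^\eff)-\tfrac12 W_{k_F}(0)\big)-C\Q N^2(\ln k_F)^{5/3}k_F^{-1/3}$, and invoking the two-sided bound $-CN\Q\leq\mu_n(h_{k_F}^\eff)\leq CN\Q$ for $1\leq n\leq N$ — the upper bound being \eqref{eq:bound:ev:eff:h}, the lower bound following from $h_{k_F}^\eff\geq -CN\Q$ (Lemma \ref{lemmaA1}) — together with $|W_{k_F}(0)|\leq C\|V\|_{H^2}^2\leq C\Q$, one may replace the prefactor by $1$ at the cost of an additional error of order $\Q^2 N^2(\ln k_F)^{5/3}k_F^{-1/3}$, which is the claimed bound.
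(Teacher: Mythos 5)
Your overall architecture coincides with the paper's: the same completion of the square (the paper writes it with $\mathbb T^{-1/2}$ on the range of $\V_+$, which is equivalent to your $\RR^{1/2}$ version since $\mathbb T^{1/2}\RR^{1/2}\V_+=\mathbb Q_\Omega\V_+=\V_+$), the same removal of $\V^{\rm diag}$ via Lemmas \ref{lemma:diagonal:terms} and \ref{lemma:kinetic:term}, the same absorption of the residual $\T$ through Lemma \ref{lemma:kinetic1} and bootstrapping, and the same min-max argument for part (2). The one place where your plan is not yet a proof is exactly the step you flag: the normal ordering of $\lambda^2\V_-\T^{-1}\V_+$. After applying the CAR one obtains four terms: the full contraction (which gives $\tfrac1N\sum_{i<j}W_{k_F}(x_i-x_j)+\tfrac12 W_{k_F}(0)$ after dropping $\T\ge0$), two singly-contracted terms of the schematic form $\sum c^*_{p-\ell}(\cdots)^{-1}c_{p-k}$ and $\sum b^*_{p-k+\ell}(\cdots)^{-1}b_p$, and the fully normal-ordered quartic term $b^*c^*cb$. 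Your cited tool, Lemma \ref{lemma:HPR}, is built around the quantity $\|c_{p-k}b_p\Psi\|$ and therefore only addresses the quartic term (the paper's term $A_4$, which indeed yields the $k_F^{-1/3}(\ln k_F)^{5/3}$ gain via the coefficient $\mathcal S_{2,2}$); it says nothing about the two singly-contracted terms, and you give no mechanism for them. The paper disposes of these by showing they are \emph{non-positive}: writing the sandwiched resolvent as a Laplace transform \eqref{rep}, each becomes minus a manifest square (see \eqref{A2}--\eqref{A3}), and since one only needs a lower bound they can simply be discarded. This sign argument (or some substitute --- a direct Cauchy--Schwarz bound of these terms by $C N k_F^{-1}\|\widehat V\|_{\ell^1}^2\,\T$ looks feasible, but must actually be carried out) is the missing ingredient in your Step 2.

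A secondary, non-fatal discrepancy: the error you posit in Step 2 is proportional to $\ho\otimes\1$, whereas what comes out of the paper's estimate of the quartic term is an error proportional to $\T$; the prefactor $(1-C\Q\epsilon_0)$ in \eqref{eq:op} is then generated by absorbing that $\T$-error via $\T\le 2\H+CN\Q$ and dividing, rather than by folding an $\ho$-error into $h^\eff_{k_F}$. Both routes lead to the stated form, so this only affects the bookkeeping.
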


	\begin{proof}
		Throughout this proof, all operator inequalities are to be understood
		in the sense of quadratic forms on	  $\K = \ker \calN_-$. 
		\smallskip  
		
		(1)	First, we remove the diagonal term $\V^{\rm diag}$
		by utilizing Lemmas \ref{lemma:kinetic:term} and \ref{lemma:diagonal:terms}
		\begin{align}
			\label{eq:1}
			\mathbb H 
			\, \ge	\,
			h + \mathbb T  + \lambda  \mathbb V^+ +  \lambda \mathbb V^-   -
			{C N^{  1/2 }}	\kf^{-1/2 }	 \|  \widehat V\|_{\ell^1 }   \mathbb T   \ . 
		\end{align}
		Thus, it is  sufficient 
		to analyze the  first three terms on the right hand side.
		By completing the square, we find
		the operator inequality   
		\begin{align}
			\label{eq:2}
			\mathbb T + \lambda \mathbb V^- + \lambda \mathbb V^+  & =     \Big( \mathbb T^{1/2} + \lambda \mathbb T^{-1/2}  \mathbb V^+ \Big)^*\Big( \mathbb T^{1/2} + \lambda\mathbb T^{-1/2}  \mathbb V^+ \Big) - \lambda^2 \mathbb V^- \mathbb T^{-1} \V^+ \notag\\
			& \ge -\lambda^2 \mathbb V^-  \mathbb T^{-1}  \V^+.
		\end{align}
		Note  the above expressions are well-defined.
		This follows from
		$  \V_+  \K   \subset  \1 (\mathcal N_+ \geq 1 ) \K $
		and the lower bound 
		$   \mathbb T \mathds 1(\mathcal N_+ \ge 1) \ge   \mathds 1(\mathcal N_+ \ge 1) $ .

		Next, we compute
		\begin{align}
			\label{eq4.9}
			&   \mathbb V^- \mathbb T^{-1}  \V^+  
			\   = \ 
			\sum_{i,j=1}^N \sum_{ k , \ell   \in \Z^3 } \overline{v_i( k   )}   v_j(\ell ) 
			\,  \sum_{p ,q \in \mathbb Z^3}     b_p c_{p -k} \mathbb T^{-1} c_{q - \ell }^* b_{q}^* ,
		\end{align}
		where we introduced $v_i(k) = e^{-ikx_i} \widehat V(k) $ and we keep in mind the implicit restrictions $p,q\in B^c$ and $p -k,q - \ell \in B$ coming from 
		$\{ c_p\}$ and  $ \{  b_p \}$.
		In particular, we can always assume $k$ and $\ell$ are non-zero. 
		Let us denote $E_p  \equiv   p^2 $. 
		We then use the pull-through formula
		\begin{align}
			\label{eq4.10}
			b_p c_{p - k} \mathbb T^{-1} c_{q - \ell }^* b_{q}^* 
			\  =  
			\  (   \mathbb 
			T +  E_p  -   E_{p-k}
			)^{-1/2} 
			\  b_p c_{p- k} c_{q - \ell }^* b_q^*  \  (   \mathbb T +E_q -  E_{q-\ell  })^{-1/2}
		\end{align}
		and apply the  CAR  to find
		\begin{align}
			\label{eq4.11}
			& b_p c_{p - k} c_{q  -  \ell}^* b_q^*  \\
			& 
			\, = \, 
			\delta_{p,q}   \delta_{k,\ell}  \chi^\perp(p) \chi(p+k) 
			\,  -  \, 
			\delta_{p,q} \chi^\perp(p)  c^*_{q - \ell} c_{p- k}   
			\, - \, 
			\delta_{p - k , q  - \ell} \chi(p - k) b_q^* b_p 
			\, + \, 
			b_q^* c_{q - \ell}^* c_{p -k} b_p \ . 
			\notag 
		\end{align}
		Upon combining \eqref{eq4.9}--\eqref{eq4.11}
		we obtain   
		$  \V^- \mathbb T^{-1}  \V^+  \equiv \sum_{n=1}^4 A_n$, 
		where: 
		\begin{align}
			A_1 		 &  	 = 
			\ 	 \sum_{i, j}
			\sum_{ k \in \Z^3   }
			\ 
			\overline{v_i (k) }  v_j(k)  
			\sum_{  p\in L(k)  } (   \mathbb T +  E_p - E_{p-k}  )^{-1 }
			\\
			A_2 		 &  	 = 
			- 
			\ 	 \sum_{i, j} \sum_{ k ,\ell    }  \ 
			\overline{v_i(k)} v_j(\ell)   \sum_{p\in B^c} 
			\  (   \mathbb 
			T +  E_p  -   E_{p-k}
			)^{- \frac{1}{2}} 
			c^*_{p - \ell} c_{p- k}    
			\  (   \mathbb T +E_p -  E_{p-\ell }  )^{- \frac{1}{2}}      
			\\
			A_3 		 &  	 = 		  - 
			\ 	 \sum_{i, j} \sum_{ k ,\ell    }  \  \overline{v_i(k)} v_j(\ell)   \sum_{p-k\in B} (\mathbb T +  E_p - E_{p - k}  )^{- \frac{1}{2}}  b_{p - k + \ell}^*  b_p (\mathbb T + E_{p  -  k + \ell } - E_{p - k}  )^{- \frac{1}{2}}  
			\\ 
			A_4  		 &  		 = 
			\ 	 \sum_{i, j} \sum_{ k ,\ell    }  \  \overline{v_i(k)}  v_j(\ell) \sum_{p,q  \in \Z^3 }  (\mathbb 
			T + E_p  -   E_{p-k}  )^{- \frac{1}{2}} b_q^* c_{q - \ell}^* c_{p - k} b_p (\mathbb T +
			E_q  -   E_{q -\ell }
			)^{- \frac{1}{2}} 
		\end{align}
		where $\sum_{i,j} \equiv \sum_{ i , j =1}^N$
		and $\sum_{ k , \ell} \equiv \sum_{ k ,\ell \in \Z^3}$
		and we recall  
		$L(k) = \{   p : |p|  >  k_F ,  | p - k|  \leq  k_F  		\}. $
		We now study each term separately. 
		
		\begin{itemize}[leftmargin=*]
			\item  
			The first term  $A_1 $ gives the effective potential \eqref{eq:def:eff:kF:potential}.
			Namely, using  
			$\mathbb T\ge 0$ on  $ \K  $ we find   
			\begin{align}
				\label{A1}
				\lambda^2 A_1 
				&  \ \le \   \lambda^2 \sum_{i,j=1}^N 
				\sum_{ k  \in \Z^3    }  | \widehat V ( k   )|^2  e^{ i k (x_i - x_j) }   \sum_{  p \in L(k)  } \frac{1  }{p^2 - (p- k)^2  } \notag\\
				&  \ = \  \frac{1}{2N}
				\sum_{1 \leq i, j \leq N }W_{k_F}(x_ i -x_j) 
				\ = \  \frac{1}{N}
				\sum_{1 \leq i < j \leq N }W_{k_F}(x_ i -x_j)  
				+ \frac{1}{2}W_{k_F}(0).
			\end{align}
			
			\smallskip
			\item 
			The second and third terms are non-positive. 
			To see this,  we use the representation 
			\begin{align}\label{rep}
				(\mathbb T + E_p + E_q - E_{p ' }- E_{q '  })^{-1}  
				=
				\int_0^\infty 
				e^{-s( E_q -  E_{q  ' })} e^{-s \mathbb T}
				e^{-s( E_p -   E_{p  '  })} \, ds
			\end{align}
			for any $ p , p', q , q ' \in \Z^3$.
			Indeed, using the pull-through formula  we obtain 
			\begin{align}
				\notag
				A_2 & =   - 
				\sum_{i,j=1}^N 
				\sum_{ k ,\ell \in \Z^3    }  \overline{v_i(k)} v_j(\ell)   \sum_{p\in B^c} 
				c^*_{p - \ell} (\mathbb T + E_p  -  E_{p - \ell } - 
				E_{p - k })^{-1} c_{p - k}  \notag\\
				\notag
				& 	=		- 
				\sum_{p\in B^c} 
				\int_0^\infty ds 
				\bigg( \sum_{j=1}^N \sum_{  \ell \in  \Z^3  }  \overline{v_j(\ell)}
				e^{s E_{p - \ell } }  c_{p - \ell} \bigg)^* e^{-s (\mathbb T +E_p  )} 
				\bigg( \sum_{i=1}^N \sum_{k\in  \Z^3 } \overline{v_i(k)} e^{s E_{p - k }} c_{p - k}  \bigg) \\
				& \leq 0  \ . 
				\label{A2}
			\end{align}
			Note  the first line is well-defined, since 
			$\mathbb T - E_{p - k }$ 
			takes only non-negative values on states of the form $c_{p-k}\Psi$ with $\Psi \in   \K $, while $E_p - E_{p - \ell }$ is strictly positive.
			As for the third term, we use  the CAR and a change of variables
			$ p \mapsto  p   + k $ \allowdisplaybreaks
			\begin{align}
				\notag
				A_3  
				& =  - 
				\sum_{i,j=1}^N \sum_{ k ,\ell \in \Z^3    } \overline{v_i(k)} v_j(\ell)    \sum_{p - k\in B}  b_{p- k+ \ell}^* (\mathbb T + E_{p - k  + \ell }  +  E_p - E_{p - k} )^{-1}    b_p   \notag\\
				\notag
				&  =  - 
				\sum_{i,j=1}^N\sum_{ k ,\ell \in \Z^3    } \overline{v_i(k)} v_j(\ell)  
				\sum_{p\in B}  b_{p+ \ell}^* 
				(\mathbb T + E_{p +\ell }+  E_{p+k} - E_p )^{-1}    b_{p+ k}   \\
				&  = - 
				\sum_{p\in B} \int_0^\infty ds 
				\bigg( \sum_{j=1}^N \sum_{\ell\in  \Z^3  } e^{-s E_{p-\ell}}
				\overline{v_j(\ell)} b_{p + \ell} \bigg)^*
				e^{-s(\mathbb T - E_p ) }  \bigg( \sum_{i=1}^N \sum_{k\in  \Z^3 } e^{-s E_{p + k }} \overline{v_i(k)}b_{p+ k} \bigg) \notag\\
				& \leq 0  \ . 
				\label{A3}
			\end{align} 
			Similarly,    the first line is well-defined as 
			$\mathbb T+ E_{p+k }$  is non-zero
			on states  $b_{p+ k}\Psi$ with $\Psi\in  \K  $, while 
			$E_{p + \ell } - E_p $ is strictly positive.
			
			\smallskip
			
			\item  For the fourth term, we proceed as follows.
			We use the CAR 
			to obtain 
			\begin{align}
				A_4   
				& = \sum_{i,j=1}^N
				\sum_{ k ,\ell \in \Z^3    } 
				\overline{v_i(k)}  v_j(\ell)   \sum_{p,q  \in \Z^3 } b_q^* c_{q - \ell}^*
				(\mathbb T + E_p  +E_q -  E_{p -k }- E_{q- \ell } )^{-1}  
				c_{p- k} b_p  \notag\\
				& \le \sum_{i,j=1}^N  \sum_{ k ,\ell \in \Z^3    } \overline{v_i(k)}  v_j(\ell)   \sum_{p,q  \in \Z^3 }  
				b_q^* c_{q - \ell}^*  
				(   E_p  +E_q -  E_{p -k }- E_{q- \ell } )^{-1}  
				c_{p- k} b_p
			\end{align}
			where the last inequality  follows from    
			\eqref{rep}
			and  $e^{-s\mathbb T } \le 1$ on 
			$\K $. 
			For the next step, 
			let us note that 
			$$
			(   E_p  +E_q -  E_{p -k }- E_{q- \ell } )^{-1}  
			\leq 
			(   E_p   - E_{p -k }  )^{-1/2 }  
			(   E_q  - E_{q- \ell } )^{-1/2 }.
			$$
			Let us now evaluate $A_4$ in a 
			normalized state  $\Psi  \in 
			\K $.
			We use the Cauchy--Schwarz inequality, 
			Lemma \ref{lemma:HPR} for
			$\alpha=2$, $\ve >0$ 
			and Lemma \ref{lemmaS1} for $\S_{2,2}(k_F )$ to find 
			\begin{align}
				\notag 
				\lambda^2 \, 
				\langle \Psi ,    A_4 \Psi \rangle    &   
				\ \le 	\ 
				N^2 \bigg( \lambda \sum_{k\in \Z^3 } |\widehat V(k)| \sum_{p\in L(k) } 
				( E_p -E_{p -k})^{-1/2}
				\| c_{p - k} b_p \Psi \| \bigg)^2    \\
				& \   \leq  \ 
				N^2   
				\bigg(
				\ve   
				\< \Psi, \T \Psi \>
				+  C
				\ve^{-1 }N^{-1 }k_F^{ -1 /3 }
				( \ln k_F)^{5/3}		 		\|	V	\|_{H^4}^2  
				\bigg)^2 \ . 
			\end{align}
			Finally, we minimize with respect to $\ve>0$
			to find the estimate 
			\begin{align}
				\lambda^2 \, 
				\langle \Psi ,    A_4 \Psi \rangle  
				\le  
				C N 
				\|	V	\|_{H^4}^2  
				k_F^{ - 1/ 3}
				( \ln k_F)^{5/3}		 
				\< \Psi, \T \Psi \> \ . \label{A4}
			\end{align}
			
		\end{itemize}
		
		\medskip 
		
		\textit{Putting everything together}. 	We deduce from   \eqref{eq:1}, 
		the completion of the square \eqref{eq:2}, 
		and the estimates \eqref{A1}, \eqref{A2}, \eqref{A3} and \eqref{A4}
		that 
		there is a constant   $C >0$ such that 
		for all  $ N  \geq1  $ and $k_F \geq 1$
		\begin{equation}
			\label{eq:3}
			\H \ \geq \  h_{ k_F}^{\eff} 
			-   \tfrac12 W_{k_F}(0)-
			C_0
			\|	V	\|_{H^4}^2  
			\epsilon_0
			\T \ 
			\quad \text{with} \quad 
			\epsilon_0  =  
			N  ( \ln k_F)^{5/3}  k_F^{-1/3 }  \ . 
		\end{equation}
		Finally, it suffices to  bound the kinetic energy.
		To this end,  we  apply Lemma
		\ref{lemma:kinetic1} 
		with $\delta = 1 $ to see that 
		there is $C>0$  
		such that for all 
		$  N \geq 1  $
		and $k_F \geq C  N \| \widehat V\|_{\ell^1}^2 $ 
		\begin{align}
			\label{eq:4}
			\mathbb T   
			\  \leq  \ 2  \, 
			\mathbb H   
			\, +  \, 
			C  N 	\Q  \  . 
		\end{align}
		We combine the last two inequalities to obtain 
		\begin{equation}
			\( 1 + 	    	 C_0 	\|	V	\|_{H^4}^2    \epsilon_0  \) 
			\H 
			\ \geq  \ 
			h_{\kf}^\eff  	\	-	\	 \tfrac12 W_{\kf} (0)
			\	   -	\
			C \Q  N  \epsilon_0   
		\end{equation}
		for some distinguished constant $C_0$. 
		Assume   now $k_F \geq 1 $ is large enough 
		so that
		$$  0 <  1  -    C_0 		\|	V	\|_{H^4}^2    \epsilon_0  <  
		(1  +    C_0 		\|	V	\|_{H^4}^2   \epsilon_0)^{ -1  } <   1 \ .  $$ 
		Thus we get 
		from \eqref{eq:3}, \eqref{eq:4}  
		and bootstrapping 
		\begin{equation}
			\H 
			\ \geq \ 
			(1  -  C _0	\|	V	\|_{H^4}^2      \epsilon_0) 
			\(    h_{ k_F}^{\eff} 
			-  \tfrac12 W_{k_F}(0) 
			\) 
			- 
			C \Q   N \epsilon_0  
		\end{equation}
		which finishes the proof of the first statement. 
		
		\medskip 	
		
		(2)	 Recall that from \eqref{eq:W:bound} we
		have  
		$ \|  W_{\kf}\|_{L^\infty} \leq C \| V\|_{H^2}^2 \leq C \Q$.
		In addition, from 
		\eqref{eq:bound:ev:eff:h} 
		we have the bound    $\mu_n (h^\eff_{\kf}) \leq C  N \Q $
		for any $  1 \leq n \leq N $. 
		Thus, we can now employ the min-max principle to find   
		for all $1 \leq  n \leq  N$ 
		(and   all $\kf$ sufficiently large)
		\begin{align}
			\notag 
			\mu_n(\H) &
			\  \geq  \ 
			(1  -   C _0	 \Q     \epsilon_0 ) 
			\( 
			\mu_n( h_{ k_F}^{\eff} )	  	 	-  \tfrac12 W_{k_F}(0) 
			\) 
			-   	  C  \Q   N \epsilon_0  	\\
			& 	 \ 	\ge  \ 
			\mu_n( h_{ k_F}^{\eff})  	 	-  \tfrac12 W_{k_F}(0)      - C  \Q^2  N 
			\epsilon_0 \ . 
		\end{align}
		This completes the proof of the proposition.
	\end{proof}

	\section{Lattice sums}
	\label{section:sums}
	The main purpose
	of this section
	is to complete the proof of Lemma \ref{prop:sums}.
	In order to prove the statement, we shall 
	borrow some results from
	\cite{Christiansen2023}. We start by recalling the definition of the lune set
	\begin{equation}
		L(k) = \{  p \in \Z^3 :  |p | > k_F \ \t{and} \ | p - k | \leq  k_F 		 \} \ , \qquad  k \in \Z^3  , 
	\end{equation}
	as well as the relevant singular denominators
	\begin{equation}
		\lambda_{ p ,k } 
		\equiv \frac{1}{2} \big(	 |p|^2 - | p -k|^2				\big) \ , \qquad p , k \in \Z^3 \ . 
	\end{equation}
	Then, for any function $f : ( 0 , \infty) \rightarrow  ( 0  , \infty )  $
	the following \textit{summation formula} holds. 
	
	\begin{proposition}[\cite{Christiansen2023} Proposition B.20]
		For all $ k = (k_1 , k_2, k_3) \in \Z^3$
		with $| k |  < 2 k_F $
		let us denote: 
		\begin{itemize}
			\item $\ell\in (  0 , \infty )$ as the positive number 
			$\ell  =  |k|^{-1} \t{gcd} (k_1, k_2 , k_3)$. 
			\item $m_* \in \Z $ as the least integer such that: 
			$ \frac{|k|}{2} < \ell m_*$
			
			\item $M \in \Z $ as the greatest integer 
			such that:  $\ell M   \leq k_F$
			
			\item $M_* \in \Z $
			as the greatest integer such that: 
			$\ell M_* \leq k_F + |k|$
		\end{itemize}
		Then, as $k_F \rightarrow \infty$
		\begin{align}
			\label{sum:eq1}
			\sum_{ p \in L (k)	} f(	 \lambda_{ p , k}  	)
			&  \ 	=  \ 
			2 \pi |k|
			\sum_{	 m_* \leq m \leq M }
			f\bigg(	 |k| (\ell m - \frac{| k |}{2 })			\bigg)	
			\Big(	 \ell m - \frac{| k |}{2}			\Big) \ell 	\\ 
			\label{sum:eq2}
			&   \ \ +  \ 
			\pi \sum_{	 M+1 \leq  m\leq M_* 		}
			f\bigg(	 |k | (\ell m - \frac{| k |}{2 })				\bigg)
			\Big(			 k_F^2 - ( \ell m - k 	)^2				\Big)\ell  \\
			\label{sum:eq3}
			&  \ \ +  \ 
			O 	  \bigg(
			|k|^{  11/3 }  
			\ln(k_F)  k_F^\frac{2}{3}
			\sum_{ m_* \leq m \leq M_* 		}
			f  \Big(   | k |  (\ell  m - \frac{|k |}{2 })		 \Big) 	 
			\bigg)
		\end{align}

	\end{proposition}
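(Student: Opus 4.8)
The plan is to collapse the three-dimensional sum over the lune $L(k)$ to a one-dimensional sum over parallel lattice planes, exploiting the fact that $\lambda_{p,k}$ is constant on each such plane. First I would note that $\lambda_{p,k}=\tfrac12(|p|^2-|p-k|^2)=p\cdot k-\tfrac{|k|^2}{2}$ depends on $p$ only through the integer $p\cdot k$. Writing $d=\gcd(k_1,k_2,k_3)$ and $\ell=d/|k|$ (so $\ell|k|=d$), the set $\{p\cdot k:p\in\Z^3\}$ equals $d\Z$; parametrising $p\cdot k=dm=\ell|k|m$ with $m\in\Z$ and putting $\Pi_m\equiv\{p\in\Z^3:p\cdot k=dm\}$, each $\Pi_m$ lies in the affine plane at signed distance $\ell m$ from the origin along $\hat{k}=k/|k|$, and $\lambda_{p,k}=|k|(\ell m-\tfrac{|k|}{2})$ for every $p\in\Pi_m$. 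Consequently
\begin{equation*}
\sum_{p\in L(k)}f(\lambda_{p,k})=\sum_{m\in\Z}f\Big(|k|\big(\ell m-\tfrac{|k|}{2}\big)\Big)\,\#\big(L(k)\cap\Pi_m\big),
\end{equation*}
and the whole problem reduces to counting lattice points on planar slices.

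Next I would work out the geometry of $L(k)\cap\Pi_m$. Decomposing $p=(\ell m)\hat{k}+q$ with $q\perp k$ gives $|p|^2=(\ell m)^2+|q|^2$ and $|p-k|^2=(\ell m-|k|)^2+|q|^2$, so the defining conditions of the lune read $k_F^2-(\ell m)^2<|q|^2\le k_F^2-(\ell m-|k|)^2$. This set is nonempty only if $\ell m>\tfrac{|k|}{2}$ (forcing $m\ge m_*$) and $\ell m\le k_F+|k|$ (forcing $m\le M_*$), and it has two regimes: for $m_*\le m\le M$ (where $\ell m\le k_F$) it is a genuine planar annulus of area $\pi\big[(\ell m)^2-(\ell m-|k|)^2\big]=2\pi|k|(\ell m-\tfrac{|k|}{2})$, whereas for $M+1\le m\le M_*$ (where $\ell m>k_F$) the inner constraint is vacuous and it is a planar disk of area $\pi\big(k_F^2-(\ell m-|k|)^2\big)$. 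The lattice $\Pi_m\cap\Z^3$ is a coset of the fixed rank-two lattice $k^\perp\cap\Z^3$, whose covolume is $1/\ell$ (consecutive occupied planes are at distance $\ell$ and $\Z^3$ has covolume $1$), hence its point density per unit area is $\ell$. Replacing each slice count by $\ell$ times the corresponding area then reproduces precisely the main terms \eqref{sum:eq1} and \eqref{sum:eq2}.

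It remains to control the discrepancy between $\#(L(k)\cap\Pi_m)$ and $\ell$ times the area on each slice; this is exactly what produces \eqref{sum:eq3}, and it is the main obstacle. The trivial bound is the boundary length, of order $k_F$ per slice, which is far too weak. What is needed is a Gauss-circle-type estimate for the planar lattice $k^\perp\cap\Z^3$ with the dependence on $|k|$ (through the successive minima and skewness of $k^\perp\cap\Z^3$) and on $\ell$ tracked explicitly. I would obtain such a bound via Poisson summation together with van der Corput's exponential-sum method applied to the disk and the annulus, lowering the per-slice discrepancy to $O\big(|k|^{c}\,k_F^{2/3}\ln k_F\big)$ for a fixed power $c$, the logarithm stemming from the exponential-sum estimate. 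Summing over $m_*\le m\le M_*$ with each slice weighted by $f(|k|(\ell m-\tfrac{|k|}{2}))$ then yields the stated error \eqref{sum:eq3}. The delicate point is the uniform bookkeeping of the exponent of $|k|$ through this argument and checking that it is uniform over all $k$ with $|k|<2k_F$; alternatively, as indicated in the statement, this is Proposition B.20 of \cite{Christiansen2023} and may simply be invoked.
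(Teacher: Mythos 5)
This proposition is not proved in the paper at all: it is imported verbatim from \cite{Christiansen2023} (Proposition B.20), so there is no in-paper argument to compare against, and your closing option of simply invoking the reference is exactly what the authors do. That said, your sketch of how the result is actually obtained is sound where it is explicit. The reduction $\lambda_{p,k}=p\cdot k-\tfrac{|k|^2}{2}$, the slicing of $L(k)$ into the lattice planes $p\cdot k=\gcd(k_1,k_2,k_3)\,m$, the computation of the planar cross-sections (annulus of area $2\pi|k|(\ell m-\tfrac{|k|}{2})$ for $m_*\le m\le M$, disk of area $\pi(k_F^2-(\ell m-|k|)^2)$ for $M+1\le m\le M_*$), and the point density $\ell$ of each slice of $\Z^3$ all check out and reproduce the main terms \eqref{sum:eq1} and \eqref{sum:eq2} correctly, including the range restrictions $m_*\le m\le M_*$.

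The one genuine gap is the error term \eqref{sum:eq3}, which is where essentially all of the difficulty of the cited proposition lives. You correctly identify the right tools (Poisson summation on the rank-two lattice $k^\perp\cap\Z^3$ plus a van der Corput exponential-sum bound, which is what produces the exponent $k_F^{2/3}\ln k_F$), but you do not derive the per-slice discrepancy bound, and in particular the exponent $11/3$ on $|k|$ --- which encodes the skewness of the planar lattice through its successive minima, uniformly over $|k|<2k_F$ --- is asserted rather than obtained. Since the paper itself treats this as a black box, falling back on the citation is legitimate; but as a standalone proof the proposal establishes only the leading-order identity, not the quantitative remainder that the paper actually needs downstream (e.g.\ in the proof of \eqref{eq:bound:S2}).
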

	
	\begin{remark}
		\label{remark:numbers}
		The following observations follow   from the definition
		of the parameters $\ell$, $m_*$, $M$  and $M_*$
		and will become helpful in the upcoming calculations: 
		\begin{equation}
			\ell \leq 1   \ ,  \
			\qquad 
			\ell m_* \leq \frac{|k|}{2} + \ell 
			\qquad
			\ell m_* - \frac{ |k|}{2} \geq \frac{1}{2 |k|}  \ , \ 
			\qquad  
			k_F - \ell 
			\leq \ell M \ , \ 
			\qquad 
			k_F +  |k | - \ell \leq \ell M_* \  .
		\end{equation}
		All 
		are easy to obtain except the third one.
		It follows from  
		$2  \t{gcd} ( k_1, k_2 , k_3)   m_*  \geq |k|^2 + 1 $. 
	\end{remark}
	
	The next proof follows closely that of 
	\cite[Proposition B.21]{Christiansen2023}. 
	
	\begin{proof}[Proof of Lemma \ref{prop:sums}] (Inequality \eqref{eq:bound:S2})
		First, we observe that after a change of variables $ p \mapsto p - k$
		\begin{equation}
			\sum_{ p\in \Z^3 }
			\frac{		\chi^\perp (p) \chi (p+k)  		}{(E_p - E_{p+k})^{2 }		}
			= 
			C
			\sum_{ p \in L (k)  }  \lambda_{p,k}^{-2 }
		\end{equation}	
		where the Fermi momentum on the right
		is given by $k_F  '  =  2 \pi k_F$. 
		As in the final estimate the $2 \pi$ factors will play no role
		we shall ignore it and estimate the right hand side as
		for a given $k_F$. 
		We now use the Proposition with $f(\lambda) = \lambda^{-2}$. 
		
		The first term in \eqref{sum:eq1} can be estimated as follows 
		using 
		the integral estimate
		$ \sum_{ n=a}^b f(n) \leq  f(b)  + \int_{a}^b f(x) dx $
		for decreasing functions 
		\begin{align}
			\nonumber
			T_1 
			&  \ \equiv  \ 
			2\pi |k |
			\sum_{	 m_* \leq m \leq M }
			\frac{		 (\ell m - \frac{|k|}{2}  )	\ell 	}{  |k|^2 
				(\ell m - \frac{|k|}{2}  )^2	} \\
			\nonumber
			& \  =  \ 
			\frac{2\pi \ell }{| k  |}
			\sum_{	 m_* \leq m \leq M }
			\frac{1}{(\ell  m - \frac{|k|}{2}) } 
			\\ 
			\nonumber
			& \  \leq  \ 
			\frac{2\pi \ell }{| k  |} 
			\bigg(
			\frac{1}{ (  \ell M  - \frac{|k|}{2 } )  }
			+ 
			\int_{ m_*	}^M 
			\frac{\d m }{(\ell  m - \frac{|k|}{2})  } 
			\bigg)  \\
			& 
			\ 	=   \ 
			\frac{2\pi \ell}{| k |}
			\bigg(
			\frac{1}{ (  \ell M  - \frac{|k|}{2 } )  }
			+  
			\ln( \ell M  - \frac{ | k| }{2  })
			- 
			\ln( \ell m_*  - \frac{|k|}{2 } )	 
			\bigg)  \ . 
		\end{align}  
		Next, we use the various inequalities from Remark \ref{remark:numbers} and obtain 
		\begin{align}
			\nonumber 
			T_1  
			\ 		\leq  \ 
			\frac{2\pi \ell}{| k |}
			\bigg(
			\frac{1}{ ( k_F  - 1  - \frac{| k |}{2} ) 	}
			+  
			\ln(  k_F  - \frac{ | k| }{2  })
			- 
			\ln( \ell  )	 
			\bigg)   
			\ \leq  \ 
			\frac{C}{|k|}
			\bigg(
			k_F^{-1 }
			+  
			\ln k_F
			+ C_*  
			\bigg)   
		\end{align}  
		where in the last inequality we used $\ell \leq 1 $
		and $ - \ell \ln \ell \leq C_*$. 
		Thus $T_1 \leq \frac{C}{|k| } (1 + \ln k_F)$.

		\vspace{1mm}
		Arguing similarly, we now estimate the second error 
		term in \eqref{sum:eq2} as follows 
		\begin{align}
			\nonumber
			T_2 
			& \equiv 
			\pi |k |
			\sum_{	 M+1 \leq  m\leq M_* 		}
			\frac{			( 		
				k_F^2 - ( \ell m - k	)^2 			) \ell 	}{  |k|^2 
				(\ell m - \frac{|k|}{2}  )^2	} \\
			\nonumber
			& \leq 
			\pi | k |
			\sum_{	 M+1 \leq  m\leq M_* 		}
			\frac{		
				2 |k |(\ell m - \frac{| k |}{2})
			}{  |k|^2 
				(\ell m - \frac{|k|}{2}  )^2	}  
			\\ 
			\nonumber
			&  =  
			2 	\pi  
			\sum_{	 M+1 \leq  m\leq M_* 		}
			\frac{		
				1 
			}{   
				(\ell m - \frac{|k|}{2}  ) 	}  
			\\ 
			\nonumber
			& \leq 
			2\pi
			\bigg(		
			\frac{1 }{ \ell M_* -   |k|/2 }
			+ \ln\Big(	 \ell M_* - | k  |/2 	\Big)
			- 
			\ln\Big(	 \ell ( M +1 ) -  |k|/2 	\Big)
			\bigg) \\ 
			& \leq 
			C  \Big(  
			k_F^{- 1 } + \ln k_F 
			\Big) \ . 
		\end{align}
		Note that in the first line we used
		$ k_F^2 \leq (\ell m )^2	$ for $m \geq M +1$. 
		Thus  $T_2 \leq C  (1 + \ln k_F)$
		
		\vspace{1mm}
		
		Let us now turn to the third error term in \eqref{sum:eq3}.  To this end,  we estimate the sum 
		using the same methods as before. We obtain 
		thanks to the bounds in Remark \ref{remark:numbers}
		\begin{align}
			\nonumber
			T_3 &  \equiv 		\sum_{	 m_* \leq m \leq M 	}
			\frac{1}{(|k|  (\ell m - \frac{|k|}{2}))^2 } \\
			\nonumber
			& 
			\leq 
			\frac{1}{|k|^2 }
			\bigg(
			\frac{1}{ 		(\ell M_* - \frac{|k|}{2})^2}
			+ 
			\frac{1}{\ell} 
			\Big(
			\frac{1}{\ell m_*  -		 \frac{|k|}{2}}
			-
			\frac{1}{\ell M   -		 \frac{|k|}{2}}
			\Big)
			\bigg) \\
			& 
			\leq 
			\frac{C}{|k|^2}
			\bigg(
			k_F^{-2} 
			+ 
			\frac{ | k |}{ \ell}
			\bigg) 
			\leq 
			\frac{C}{|k|^2}
			\bigg(
			k_F^{-2} 
			+ 
			|k|^2 
			\bigg) 
		\end{align}
		where in the last line we used $\ell |k| = \gcd (k_1, k_2,k_3)\geq 1 $. 
		Thus, $T_3 \leq C $. 
		It suffices now to combine the estimates together.
	\end{proof}

	\appendix 
	
	\section{The bosonic quadratic form}
	Let $h_0  \equiv \sum_{ i =1 }^N (-\Delta_{x_i} )$ 
	be the free operator with form domain
	$ Q( h_0 )=   H^1 (\To^{3N }) \cap \mathscr H_B$, 
	and let
	$h  = h_0   + N^{ -1 } \sum_{ i< j }  w(x_i - x_j)$
	be the bosonic Hamiltonian, 
	defined initially on $C^\infty(\To^3) \cap \mathscr H_B$.

	\begin{lemma}
		\label{lemmaA1}
		Assume   $w   \in L^p  (\To^3)$ for some $3/2 < p \leq \infty$. 
		Then, 
		there are  $a, c ,q >0$ (depending only on $p$) 
		such that for all $\ve>0 $ and
		$\Psi \in   C^\infty(\To^3) \cap \mathscr H_B  $ 
		\begin{equation}
			\label{eq:h}
			(1   -  \ve ) 
			\< \Psi, h_0 \Psi \> 
			-   \frac{  c  N  }{\ve^a}  \| w\|^q_{L^p}  
			\|	 \Psi	\|^2
			\leq 
			\< \Psi, h \Psi  \>
			\leq (1 + \ve ) 
			\< \Psi, h_0 \Psi \> 
			+  \frac{ c N   }{\ve^a}    \| w\|^q_{L^p}
			\|	 \Psi	\|^2 \ . 
		\end{equation}
		In particular, $h$ has a unique self-adjoint extension on $\mathscr H_B$  with form domain 
		$  Q(h_0 )$ and \eqref{eq:h} 
		holds for all $\Psi \in Q(h_0)$. 
	\end{lemma}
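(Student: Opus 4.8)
The plan is to deduce \eqref{eq:h} from a single estimate, namely that the pair interaction is form-bounded by $\ho$ with arbitrarily small relative bound. Precisely, it suffices to show that for every $\ve>0$ there exist $a,c,q>0$, depending only on $p$, such that in the sense of quadratic forms on smooth symmetric functions
\begin{equation}
\frac1N \sum_{1\le i<j\le N}|w(x_i-x_j)| \ \le \ \ve\, \ho \ + \ \frac{cN}{\ve^{a}}\,\|w\|_{L^p}^{q}\ .
\end{equation}
Indeed, combining this with $\langle\Psi, h\Psi\rangle = \langle\Psi,\ho\Psi\rangle + \tfrac1N\sum_{i<j}\langle\Psi, w(x_i-x_j)\Psi\rangle$ and the pointwise bounds $-|w|\le w\le|w|$ yields both inequalities in \eqref{eq:h} at once, with $c$ relabelled.

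To establish the displayed bound I would first use bosonic permutation symmetry: for $\Psi\in\mathscr H_B$ the quantity $\langle\Psi,|w(x_i-x_j)|\Psi\rangle$ is independent of the pair $(i,j)$, and $\langle\Psi,\ho\Psi\rangle = N\langle\Psi,-\Delta_{x_1}\Psi\rangle$. Hence it is enough to prove the two-body estimate
\begin{equation}
\langle\Psi,|w(x_1-x_2)|\Psi\rangle \ \le\ \delta\,\langle\Psi,-\Delta_{x_1}\Psi\rangle \ +\ C\,\delta^{-a}\|w\|_{L^p}^{q}\,\|\Psi\|^2\ ,
\end{equation}
and then take $\delta=2\ve$, since $\tfrac{N-1}{2}\cdot\tfrac{\delta}{N}\le\ve$. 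This two-body estimate is obtained from the corresponding one-body estimate on $\To^3$ applied fiberwise in $x_1$: for each fixed $x_2$, the function $x_1\mapsto w(x_1-x_2)$ is a torus translate of $w$, with the same $L^p$ norm, so the one-body bound applies, and one then integrates over the remaining variables.

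The one-body estimate is the classical fact that an $L^p(\To^3)$ potential with $p>3/2$ is infinitesimally form-bounded with respect to $-\Delta$. I would prove it by Hölder's inequality, $\int_{\To^3}|w|\,|\psi|^2\le\|w\|_{L^p}\,\|\psi\|_{L^{2p'}}^2$ with $\tfrac1p+\tfrac1{p'}=1$ (so $2p'<6$), followed by the Gagliardo--Nirenberg/Sobolev inequality on the torus, $\|\psi\|_{L^{2p'}}^2\le C\big(\|\nabla\psi\|_{L^2}^{2\theta}\|\psi\|_{L^2}^{2(1-\theta)}+\|\psi\|_{L^2}^2\big)$ with $\theta=\tfrac{3}{2p}\in(0,1)$, and finally Young's inequality applied to the product $\|w\|_{L^p}\|\nabla\psi\|_{L^2}^{2\theta}\|\psi\|_{L^2}^{2(1-\theta)}$, which absorbs the gradient into $\delta\|\nabla\psi\|_{L^2}^2$ at the cost of a term of size $C\delta^{-a}\|w\|_{L^p}^{q}\|\psi\|_{L^2}^2$ with $a=\tfrac{\theta}{1-\theta}=\tfrac{3}{2p-3}$ and $q=\tfrac1{1-\theta}=\tfrac{2p}{2p-3}$ (together with the lower-order contribution coming from the additive term), the case $p=\infty$ being immediate with $\theta=0$. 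With \eqref{eq:h} in hand for, say, $\ve=\tfrac12$, the operator $h$ is a form-bounded perturbation of $\ho$ with relative bound strictly less than one; the KLMN theorem then provides a unique self-adjoint realization with form domain $Q(\ho)=H^1(\To^{3N})\cap\mathscr H_B$, and since $C^\infty(\To^{3N})\cap\mathscr H_B$ is a form core and all terms in \eqref{eq:h} are continuous in the $\ho$-form norm, the inequality extends to all $\Psi\in Q(\ho)$.

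The argument is routine; the only delicate point is the bookkeeping in the Gagliardo--Nirenberg and Young steps, where one must track the dependence of the constants on both $\delta$ and $\|w\|_{L^p}$ so as to land on the stated powers $a=\tfrac{3}{2p-3}$, $q=\tfrac{2p}{2p-3}$ --- the exponent $q$ being precisely the one that recurs elsewhere in the paper (e.g.\ in $\Q_p$ and in Lemma~\ref{lemma:kinetic1}). I anticipate no genuine obstacle beyond this.
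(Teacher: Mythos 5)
Your proposal is correct and follows essentially the same route as the paper: permutation symmetry reduces the claim to a two-body estimate, which is proved fiberwise via H\"older, interpolation between $L^2$ and $L^6$ (Sobolev), and Young's inequality, yielding the same exponents $a=\tfrac{3}{2p-3}$, $q=\tfrac{2p}{2p-3}$, with KLMN giving the self-adjoint realization. Your use of the Gagliardo--Nirenberg form with an additive $\|\psi\|_{L^2}^2$ term is in fact slightly more careful than the paper's bare $\|\vp_x\|_{L^6}\le C\|\nabla \vp_x\|_{L^2}$, which on the torus requires the extra lower-order term, but this is cosmetic.
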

	\begin{remark}
		From the proof, one may compute $q  = \frac{2 p }{ 2p-3}$. 
	\end{remark}

	\begin{proof}
		Consider first $\vp \in H^1 (\To^3_x \times \To^3_y)$.
		For  $     \tfrac32 <  p \leq \infty $
		let $p ' = \frac{p}{p-1}$ be the dual exponent, with
		$ 1  \leq   p ' < 3  $.
		Thanks to H\"older's inequality, $L^p$ interpolation and  the 
		Sobolev embedding  $H^1 (\To^3) \hookrightarrow L^6 (\To^3)$,   we  have  
		for $\vp_x(y) = \vp(x, y)$
		\begin{align*}
			\iint   w (x -y )  | \vp(x, y)|^2     dy dx 
			& 	     \leq \int   \| w\|_{L^p}	\|	 \vp_x	\|_{L^{2 p '}}^2  dx  \\
			& 	     \leq  C	 \| w\|_{L^p}	  \int 
			\|  \vp_x \|^{2 \theta_1}_{L^2}	 \|   \vp_x \|_{L^6}^{2 \theta_2}  dx 
			\leq  C	 \| w\|_{L^p}	  \int    	 \|  \vp_x \|^{2 \theta_1}_{L^2}	 
			\|  \nabla  \vp_x \|_{L^2}^{2\theta_2}   dx \ .
		\end{align*}
		Here $ \theta_1 =  \frac{2}{2p'} \frac{ 6 - 2p' }{6-2 } = \frac{2p-3}{2 p}$
		and
		$\theta_2  = \frac{6 }{2 p ' }  \frac{ 2p' - 2 }{6-2} = \frac{3}{2p} $
		are found by interpolation. 
		Thus,  we may apply Young's inequality to find 
		there is  $C = C(p)>0$
		such that for all $\ve>0$
		\begin{equation}
			\label{eq:energy}
			\int   \int   w (x -y )  | \vp(x, y)|^2     dy dx 
			\leq 
			\ve 	   
			\|    \nabla \vp	\|_{L^2}^2 
			+ 
			C 
			\ve^{- \theta_2 / \theta_1}  \| w\|_{L^p}^{ 1 / \theta_1 }
			\| \vp \|_{L^2}^2 
		\end{equation} 
		Next, we turn to the proof of the lemma. Thanks to permutational symmetry, 	 it suffices to estimate the quadratic form
		\begin{equation}
			\<  \Psi,   	 \frac{1}{N}  \sum_{ i<j }w (x_i - x_j) \Psi	\>
			= 
			\frac{N-1}{2}
			\int d x_3 \cdots d x_N  
			\int d x_1 d x_2 w (x_1 - x_2) |	 \psi  (x_1, x_2)	|^2 dx_1 dx_2
		\end{equation}
		with $\psi (x_1, x_2) \equiv \Psi(x_1, x_2,  x_3, \cdots, x_N)$. 
		We may now use \eqref{eq:energy} for $\vp = \psi$ and then integrate over the remaining variables. 
		Thus, we conclude the potential energy is relative form bounded with respect to $h_0$, 
		with constant $\ve <1$. Thus, thanks to the KLMN theorem the operator $h$  
		has a unique self-adjoint extension, 
		and  the desired inequality follows by a standard approximation argument. 
	\end{proof}

	\section{Proof of Corollary \ref{coro}}
	\label{appendix:projections}

	Before we turn to the proof of Corollary \ref{coro}   we argue that $h^\eff $ has  a non-degenerate ground state.
	Indeed, consider on  the whole Hilbert space $L^2 ( \To^{3  N })$ the 
	Schr\"odinger operator 
	$$ \mathcal H  =\textstyle 
	\sum_{i =1 }^N ( - \Delta_{x_i}) + N^{-1}  \sum_{ i< j} W^{\eff} (x_i - x_j ) \ .
	$$
	One may    adapt the proof of  \cite[Theorem 11.8]{Lieb:book} to the torus case
	$\To^{3N}$
	and  show that 
	$\mathcal H $ has  a ground state  $\phi^\eff \in L^2 (\To^{3N})$ which is strictly positive and unique up to a constant phase (for this,  we adapt the  convexity inequality for gradients in \cite[Theorem 7.8]{Lieb:book} to the torus). 
	By standard arguments, the ground state is also permutationally symmetric, i.e. $\phi^{\eff} \in \mathscr H_B$.  
	As a consequence, we conclude  that for each $ N \geq 1 $ we have the spectral gap 
	\begin{equation}\label{eq:spectral:gap}
		\mu_2 ( h^\eff) - \mu_1  (h^\eff) > 0  \ . 
	\end{equation}

	\begin{proof}[Proof of Corollary \ref{coro}] 
		Fix $N\geq1$ throughout the proof. We prove the statement in the particle-hole picture. First, recall $\Omega = \mathscr R^* \Omega_F$ and consider the trial state $\Psi = (1-\lambda \mathbb R\mathbb V^+) \Psi_0$, $\Psi_0 = \phi^{\rm eff} \otimes \Omega$. Note that this state is analogous to the trial state introduced in \eqref{psi} (for $n=1$) with $\Phi \equiv \phi^{\eff}$. Second, following the exact same steps leading to \eqref{eq4.8}, and using \eqref{supremum:difference} and \eqref{eq1} to relate $h_{k_F}^\eff$ and $h^\eff$  one   verifies that
		\begin{align}
			\label{trial:state:bound:appendix}
			\<
			\Psi, \H \Psi
			\> 
			\ 		\leq \  \mu_1 (	\widetilde h 	)  \ + 
			o (1) \qquad \kf \rightarrow \infty 
		\end{align}
		where  $ \widetilde h  \equiv h^{\rm eff} - \tfrac12 V\ast V(0)$ for notational convenience. Third, we introduce the orthogonal projections $P = |\mathscr R^*\Phi \rangle \langle\mathscr R^* \Phi |$ and $Q = 1 -P$, and let $ \triangle \equiv \mu_2 ( h^\eff) - \mu_1  (h^\eff)$.
		We estimate
		\begin{align*}
			& \langle \Psi  , (\mathbb H-\mu_1( \widetilde h ) )  \Psi \rangle
			\  \ge  \ 
			\| P \Psi \|^2 \big(\mu_1(\mathbb H)  \ - \   \mu_1( \widetilde h  ) \big) 
			\ + \ 
			\| Q \Psi\|^2 \big(\mu_2(\mathbb H)  \ - \  \mu_1(  \widetilde h  )\big)
		\end{align*}
		where we used $\mathbb H P = \mu_1(\mathbb H)P$, $Q\mathbb HP=0=QP$, and $Q\mathbb HQ \ge \mu_2(\mathbb H)Q$. 
		Applying Propositions \ref{prop2} and \ref{prop1} for $n=1$ and $n=2$, respectively, together with \eqref{eq1}, we find
		\begin{align*}
			\langle \Psi , (\mathbb H-\mu_1 (    \widetilde h   ) )  \Psi \rangle 
			\ \ge  \ 
			\|Q\Psi \|^2 \big( \mu_2( \widetilde h   )  \ - \  \mu_1(  \widetilde h  ) \big)   \ - \   
			o (1) \qquad \kf \rightarrow \infty  \ . 
		\end{align*}
		Combining this with \eqref{trial:state:bound:appendix}  we obtain 
		\begin{align*}
			\triangle \|Q\Psi \|^2  &
			\,  \le \,
			\langle \Psi , (\mathbb H-\mu_1(
			\widetilde h   ) ) \Psi \rangle 
			+ o(1)   \, =  \,  o(1)  \, \
			\qquad 
			\,  \kf \rightarrow \infty  \ . 
		\end{align*}
		Using Lemma \ref{lemma:partial1} and the definition of $\Psi$, we further estimate
		\begin{align*}
			\| Q (\phi^{\rm eff}  \otimes \Omega) \| 
			\ \le \ 
			\| Q\Psi \| + \| \lambda \mathbb R \mathbb V^+ (\phi^{\rm eff} \otimes \Omega) \| 
			\ \le  \  \| Q\Psi \|  \  + \ 
			o (1)  \ , \qquad \kf \rightarrow \infty 
		\end{align*}
		and thus
		\begin{align*}\label{eq:Q:bound}
			\| Q(\phi^{\rm eff}  \otimes \Omega) \|^2 
			=
			o (1) \ , \  \qquad \kf \rightarrow \infty  \ . 
		\end{align*}
		This implies $| \langle \mathscr R^* \Phi , \phi^{\rm eff} \otimes \Omega \rangle | = | \langle  \Phi , \phi^{\rm eff} \otimes \Omega_{\rm F} \rangle | \to 1$. 
		This gives 
		\begin{align*}
			\|	  \mathscr R^* \Phi -  \alpha  \phi^\eff \otimes \Omega		\|^2 
			=   2 (	 1     - | \langle  \Phi , \phi^{\rm eff} \otimes \Omega_{\rm F}  
			\rangle| ) 
			\rightarrow 0 \qquad \kf\rightarrow \infty 
		\end{align*}
		for $ \alpha = | \langle  \Phi , \phi^{\rm eff} \otimes \Omega_{\rm F} \rangle |
		\langle  \Phi , \phi^{\rm eff} \otimes \Omega_{\rm F} \rangle^{-1}$. Choosing the relative phase so that $ \langle  \Phi , \phi^{\rm eff} \otimes \Omega_{\rm F} \rangle \ge 0$, we obtain $\alpha =1 $, which completes the proof of the corollary.
	\end{proof}

	\medskip
	
	\noindent \textbf{Acknowledgements.} The authors would like to thank Robert Seiringer for valuable comments. 
	E.C. gratefully acknowledges support from NSF under grant DMS-2052789 through Nata\v sa Pavlovi\'c. 
	N.P. gratefully acknowledges support from the NSF under grants No. DMS-1840314 and DMS-2052789

	\vspace{3mm}
	
	\noindent \textbf{Data availability.} This manuscript has no associated data.
	\vspace{3mm}
	
	\noindent 
	\textbf{Conflict of interest.} 
	The authors state  that there is no conflict of interest.

	{\small  
		
	} 
\end{document}